\newif\ifarxiv
\setlist{nosep}
\newcommand{\citet}{\cite}
\newcommand{\citep}{\cite}
\setlist{nosep}
\newcommand{\nc}{\newcommand}
\nc{\DMO}{\DeclareMathOperator}
\nc\todo[1]{\textcolor{red}{[TODO: #1]}}
\nc{\st}{\star}
\nc\m[2]{m_{#1}(#2)}
\nc{\ReduceTree}{\texttt{ReduceTree}\xspace}
\nc{\PolyPriLearn}{\texttt{PolyPriLearn}\xspace}
\nc{\PPPLearn}{\texttt{PolyPriPropLearn}\xspace}
\nc{\GenericLearner}{\texttt{GenericLearner}\xspace}
\nc{\BR}{\mathbb{R}}
\nc{\BM}{\mathbb{M}}
\nc{\BT}{\mathbb{T}}
\nc{\BN}{\mathbb{N}}
\nc{\BZ}{\mathbb{Z}}
\nc{\Z}{\mathbb{Z}}
\nc{\ep}{\varepsilon}
\DMO{\height}{ht}
\renewcommand{\epsilon}{\varepsilon}
\nc{\ra}{\rightarrow}
\DMO{\Err}{err}
\DMO{\Opt}{opt}
\DMO{\Est}{Est}
\DMO{\good}{good}
\DMO{\negpt}{neg-pt}
\DMO{\VV}{{V}}
\DMO{\LL}{{L}}
\DMO{\finsupp}{fin}
\DMO{\supp}{supp}
\nc{\fin}{{\finsupp}}
\nc{\err}[2]{\Err_{#1}(#2)}
\nc{\rca}{\mathscr{B}}
\nc{\bt}{b}
\nc{\hMLp}{\hat\ML'}
\nc{\Rprot}{R}
\nc{\Sprot}{S}
\nc{\Aprot}{A}
\nc{\Pprot}{P}
\nc{\Pdist}{D}
\nc{\Qdist}{F}
\nc{\pdist}{d}
\nc{\qdist}{f}
\nc{\DP}{differentially private\xspace}
\nc{\SD}{\mathscr{D}}
\nc{\la}{\lambda}
\DMO{\KL}{KL}
\DMO{\Unif}{Unif}
\nc{\nn}{\varnothing}
\DMO{\SOA}{SOA}
\nc{\soa}[2]{\SOA_{#1}(#2)}
\nc{\soaf}[1]{\SOA_{#1}}
\nc{\gRes}[2]{\hat\MG({#1},{#2})}
\nc{\emp}{\hat P_{S_n}}
\DMO{\Red}{red}
\DMO{\Irred}{irred}
\nc{\Ired}{I^{\Red}}
\nc{\Iirred}{I^{\Irred}}
\DMO{\ssmp}{ssmp}
\DMO{\agg}{agg}
\DMO{\final}{final}
\DMO{\children}{children}
\nc{\pp}{p}
\nc{\PP}{P}
\nc{\QQ}{Q}
\nc{\DD}{D}
\DMO{\RAPPOR}{{RAPPOR}}
\nc{\RAP}{\RAPPOR}
\DMO{\RR}{RR}
\nc{\MD}{\mathcal{D}}
\nc{\ML}{\mathcal{L}}
\nc{\di}{P}
\nc{\MO}{\mathcal{O}}
\nc{\MM}{\mathcal{M}}
\nc{\MZ}{\mathcal{Z}}
\nc{\MU}{\mathcal{U}}
\nc{\MP}{\mathcal{P}}
\nc{\poly}{\mathrm{poly}}
\DMO{\treesum}{TreeSum}
\DMO{\lapsum}{LapSum}
\DMO{\checksum}{CheckSum}
\nc{\MDts}{\MD_{\treesum}}
\nc{\MDls}{\MD_{\lapsum}}
\nc{\MDcs}{\MD_{\checksum}}
\nc{\MC}{\mathcal{C}}
\nc{\MT}{\mathcal{T}}
\nc{\MS}{\mathcal{S}}
\nc{\MX}{\mathcal{X}}
\nc{\MY}{\mathcal{Y}}
\nc{\MA}{\mathcal{A}}
\nc{\MB}{\mathcal{B}}
\nc{\MJ}{\mathcal{J}}
\nc{\MF}{\mathcal{F}}
\nc{\MG}{\mathcal{G}}
\nc{\MQ}{\mathcal{Q}}
\nc{\p}{\Pr}
\nc{\E}{\mathbb{E}}
\nc{\tablesize}{s}
\DMO{\Hist}{hist}
\DMO{\Reg}{Reg}
\DMO{\prdim}{PRDim}
\nc{\eps}{\epsilon}
\nc{\hist}{\mathrm{hist}}
\nc{\ba}{\mathbf{a}}
\nc{\bx}{\mathbf{x}}
\nc{\bs}{\mathbf{s}}
\nc{\bv}{\mathbf{v}}
\nc{\bu}{\mathbf{u}}
\nc{\bw}{\mathbf{w}}
\nc{\by}{\mathbf{y}}
\nc{\bz}{\mathbf{z}}
\nc{\bp}{\mathbf{p}}
\nc{\bq}{\mathbf{q}}
\nc{\bn}{\mathbf{n}}
\nc{\bX}{\mathbf{X}}
\nc{\ind}{\mathbf{1}}
\nc{\hba}{\hat{\ba}}
\nc{\hbx}{\hat{\bx}}
\nc{\tbs}{\tilde{\bs}}
\nc{\hbs}{\hat{\bs}}
\nc{\hby}{\hat{\by}}
\nc{\bP}{\mathbf{P}}
\DMO{\sr}{sr}
\DMO{\Med}{Med}
\DMO{\Ber}{Ber}
\DMO{\Bin}{Bin}
\DMO{\Had}{Had}
\nc{\ME}{\mathcal{E}}
\DMO{\View}{View}
\nc{\B}{B}
\nc{\M}{M}
\nc{\ha}{\kappa}
\nc{\hk}{k}
\DMO{\pre}{pre}
\nc{\MH}{\mathcal{H}}
\DMO{\Ldim}{Ldim}
\DMO{\SQdim}{SQdim}
\DMO{\Tdim}{Tdim}
\DMO{\sfat}{sfat}
\DMO{\fat}{fat}
\DMO{\vc}{VCdim}
\DMO{\FO}{FO}
\DMO{\CM}{CM}
\nc{\MW}{\mathcal{W}}
\nc{\MV}{\mathcal{V}}
\nc{\MK}{\mathcal{K}}
\nc{\MN}{\mathcal{N}}
\nc{\BB}{\{0,1\}}
\nc{\bW}{\mathbf{W}}
\nc{\eell}{\ell}
\nc{\EELL}{L}
\nc{\q}{q}
\DMO{\size}{size}
\DMO{\emd}{EMD}
\DMO{\remd}{REMD}
\DMO{\tv}{TV}
\nc{\ts}{\tilde{s}}
\nc{\tc}{\tilde{c}}
\nc{\tH}{\tilde{H}}
\nc{\tG}{\tilde{G}}
\nc{\tgamma}{\tilde{\gamma}}
\nc{\bone}{\mathbf{1}}
\nc{\N}{\mathbb{N}}
\DMO{\scr}{scr}
\nc{\cP}{\mathcal{P}}
\nc{\cH}{\mathcal{H}}
\nc{\cA}{\mathcal{A}}
\nc{\cM}{\mathcal{M}}
\nc{\cT}{\mathcal{T}}
\nc{\bA}{\mathbf{A}}
\nc{\bB}{\mathbf{B}}
\nc{\bC}{\mathbf{C}}
\nc{\br}{\mathbf{r}}
\nc{\bh}{\mathbf{h}}
\nc{\hbh}{\hat{\bh}}
\nc{\R}{\mathbb{R}}
\nc{\A}{\mathbb{A}}
\nc{\bzero}{\mathbf{0}}
\DMO{\Lap}{Lap}
\DMO{\Polya}{Polya}
\nc{\bnu}{\mathbf{\nu}}
\nc{\cI}{\mathcal{I}}
\nc{\cJ}{\mathcal{J}}
\DeclareMathOperator*{\argmin}{arg\,min}
\DeclareMathOperator{\cost}{cost}
\DeclareMathOperator{\TV}{TV}
\DeclareMathOperator{\SIM}{SIM}
\DeclareMathOperator{\vol}{Vol}
\newcommand{\unsim}{\mathord{\sim}}
\newcommand{\salicon}{\textsc{salicon}\xspace}
\newtheorem*{rep@theorem}{\rep@title}
\newcommand{\newreptheorem}[2]{%
\newenvironment{rep#1}[1]{%
 \def\rep@title{#2~\ref{##1}}%
 \begin{rep@theorem}}%
 {\end{rep@theorem}}}
\newtheorem{theorem}{Theorem}[section]
\newtheorem{corollary}[theorem]{Corollary}
\newtheorem{lemma}[theorem]{Lemma}
\newtheorem{informal theorem}[theorem]{Informal Theorem}
\theoremstyle{definition}
\newtheorem{defn}{Definition}[section]
\newcommand{\badih}[1]{\ifnum\Comments=1\textcolor{red}{[Badih: #1]}\fi}
\newcommand{\pasin}[1]{\ifnum\Comments=1\textcolor{red}{[Pasin: #1]}\fi}
\newcommand{\ravi}[1]{\ifnum\Comments=1\textcolor{cyan}{[Ravi: #1]}\fi}
\title{Differentially Private Heatmaps}
\author{
 Badih Ghazi \hspace*{0.5cm}
 Junfeng He \hspace*{0.5cm}
 Kai Kohlhoff \hspace*{0.5cm}
 Ravi Kumar
 \\
 Pasin Manurangsi \hspace*{0.5cm}
 Vidhya Navalpakkam \hspace*{0.5cm}
 Nachiappan Valliappan
 \vspace{0.2cm}
 \\
 Google Research \\
 \texttt{\{badihghazi, ravi.k53\}@gmail.com} 
 \\
 \texttt{\{junfenghe,kohlhoff,pasin,vidhyan,nac\}@google.com}
}
\date{}
\author{
  Badih Ghazi, Junfeng He, Kai Kohlhoff, Ravi Kumar\\Pasin Manurangsi, Vidhya Navalpakkam, Nachiappan Valliappan
}
\begin{document}

\maketitle

\begin{abstract}
We consider the task of producing heatmaps from users' aggregated data while protecting their privacy. We give a differentially private (DP) algorithm for this task and demonstrate its advantages over previous algorithms on real-world datasets.

Our core algorithmic primitive is a DP procedure that takes in a set of distributions and produces an output that is close in Earth Mover's Distance to the average of the inputs. We prove theoretical bounds on the error of our algorithm under a certain sparsity assumption and that these are near-optimal.  

\end{abstract}

\section{Introduction}
\label{sec:intro}

Recently, differential privacy (DP) \cite{dwork2006calibrating,dwork2006our} has emerged as a strong notion of user privacy for data aggregation and machine learning, with practical deployments including the 2022 US Census \cite{abowd2018us}, in industry \cite{erlingsson2014rappor,CNET2014Google, greenberg2016apple,dp2017learning, ding2017collecting} and in popular machine learning libraries \cite{tf-privacy, pytorch-privacy}.  Over the last few years, DP algorithms have been developed for several analytic tasks involving aggregation of user data.  

One of the basic data aggregation tools is a heatmap.  Heatmaps are popular for visualizing aggregated data in two or higher dimensions. They are widely used in many fields including computer vision and image processing, spatial data analysis, bioinformatics, etc.  Many of these applications involve protecting the privacy of user data.  For example, heatmaps for gaze or gene microdata~\cite{liu2019differential, steil2019privacy} would be based on data from individuals that would be considered private.  Similarly, a heatmap of popular locations in a geographic area will be based on user location check-ins, which are sensitive. 
Motivated by such applications, in this paper, we present an efficient DP algorithm for computing heatmaps with provable guarantees, and evaluate it empirically.


At the core of our algorithm is a primitive solving the following basic task: how to privately aggregate sparse input vectors with a small error as measured by the Earth Mover's Distance (EMD)? While closely related to heatmaps, the EMD measure is of independent interest: it was originally proposed for computer vision tasks~\cite{rubner2000earth} since it matches perceptual similarity better than other measures such as $\ell_1$, $\ell_2$, or KL-divergence~\cite{Stricker, levina2001earth, wang2012supervised}.  It is also well-suited for spatial data analysis since it takes the underlying metric space into account and considers ``neighboring'' bins.   EMD is used in spatial analysis~\cite{spatialEMD}, human mobility~\cite{MobilityEMD}, image retrieval~\cite{RubnerEtAl98, PuzichaEtAl99}, face recognition~\cite{XuEtAl08}, visual tracking~\cite{ZhaoEtAl04}, shape matching~\cite{GraumanDarrell04}, etc. For the task of sparse aggregation under EMD, we give an efficient algorithm with asymptotically tight error. We next describe our results in more detail.

\subsection{Our Results}

We consider the setting where each user $i$ holds a probability distribution $\bp_i$ over points in $[0, 1)^2$, and the goal is to compute the heatmap of the average of these probabilities, i.e., $\frac{1}{n} \sum_{i=1}^n \bp_i$. We give an $\eps$-DP algorithm for this task, establish its theoretical guarantees, and provide empirical evaluations of its performance.  (For definitions, see \Cref{sec:notation}.) 

\paragraph{Sparse Aggregation under EMD.}
At the heart of our approach is the study of aggregation under EMD\footnote{For a formal definition of EMD, please see \Cref{sec:prelim-emd}.}, where we would like to output the estimate of $\frac{1}{n} \sum_{i=1}^n \bp_i$ with the error measured in EMD. There are two main reasons why we consider EMD for the error measure. First, a bound on the EMD to the average distribution implies bounds on several metrics commonly used in evaluating heatmaps, including the KL-divergence, 
$\ell_1$ distance, and EMD itself. 
\ifarxiv
(We provide more details on this in Appendix~\ref{app:rel}.)
\fi
Second, while it is possible to obtain DP aggregation algorithms with bounded EMD error, as we will discuss below, any DP aggregation algorithm must suffer errors under other metrics, including KL-divergence or $\ell_1$ distance, that grow with the resolution\footnote{Specifically, it follows from previous work~\cite{DworkSSUV15} that, if we consider the $\ell_1$ distance or KL-divergence for $\Delta \times \Delta$ grid and $n \leq O_{\eps}(\Delta)$, then the error must be $\Omega(1)$.}, rendering them impractical when the number of users is small compared to the resolution.

When the distributions $\bp_i$'s are arbitrary, we show that a simple $\eps$-DP algorithm yields a guarantee of $O_\eps(1/\sqrt{n})$ on EMD, and that this bound is essentially optimal. While this is already a reasonable bound, we can improve on it by exploiting a property that is commonly present in  distributions used for aggregations: ``sparsity''~\cite{CormodeEtAl}.

Following the literature on compressed sensing~\cite{indyk2011k,BackursIRW16}, we define our approximation guarantee for the \emph{sparse EMD aggregation} problem with respect to the best $k$-sparse distribution\footnote{A distribution is \emph{$k$-sparse} if it is non-zero on at most $k$ points.} that approximates the average $\ba := \frac{1}{n} \sum_{i=1}^n \bp_i$ under EMD. More formally, we say that an output distribution $\hba$ is a \emph{$(\lambda, \kappa)$-approximation} for sparse EMD aggregation if 
\begin{align*}
\textstyle
\emd\left(\hba, \ba\right) \leq \lambda \cdot \min_{k\text{-sparse } \ba'} \emd\left(\ba', \ba\right) + \kappa,
\end{align*}
where $\lambda, \kappa > 0$ denote the multiplicative approximation ratio and additive error respectively.

Our main algorithmic contribution is in showing that under such a sparse approximation notion, we can achieve an error of only $O_{\eps}(\sqrt{k} / n)$ and that this is tight.\footnote{Note that the output $\hba$ need \emph{not} be $k$-sparse. This is the reason why the approximation ratio $\lambda$ can be less than one.}

\begin{theorem}[Informal] \label{thm:main-informal}
There exists an $\eps$-DP algorithm that, for any constant $\lambda \in (0, 1)$, can output a $(\lambda, O_{\eps}(\sqrt{k} / n))$-approximation for sparse EMD aggregation w.p. $0.99$. Furthermore, no $\eps$-DP algorithm can output a $(\lambda, o_{\eps}(\sqrt{k} / n))$-approximate solution w.p. $0.1$.
\end{theorem}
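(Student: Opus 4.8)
The plan is to prove the two halves separately, reducing everything to a quadtree embedding of EMD. Recall that if we hierarchically partition $[0,1)^2$ into a quadtree of depth $L=\log\Delta$, so that level $\ell$ has $4^{\ell}$ cells each of side $2^{-\ell}$, then $\emd(\mu,\nu)$ is approximated up to an $O(L)$ factor by the tree distance $\sum_{c}2^{-\ell(c)}\,|\mu(c)-\nu(c)|$, where the sum ranges over all cells $c$ and $\ell(c)$ is the level of $c$. This turns the geometric aggregation problem into the problem of privately estimating, for every cell $c$, the average mass $a(c)=\tfrac1n\sum_i\bp_i(c)$.

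For the upper bound, I would first estimate all cell masses by a hierarchical Laplace (TreeSum-style) mechanism: since each user's distribution has total mass $1$ at every level, the per-level $\ell_1$ sensitivity is $O(1)$, so adding $\Lap(\tilde O(1/(\eps n)))$ noise per cell keeps the release $\eps$-DP after composing across the $L$ levels. A naive summation of the tree-distance error $\sum_c 2^{-\ell(c)}\tilde O(1/(\eps n))$ diverges like $\Delta/(\eps n)$, so sparsity must be exploited. The key observation is that when $\ba$ is (close to) $k$-sparse, at level $\ell$ at most $\min(4^{\ell},O(k))$ cells carry true mass; a private pruning step (the ReduceTree primitive) that discards every cell whose noisy mass falls below a threshold $\tau=\tilde O(1/(\eps n))$ therefore retains only $\min(4^{\ell},O(k))$ cells per level. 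Bounding the surviving error by $\sum_{\ell}2^{-\ell}\min(4^{\ell},O(k))\,\tilde O(1/(\eps n))$ and splitting the sum at the crossover level $\ell^{\st}=\tfrac12\log k$ (where $4^{\ell^{\st}}=k$) makes both the geometrically growing low-$\ell$ part and the geometrically decaying high-$\ell$ part equal to $\tilde O(\sqrt k/(\eps n))$, which is the claimed additive error; the true mass that pruning discards is exactly what is charged to the multiplicative term $\lambda\cdot\min_{k\text{-sparse}}\emd(\ba',\ba)$, and since $\hba$ need not itself be sparse this ratio can be taken below $1$.

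For the lower bound, I would run a packing/Assouad argument on a $\sqrt k\times\sqrt k$ grid. Group the $k$ grid points into $\Theta(k)$ adjacent pairs at mutual distance $1/\sqrt k$, and for a hidden vector $b\in\{0,1\}^{\Theta(k)}$ build an instance in which pair $j$ carries a mass imbalance $\pm\,\Theta(1/(\eps n))$ encoding $b_j$, so that the true average is exactly $k$-sparse and the multiplicative term vanishes. Each bit is realized by $\Theta(1/\eps)$ users, so flipping $b_j$ alters only $O(1/\eps)$ records; by group privacy, $\eps$-DP forces the two instances to be $\Omega(1)$-indistinguishable, so no $\eps$-DP mechanism can recover more than a $(1-\Omega(1))$-fraction of the bits, giving expected Hamming error $\Omega(k)$. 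Because distinct pairs are far apart, EMD decomposes additively over pairs, and every misidentified pair forces moving $\Theta(1/(\eps n))$ mass across distance $1/\sqrt k$, i.e.\ cost $\Omega(1/(\sqrt k\,\eps n))$; summing over the $\Omega(k)$ pairs that any algorithm must miss yields $\emd(\hba,\ba)=\Omega(\sqrt k/(\eps n))$, with the failure probability controlled by concentration over the independent bits.

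I expect the main obstacle to be the private pruning step in the upper bound: one must simultaneously guarantee that every genuinely heavy cell survives thresholding, that spurious survivors created by noise do not inflate the per-level count beyond $\min(4^{\ell},O(k))$, and that the pruned-away mass is charged to $\lambda\cdot\min_{k\text{-sparse}}\emd(\ba',\ba)$ rather than to the additive error---all while composing the privacy of the estimation and pruning across the $L$ levels without losing more than logarithmic factors. On the lower-bound side the delicate point is arranging the pair gadgets so that EMD genuinely decomposes additively across pairs, so that mispredictions cannot be cheaply cancelled by long-range transport; the $1/\sqrt k$-separated grid pairing is designed precisely to enforce this.
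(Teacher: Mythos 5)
Your overall architecture for the upper bound (hierarchical quadtree decomposition, per-level Laplace noise, keeping $\min\{4^{\ell}, O(k)\}$ cells per level, and splitting the error sum at the crossover level $\tfrac12\log k$) matches the paper, and your lower bound is a legitimate alternative route. But there are two concrete gaps in the upper bound. First, the multiplicative term: the two-sided quadtree embedding you invoke distorts EMD by a $\Theta(\log\Delta)$ factor, so ``charging the pruned mass to $\lambda\cdot\min_{k\text{-sparse}}\emd(\ba',\ba)$'' with $\lambda<1$ does not follow from it --- you would get a multiplicative factor of $O(\log\Delta)$, not an arbitrarily small constant. The paper avoids this by using only the \emph{one-sided} expansion $\|\bz\|_{\emd}\le\|\bP\bz\|_1$ together with the Indyk--Price model-alignment lemma, which says that for any $\eta$ there is a tree-structured $\by^*\in\cM_w$ with $w=O(k/\eta^2)$ satisfying $\|\by^*-\bP\bs\|_1\le\eta\cdot\min_{k\text{-sparse}}\|\bs-\bs'\|_{\emd}$; this is the ingredient that converts the $\ell_1$ recovery guarantee back to EMD with ratio $\lambda<1$, and your sketch has no substitute for it. (Relatedly, the paper's selection step is pure post-processing --- it keeps the top $\min\{w,|T_i|\}$ \emph{children of already-selected cells} based on the already-noised values --- so the per-level count is bounded deterministically and no extra privacy budget or thresholding analysis is needed; your threshold-based pruning has to separately control spurious survivors.) Second, the additive term: allocating the privacy budget uniformly across the $L=\log\Delta$ levels leaves noise $\Theta(L/(\eps n))$ per cell and hence a final error of $\tilde O(\sqrt k/(\eps n))$ with a genuine $\log\Delta$ factor, whereas the theorem claims $O_{\eps}(\sqrt k/n)$ with no such factor. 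The paper removes it by setting $\eps_i\propto\gamma^{|i-\log\sqrt w|}$ for a constant $\gamma\in(1/2,1)$, so that the per-level error contributions decay geometrically away from the crossover level; you flag the log loss with your tildes but the stated theorem does not permit it.

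Your lower bound is a genuinely different argument from the paper's: you run an Assouad-style bit-recovery reduction with $\Theta(k)$ paired gadgets, each encoded by $\Theta(1/\eps)$ users and decided by group privacy, whereas the paper constructs a $2^{\Omega(k)}$-size $\gamma$-packing of $k$-sparse distributions (via an $\ell_1$-ball volume argument on the simplex over a $\sqrt k\times\sqrt k$ grid) and applies the Hardt--Talwar packing bound. Your version gives more local insight into where the error comes from, but you should check two points: (i) the additive decomposition of EMD across pairs requires the inter-pair separation to strictly exceed the intra-pair distance, as you note; and (ii) the theorem rules out algorithms that succeed with probability as low as $0.1$, which the packing argument delivers immediately (disjoint accepting sets, each forced to have probability $\ge 0.1e^{-k}$ under a fixed input, summing to more than $1$), while an Assouad bound on the expected Hamming error needs an additional concentration or averaging step to exclude a mechanism that is very accurate on a $0.1$-fraction of its coins.
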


Due to a known connection between sparse EMD aggregation and $k$-median clustering on the plane~\cite{indyk2011k,BackursIRW16}, our result also yields an improved DP algorithm for the latter. 
\ifarxiv
We defer the formal statement of our $k$-median results and discussion to Appendix~\ref{app:kmedian}. 
\else
Due to space constraints, we omit the formal statement of our $k$-median results.
\fi

\paragraph{Experimental results.} We test our algorithm on both real-world location datasets and synthetic datasets. The results demonstrate its practicality even for moderate values of $\eps \in [0.5, 5]$ and a number of users equal to $200$. Furthermore, we compare our algorithm with simple baselines; under popular metrics for heatmaps, our results demonstrate significant improvements on these regimes of parameters.

\subsection{Overview of Techniques}

At a high level, our algorithm is largely inspired by the work of~\citet{indyk2011k} on \emph{compressed sensing} under EMD. Roughly speaking, in compressed sensing, there is an underlying vector $\bx$ that is known to be well-approximated by a sparse vector; we have to provide a matrix $\bA$ such that, when we observe the measurements $\bA\bx$, we can reconstruct $\bx'$ that is close to $\bx$ (under a certain metric). This can of course be done trivially by taking $\bA$ to, e.g., be the identity matrix. Thus, the objective is to perform this recovery task using as few measurements (i.e., number of rows of $\bA$) as possible.  There is a rich literature on compressive sensing; most relevant to our work are the prior papers studying compressive sensing with EMD, in particular,~\citet{indyk2011k} and~\citet{BackursIRW16}.


\citet{indyk2011k} presented an elegant framework for reducing the compressed sensing problem under EMD to one under $\ell_1$, which is well-studied\ifarxiv~(see, e.g.~\cite{berinde2008combining,berinde2008practical,IndykR08,berinde2009sequential})\else~\cite[see, e.g.,][]{berinde2008combining,berinde2008practical,IndykR08,berinde2009sequential}\fi. Their reduction centers around finding a linear transformation with certain properties. Once such a transformation is specified, the algorithm proceeds (roughly) as follows: transform the input $\bx$, run the compressed sensing scheme for $\ell_1$, and ``invert'' the transformation to get $\bx'$. Note that the number of measurements required is that of the $\ell_1$ compressed sensing scheme.

One can try to use the Indyk--Price scheme for DP aggregation by viewing the hidden vector $\bx$ as the sum $\sum_{i=1}^n \bp_i$, and then adding Laplace noise to each measurement to ensure privacy. Although they did not analyze their guarantees for noisy measurements, one can follow the robustness of known $\ell_1$ compressed sensing schemes to analyze the error.  Unfortunately, since the error will scale according to the $\ell_1$ norm of the noise vector and the noise vector consists of $O(k \cdot \log (n/k))$ entries, this approach only provides an error guarantee of $O(k \cdot (\poly\log n) / n)$.

To overcome this, we observe that, while compressed sensing and DP aggregation seem similar, they have different goals: the former aims to minimize the \emph{number} of measurements whereas the latter aims to minimize the \emph{error} due to the noise added (irrespective of the number of measurements). With this in mind, we proceed by using the Indyk--Price framework but \emph{without compressing}, i.e., we simply measure the entire transformation. Even with this, the noise added to achieve DP is still too large and makes the error dependent on $\log n$. As a final step, to get rid of this factor we carefully select a different noise magnitude for each measurement, which allows us to finally achieve the $O(\sqrt{k})$ error as desired.  The details  are presented in \Cref{sec:algo-main}.


Our lower bound follows the packing framework of~\citet{HardtT10}. Specifically, we construct a set of $k$-sparse distributions whose pairwise EMDs are at least $\Omega(1/\sqrt{k})$. The construction is based on an $\ell_1$ packing of the $\sqrt{k} \times \sqrt{k}$ grid, which gives a set of size $2^{\Omega(k)}$. It then immediately follows from~\citet{HardtT10} that the error must be at least $\Omega_{\eps}(\sqrt{k} / n)$ with probability $0.9$.
\ifarxiv
For more details, see Appendix~\ref{app:sparselb}
\fi


%

\subsection{Related Work and Discussion}

In a concurrent and independent work,~\citet{brella-heatmaps} also study the  private heatmaps problem.  However, our work differs from theirs in three aspects: (i) they do not formulate the problem in terms of EMD, (ii) their work does not provide any formal utility guarantees unlike ours, (iii) their emphasis is on communication efficiency in distributed/federated setting whereas our focus is more general.

Our DP EMD sparse aggregation algorithm bears high-level similarity to known algorithms for DP hierarchical histograms\ifarxiv~(see e.g.~\cite{CormodePSSY12,QardajiYL13})\else~\cite[see, e.g.,][]{CormodePSSY12,QardajiYL13}\fi: all algorithms may be viewed as traversing the grid in a top-down manner, starting with larger subgrids and moving on to smaller ones, where a noise is added to the ``measurement'' corresponding to each subgrid. The differences between the algorithms are in the amount of noise added to each step and how the noisy measurement is used to reconstruct the final output. Our choices of the noise amount and the Indyk–Price reconstruction algorithm are crucial to achieve the optimal EMD error bound stated in \Cref{thm:main-informal}.

There are also DP hierarchical histogram algorithms that do not fit into the above outline, such as the PrivTree algorithm~\cite{ZhangXX16}. An advantage of our approach is that the only aggregation primitive required is the Laplace mechanism; therefore, while we focus on the \emph{central} model of DP (where the analyzer can see the raw input and only the output is required to be DP), our algorithm extends naturally to distributed models that can implement the Laplace mechanism, including the secure aggregation model and the shuffle model~\cite{balle_merged,ghazi2019private}. On the other hand, algorithms such as PrivTree that use more complicated primitives cannot be easily implemented in these models.



\section{Notation and Preliminaries}
\label{sec:notation}

For $N \in \N \cup \{0\}$, we write $[N]$ to denote $\{0, \dots, N\}$. Let $G_\Delta$ be the set of $(\Delta \times \Delta)$ grid points in $[0, 1)^2$; specifically, $G_\Delta = \{(i/\Delta, j/\Delta) \mid i, j \in [\Delta - 1]\}$. For notational convenience, we assume throughout that $\Delta = 2^\ell$ for some $\ell \in \N$.

For an index set $\cI$, we view $\bp \in \R^\cI$ as a vector indexed by $\cI$ and we write $\bp(i)$ to denote the value of its $i$th coordinate; this notation extends naturally to the set $S \subseteq \cI$ of coordinates, for which we let $\bp(S) := \sum_{i \in S} \bp(i)$. Furthermore, we use $\bp|_S$ to denote the restriction of $\bp$ to $S$; more formally, $\bp|_S(i) = \bp(i)$ if $i \in S$ and $\bp|_S(i) = 0$ otherwise.
We also write $\bp|_{\bar{S}}$ as a shorthand for $\bp - \bp|_S$, i.e., the restriction of $\bp$ to the complement of $S$.
We use $\supp(\bp)$ to denote the set of non-zero coordinates of vector $\bp$. A vector is said to be \emph{$k$-sparse} if its support is of size at most $k$. 
Recall that the $\ell_1$-norm of a vector $\bp \in \R^\cI$ is $\|\bp\|_1 := \sum_{i \in \cI} |p(i)|$. 

\subsection{Earth Mover's Distance (EMD)} 
\label{sec:prelim-emd}

Given two non-negative vectors $\bp, \bq \in \R_{\geq 0}^{G_\Delta}$ such that $\|\bp\|_1 = \|\bq\|_1$, their \emph{Earth Mover's Distance} (EMD) is
\begin{align*}
\textstyle
\emd(\bp, \bq) := \min_{\gamma} \sum_{x \in G_\Delta} \sum_{y \in G_\Delta} \gamma(x, y) \cdot \|x - y\|_1,
\end{align*}
where the minimum is over $\gamma \in \R_{\geq 0}^{G_\Delta \times G_\Delta}$ whose marginals are $\bp$ and $\bq$. (I.e., for all $x \in G_{\Delta}$, $\sum_{y \in G_\Delta} \gamma(x, y) = \bp(x)$ and, for all $y \in G_{\Delta}$, $\sum_{x \in G_\Delta} \gamma(x, y) = \bq(y)$.)


We define the \emph{EMD norm} of a vector $\bw \in \R^{G_\Delta}$ by
\[
\textstyle
\|\bw\|_{\emd} := \min_{\bp, \bq \in \R_{\geq 0}^{G_{\Delta}} \atop \bp - \bq + \br = \bw, \|\bp\|_1 = \|\bq\|_1} \emd(\bp, \bq) + \alpha \cdot \|\br\|,
\]
where $\alpha = 2$ is the diameter of our space $[0, 1) \times [0, 1)$.
%


The following simple lemma will be useful when dealing with unnormalized vs normalized vectors.
\begin{lemma} \label{lem:err-from-normalization}
Suppose that $\bs, \hbs \in \R^{G_\Delta}_{\geq 0}$ are such that $\|\bs\|_1 = n$ and $\|\bs - \hbs\|_{\emd} \leq n / 2$. Let $\ba = \bs / \|\bs\|_1$ and $\hba = \hbs / \|\hbs\|_1$. Then, we have $\|\ba - \hba\|_{\emd} \leq 4 \|\bs - \hbs\|_{\emd} / n$.
\end{lemma}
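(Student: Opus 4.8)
The plan is to treat $\|\cdot\|_{\emd}$ as a seminorm and to reduce the whole claim to controlling how far the normalizing constant $\hat n := \|\hbs\|_1$ can drift from $n = \|\bs\|_1$. First I would record two elementary properties that follow directly from the definition of the EMD norm: positive homogeneity, $\|c\bw\|_{\emd} = c\|\bw\|_{\emd}$ for $c \ge 0$ (scale an optimal decomposition $\bw = \bp - \bq + \br$ by $c$), and subadditivity, $\|\bw_1 + \bw_2\|_{\emd} \le \|\bw_1\|_{\emd} + \|\bw_2\|_{\emd}$ (add the two decompositions coordinatewise, using that the sum of two valid transport plans is feasible for the summed marginals, that balance $\|\bp\|_1 = \|\bq\|_1$ is preserved under addition of non-negative vectors, and that $\|\cdot\|_1$ is subadditive). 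I also note the crude bound $\|\bw\|_{\emd} \le \alpha\|\bw\|_1$, obtained by taking $\bp = \bq = 0$ and $\br = \bw$.

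With these in hand I would split the target along the identity
\[
\ba - \hba = \frac{\bs}{n} - \frac{\hbs}{\hat n} = \frac1n(\bs - \hbs) + \Big(\frac1n - \frac1{\hat n}\Big)\hbs ,
\]
and apply subadditivity and homogeneity to obtain $\|\ba - \hba\|_{\emd} \le \frac1n\|\bs-\hbs\|_{\emd} + \big|\frac1n - \frac1{\hat n}\big|\cdot \alpha\hat n$, where the second term uses $\|\hbs\|_{\emd} \le \alpha\|\hbs\|_1 = \alpha\hat n$. Since $\big|\frac1n - \frac1{\hat n}\big|\,\hat n = |\hat n - n|/n$, the estimate collapses to $\frac1n\big(\|\bs-\hbs\|_{\emd} + \alpha\,|\hat n - n|\big)$, so everything reduces to bounding the mass discrepancy $|\hat n - n|$.

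The key step, and the one that genuinely uses the structure of the EMD norm rather than just its norm axioms, is the claim $|\hat n - n| \le \tfrac1\alpha\|\bs - \hbs\|_{\emd}$. To prove it, take a near-optimal decomposition $\bs - \hbs = \bp - \bq + \br$ with $\bp,\bq \ge 0$ and $\|\bp\|_1 = \|\bq\|_1$. Summing over all coordinates and using non-negativity of $\bs,\hbs,\bp,\bq$ gives $n - \hat n = \sum_x(\bs-\hbs)(x) = \|\bp\|_1 - \|\bq\|_1 + \sum_x \br(x) = \sum_x \br(x)$, so $|n - \hat n| \le \|\br\|_1 \le \tfrac1\alpha\|\bs-\hbs\|_{\emd}$, the last inequality because $\alpha\|\br\|_1$ is one of the non-negative summands defining $\|\bs-\hbs\|_{\emd}$. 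The intuition is that the balanced part $\bp - \bq$ moves zero net mass, so the entire imbalance is carried by the residual $\br$, which is exactly what the $\alpha\|\br\|_1$ penalty pays for.

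Plugging this in yields $\|\ba - \hba\|_{\emd} \le \frac1n\big(\|\bs-\hbs\|_{\emd} + \|\bs-\hbs\|_{\emd}\big) = \frac2n\|\bs-\hbs\|_{\emd}$, comfortably within the stated factor of $4$. The hypothesis $\|\bs-\hbs\|_{\emd} \le n/2$ is needed only to make the normalization legitimate: it forces $\hat n \ge n - \tfrac1\alpha\cdot\tfrac n2 \ge \tfrac{3n}4 > 0$, so $\hba$ is well defined and the $1/\hat n$ factor never blows up. I expect the only delicate point to be the mass-discrepancy claim — in particular, being careful that balance of $\bp$ and $\bq$ kills their net contribution — while the remainder is routine bookkeeping with the triangle inequality and homogeneity.
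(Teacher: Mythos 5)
Your proof is correct and follows essentially the same route as the paper's: the identical decomposition $\ba - \hba = \tfrac{1}{n}(\bs - \hbs) + \bigl(\tfrac{1}{n} - \tfrac{1}{\|\hbs\|_1}\bigr)\hbs$, with everything reduced to controlling the mass discrepancy $\bigl|\|\hbs\|_1 - n\bigr|$. The one point where you go beyond the paper is that you actually derive the bound $\bigl|\|\hbs\|_1 - n\bigr| \le \tfrac{1}{\alpha}\|\bs-\hbs\|_{\emd}$ from the structure of the EMD norm (the balanced part of a decomposition carries no net mass), whereas the paper simply asserts the weaker $\bigl|\|\bs\|_1 - \|\hbs\|_1\bigr| \le \|\bs-\hbs\|_{\emd}$, and you track the factor $\alpha$ in $\|\hbs\|_{\emd} \le \alpha\|\hbs\|_1$ consistently; the two $\alpha$'s cancel and you land at $2\|\bs-\hbs\|_{\emd}/n$, comfortably inside the stated bound of $4\|\bs-\hbs\|_{\emd}/n$.
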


\begin{proof}
Let $\zeta = \|\bs - \hbs\|_{\emd}$; observe that $|\|\bs\|_1 - \|\hbs\|_1| \geq \zeta$. As a result, we have $\|\hbs\|_1 \in [n - \zeta, n + \zeta]$. Thus,
\begin{align*}
\|\hbs/n - \hba\|_{\emd} 
&\leq \|\hbs\|_{\emd} \cdot \left|\frac{1}{n} - \frac{1}{\|\hbs\|_1}\right| \\
&\leq (n + \zeta) \cdot \left|\frac{1}{n} - \frac{1}{n - \zeta}\right| 
\leq \frac{3\zeta}{n}. 
\end{align*}

As a result, from the triangle inequality, we have
\begin{align*}
\|\ba - \hba\|_{\emd}
& \leq \|\ba - \hbs/n\|_{\emd} + \|\hbs/n - \hba\|_{\emd} \\
& \leq \frac{\zeta}{n} + \frac{3\zeta}{n} = \frac{4\zeta}{n}. \qedhere 
\end{align*}
\end{proof}

\subsection{Differential Privacy}

Two input datasets $\bX, \bX'$ are \emph{neighbors} if $\bX'$ results from adding or removing a single user's data from $\bX$. In our setting, each user $i$'s data is a distribution $\bp_i$ over $G_\Delta$.

\begin{defn}[Differential Privacy;~\citet{dwork2006calibrating}]
A mechanism $\cM$ is said to be \emph{$\eps$-DP} iff, for every set $O$ of outputs and every pair $\bX, \bX'$ of neighboring datasets, $\Pr[\cM(\bX) \in O] \leq e^{\eps} \cdot \Pr[\cM(\bX') \in O].$
\end{defn}

For a vector-valued function $f$, its \emph{$\ell_1$-sensitivity}, denoted by $S_1(f)$, is defined as $\max_{\text{neighbors } \bX, \bX'} \|f(\bX) - f(\bX')\|_1$.
\begin{defn}[Laplace Mechanism]
The \emph{Laplace mechanism} with parameter $b > 0$ adds an independent noise drawn from the Laplace distribution $\Lap(b)$ to each coordinate of a vector-valued function $f$. 
\end{defn}

\begin{lemma}[\citet{dwork2006calibrating}] \label{lem:laplace-dp}
The Laplace mechanism with parameter $S_1(f) / \eps$ is $\eps$-DP.
\end{lemma}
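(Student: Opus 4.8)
The plan is to compare, pointwise, the probability densities of the mechanism's output under two neighboring datasets, and then integrate to recover the set-based definition of $\eps$-DP. First I would write down the output density explicitly. Since the mechanism adds independent $\Lap(b)$ noise with $b = S_1(f)/\eps$ to each of the $d$ coordinates of $f$, recalling that the $\Lap(b)$ density is $z \mapsto \frac{1}{2b}\exp(-|z|/b)$, the output $\cM(\bX) = f(\bX) + \bnu$ has density $p_{\bX}(\by) = \prod_{j=1}^d \frac{1}{2b}\exp(-|y_j - f(\bX)_j|/b)$ at any point $\by \in \R^d$, by independence across coordinates.

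Next, for neighboring datasets $\bX, \bX'$, I would bound the ratio of these densities at an arbitrary point $\by$. The normalizing constants $\frac{1}{2b}$ cancel, and converting the product into a sum in the exponent gives that the ratio $p_{\bX}(\by)/p_{\bX'}(\by)$ equals $\exp\left(\frac{1}{b}\sum_{j=1}^d \bigl(|y_j - f(\bX')_j| - |y_j - f(\bX)_j|\bigr)\right)$. The key step is to control each summand using the \emph{reverse} triangle inequality, $|y_j - f(\bX')_j| - |y_j - f(\bX)_j| \leq |f(\bX)_j - f(\bX')_j|$. Summing over $j$ then yields exactly the $\ell_1$ distance $\|f(\bX) - f(\bX')\|_1$, which by the definition of $\ell_1$-sensitivity is at most $S_1(f)$.

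Finally, substituting $b = S_1(f)/\eps$ into the bound produces the pointwise inequality $p_{\bX}(\by) \leq e^{\eps}\, p_{\bX'}(\by)$, valid for all $\by$. To conclude the $\eps$-DP guarantee I would integrate this inequality over any measurable set $O$ of outputs, which gives $\Pr[\cM(\bX) \in O] \leq e^{\eps}\Pr[\cM(\bX') \in O]$ as required by the definition.

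The proof presents no substantial obstacle, since the Laplace density factorizes across coordinates and the chosen noise scale is tailored precisely to the $\ell_1$ structure of the sensitivity. The only points requiring any care are getting the direction of the reverse triangle inequality right (so that the cross terms are bounded by the coordinatewise gap rather than the other way around) and the routine observation that a pointwise density-ratio bound integrates to the set-based statement of differential privacy.
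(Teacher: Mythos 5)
Your proof is correct and is precisely the standard argument from the cited reference \cite{dwork2006calibrating}: the paper itself states this lemma without proof, and the canonical derivation is exactly the pointwise density-ratio bound via the reverse triangle inequality followed by integration, as you describe. No issues.
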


\subsection{Heatmaps}
\label{sec:prelim-heatmap}

Given $\bp \in \R_{\geq 0}^{G_\Delta}$, its associated \emph{heatmap} with Gaussian filter variance $\sigma^2$ is defined as
\begin{align*}
\textstyle
H_\bp^\sigma(x, y) = \sum_{(x', y') \in G_\Delta} \frac{1}{Z(x', y')} e^{-\frac{(x - x')^2 + (y - y')^2}{2\sigma^2}} \cdot p(x', y')
\end{align*}
for all $(x, y) \in G_\Delta$, where $Z(x', y') := \sum_{(x', y') \in G_\Delta} e^{-\frac{(x - x')^2 + (y - y')^2}{2\sigma^2}}$ is the normalization factor.

In the heatmap aggregation problem over $n$ users, each user $i$ has a probability distribution $\bp_i$ over $G_\Delta$. The goal is to output an estimate of the aggregated heatmap $H^\sigma_\ba$ where $\ba = \frac{1}{n} \sum_{i \in [n]} \bp_i$.

\section{Algorithm}
\label{sec:algo-main}

In this section, we describe our private sparse EMD aggregation algorithm and prove our main result.

\begin{theorem} \label{thm:main-ub}
For any $\eps > 0$ and $\lambda \in (0, 1)$, there is an $\eps$-DP algorithm that can, w.p. $0.99$, output a $\left(\lambda, O\left(\frac{\sqrt{k}}{\lambda \eps n}\right)\right)$-approximation for the $k$-sparse EMD aggregation problem.
\end{theorem}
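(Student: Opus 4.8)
The plan is to recover the unnormalized sum $\bs := \sum_{i=1}^n \bp_i$ (so that $\ba = \bs/n$ and $\|\bs\|_1 = n$), release a private estimate $\hbs$ with small EMD error, and only normalize at the very end via \Cref{lem:err-from-normalization}. Following \cite{indyk2011k}, I would impose the hierarchical quadtree on $[0,1)^2$ with levels $j = 0, 1, \dots, \ell$ (recall $\Delta = 2^\ell$), where level $j$ partitions the square into $4^j$ cells of side $2^{-j}$, and for a distribution $\bp$ the level-$j$ measurement is the vector of its cell masses. The Indyk--Price reduction reconstructs a vector $\hbs$ from these (noisy) cell masses in a top-down fashion, with an EMD guarantee that decomposes across levels: each level $j$ contributes its weight $2^{-j}$ times the $\ell_1$ error of the mass estimates on the cells the recovery actually inspects, plus a multiplicative term against the best $k$-sparse approximation. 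The crucial point is that we do \emph{not} compress: we are free to measure every cell and pay only for the noise, not for the number of measurements.

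For privacy, adding or removing one user perturbs $\bs$ by a single probability distribution $\bp_i$, so the level-$j$ cell-mass vector has $\ell_1$-sensitivity exactly $1$ (the cells at a fixed level are disjoint and $\|\bp_i\|_1 = 1$). Hence releasing the level-$j$ masses under the Laplace mechanism with parameter $b_j = 1/\eps_j$ is $\eps_j$-DP by \Cref{lem:laplace-dp}. Since the top-down recovery chooses which cells to query at level $j$ only as a function of the already-privatized coarser answers, the whole procedure is an adaptive composition of Laplace mechanisms, hence $(\sum_j \eps_j)$-DP; I would set $\sum_{j=0}^{\ell} \eps_j = \eps$.

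For utility, the key observation is that at level $j$ the recovery need only track $O(\min(4^j, k))$ ``active'' cells: when $4^j \le k$ there are at most $4^j$ cells in total, and when $4^j > k$ a nearly $k$-sparse signal puts significant mass on only $O(k)$ of them, so the top-down search keeps just $O(k)$ children per level. The $\ell_1$ noise on these active cells concentrates around $\Theta(\min(4^j,k)\cdot b_j)$, so up to constants and the multiplicative term the additive EMD error of $\hbs$ is $\sum_{j=0}^{\ell} 2^{-j}\min(4^j,k)/\eps_j$. Minimizing this over $\eps_j \ge 0$ with $\sum_j \eps_j = \eps$ (Cauchy--Schwarz) gives the optimal split $\eps_j \propto \sqrt{2^{-j}\min(4^j,k)}$ and value
\[
\frac{1}{\eps}\Bigl(\sum_{j=0}^{\ell} \sqrt{2^{-j}\min(4^j,k)}\Bigr)^2 = \Theta\!\left(\frac{\sqrt{k}}{\eps}\right),
\]
since the inner sum telescopes to $\Theta(k^{1/4})$ from both the $4^j \le k$ and $4^j > k$ regimes. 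This is exactly the uniform-split analysis improved by a $\Theta(\log \Delta)$ factor, which removes the stray $\log n$ of the naive approach. Applying \Cref{lem:err-from-normalization} to pass from $\bs$ to $\ba = \bs/n$ then yields EMD error $O(\sqrt k/(\eps n))$. Finally, to obtain an arbitrary multiplicative $\lambda \in (0,1)$ rather than a fixed $O(1)$, I would run the recovery with an inflated sparsity parameter $k' = \Theta(k/\lambda^2)$; standard head/tail arguments in the Indyk--Price framework convert the extra sparsity into the factor $\lambda$ while replacing $\sqrt k$ by $\sqrt{k'} = \Theta(\sqrt k/\lambda)$, giving the claimed additive error $O(\sqrt{k}/(\lambda \eps n))$.

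The main obstacle is the robustness analysis of the noisy top-down recovery: one must show that, despite level $j$ containing $4^j \gg k$ cells, only $O(\min(4^j,k))$ of them ever enter the error bound, and that misidentifying which cells are ``significant'' (due to Laplace noise) is absorbed into the multiplicative term rather than inflating the additive error. This is precisely the part that leans on the structural EMD-to-$\ell_1$ guarantee of \cite{indyk2011k}, now adapted to heterogeneous per-level noise $b_j$, so the usual uniform-noise recovery bounds must be re-derived with level-dependent thresholds. Two minor points I would also handle: the concentration of the level-wise noise sums via a union bound over the $O(\log\Delta)$ levels (costing only lower-order terms), and verifying the hypothesis $\|\bs - \hbs\|_{\emd} \le n/2$ of \Cref{lem:err-from-normalization}, which holds in the only nontrivial regime $\sqrt k/(\lambda\eps n) = O(1)$ since otherwise the EMD diameter bound of $2$ makes the target error vacuous.
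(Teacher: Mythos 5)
Your proposal follows essentially the same route as the paper: the uncompressed pyramidal transform with per-level Laplace noise, a geometrically decaying privacy budget centered at the crossover level $4^q \approx w$ (your Cauchy--Schwarz-optimal split $\eps_j \propto \sqrt{2^{-j}\min(4^j,k)}$ is exactly the paper's $\gamma^{|i-q|}$ allocation with $\gamma = 1/\sqrt{2}$), the inflation $w = \Theta(k/\lambda^2)$ to buy the multiplicative factor $\lambda$, and \Cref{lem:err-from-normalization} to normalize at the end. The one step you flag but do not carry out --- the noise-robust top-down recovery bound of the form $\sum_i 2^{-i}\bigl\|\bnu_i|_{V_i \cup S_i}\bigr\|_1$ --- is precisely the paper's \Cref{lem:l1-recovery-tree} (combined with \Cref{lem:model-alignment} and \Cref{lem:emd-to-l1-expansion}), and it goes through exactly as you describe.
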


\subsection{Pyramidal Transform}

As alluded to in \Cref{sec:intro}, we use a linear transformation from~\cite{indyk2011k}. This linear transformation is the so-called \emph{(scaled) pyramidal transform}, whose variant is also often used in (metric) embedding of $\emd$ to $\ell_1$~\cite{Charikar02,indyk2003fast}. Roughly speaking, the transform represents a hierarchical partitioning of $[0, 1)^2$ into subgrids, where a subgrid at a level is divided into four equal subgrids at the next level. The (scaled) pyramidal transform has one row corresponding to each subgrid; the row is equal to the indicator vector of the subgrid scaled by its side length. These are formalized below.

\begin{defn}
For $i \in \N \cup \{0\}$, we let $C_{2^i}$ denote the set of \emph{level $i$ grid cells} defined as
$C_{2^i} := \{[a, a + 2^{-i}) \times [b, b + 2^{-i}) \mid (a, b) \in G_{2^i}\}$; let $m_i := |C_{2^i}|$.

For $i \in [\ell]$, the \emph{level-$i$ grid partition map} is defined as the matrix $\bP_i\in \{0, 1\}^{C_{2^{i}} \times G_\Delta}$ where $\bP_i(c, p) = 1$ iff $p \in c$.
The \emph{(scaled) pyramidal transform} is the matrix $\bP \in \R^{\bigcup_{i=0}^\ell C_{2^i} \times G_\Delta}$ defined by
\ifarxiv
$$
\else
$
\fi
\bP := 
\begin{bmatrix}
\bP_0^\top\:2^{-1}\bP_1^\top \hdots2^{-\ell} \bP_\ell^\top
\end{bmatrix}^\top
.
\ifarxiv
$$
\else
$
\fi
\end{defn}

\subsection{The Algorithm}

Our algorithm for sparse EMD aggregation consists of two components.  The first component (\Cref{alg:dp-aggregation-full})  aggregates the input distributions (Line~\ref{line:alg1-agg}) and applies the pyramidal transform to the aggregate, adding different amounts of Laplace noise for different levels of the grid (Lines~\ref{line:alg1-lap}, \ref{line:alg1-noise}). (The parameters $\eps_1, \dots, \eps_\ell$, which govern the amount of Laplace noise, will be specified in the next subsection.) 
The second component (\Cref{alg:dp-aggregation-reconstruction}) takes these noisy measurements for every level of the grid and reconstructs the solution by first recovering the $\ell_1$ solution (Line~\ref{line:l1-recovered}) and then the EMD solution using a linear program (Line~\ref{line:lp}).  

We stress that our algorithm is similar to that of~\citet{indyk2011k} except for two points: first, we add noise to the measurements and, second, we are not doing any ``compression'' in contrast to~\cite{indyk2011k}, which takes a wide matrix $\bA$ for $\ell_1$ recovery and multiplies it with $\bP\bs$.

\begin{figure}
\begin{minipage}{0.47\textwidth}
\begin{algorithm}[H]
\caption{\textsc{DPSparseEMDAgg}}
\label{alg:dp-aggregation-full}
\begin{algorithmic}[1]
\STATE \textbf{Input: } distributions $\bp_1, \dots, \bp_n$ on $G_{\Delta}$
\STATE \textbf{Parameters: } $\eps_1, \dots, \eps_\ell > 0, w \in \N$
\STATE $\bs \leftarrow \sum_{i=1}^n \bp_i$
\label{line:alg1-agg}
\FOR{$i = 0, \dots, \ell$}
\STATE $\bnu_i \leftarrow \Lap(1/\eps_i)^{\otimes m_i}$
\label{line:alg1-lap}
\STATE $\by'_i \leftarrow \frac{1}{2^i}\left(\bP_i\bs + \bnu_i\right)$
\label{line:alg1-noise}
\ENDFOR
\STATE $\by' \leftarrow [\by'_0 \cdots \by'_\ell]$
\STATE $\hbs \leftarrow$ \textsc{Reconstruct}($\by'; w$) 
\RETURN $\hba := \hbs / \|\hbs\|$
\end{algorithmic}
\end{algorithm}
\end{minipage}
\hfill
\begin{minipage}{0.47\textwidth}
\small
\begin{algorithm}[H]
\caption{\textsc{Reconstruct}}
\label{alg:dp-aggregation-reconstruction}
\begin{algorithmic}[1]
\STATE \textbf{Input: } noisy measurements $\by' \in \R^{\bigcup_{i \in [\ell]} C_{2^i}}$
\STATE \textbf{Parameters: } $w \in \N$
\STATE $S_0 \leftarrow C_1$
\FOR{$i = 1, \dots, \ell$}
\STATE $T_i \leftarrow \children(S_{i - 1})$
\STATE $S_i \leftarrow$ the set of $\min\{w, |T_i|\}$ coordinates in $T_i$ with maximum values in $\by'$
\ENDFOR
\STATE $S \leftarrow \bigcup_{i \in [\ell]} S_i$
 and $\hby \leftarrow \by'|_S$ \label{line:l1-recovered}
\RETURN $\hbs \leftarrow \argmin_{\bs' \geq \bzero} \|\hby - \bP\bs'\|_1$
\label{line:lp}
\end{algorithmic}
\end{algorithm}
\end{minipage}
\end{figure}

\subsection{Analysis}

Following the framework of~\citet{indyk2011k}, our analysis proceeds in two stages.  We first show that the ``recovered'' $\hby$ is close, in the $\ell_1$ metric, to the true value of $\bP\bs$. Then, we use the properties of $\bP$ to argue that the output $\hbs$ is close, in EMD, to $\bs$. Since we are adding noise to our measurement, we need to extend the work of~\citet{indyk2011k} to be robust to noise. Finally, we set the privacy parameters $\eps_1, \dots, \eps_\ell$ 
to finish our proof of \Cref{thm:main-ub}.

Let us now briefly demystify the additive error bound $O_{\eps, \lambda}(\sqrt{k})$ that we end up with for $\hbs$ (which ultimately gives the $O_{\eps, \lambda}(\sqrt{k} / n)$ error bound for the normalized $\hba$). We will select $w = O_{\lambda}(k)$ so as to have an additive error of $O_{\eps}(\sqrt{w})$. At a high level, each noise $\frac{1}{2^i} \cdot \bnu_i(t)$ added to a ``queried'' term $\by_i(t)$ for $t \in T_i$ permeates to an error of the same order. For simplicity, assume for the moment that $|\bnu_i(t)| = O(1/\eps_i)$. Now, notice that if we are at level $i < \log \sqrt{w}$, then $|T_i| = |C_{2^i}| = 2^{2i}$ and thus the total error contribution of this level is $O(2^i /\eps_i)$. On the other hand, for a level $i \geq \log\sqrt{w}$, we will have $|T_i| = w$ and the error contribution is $O\left(\frac{w}{2^i \eps_i}\right)$. Now, when $i = \log \sqrt{w} \pm O(1)$, these error terms are $O(\sqrt{w} / \eps_i)$ and thus we should set $\eps_i = \Omega(1)$ to get the desired bound. However, in terms of $|i - \log \sqrt{w}|$, these error terms become exponentially smaller, i.e., $O\left(\frac{\sqrt{w}}{2^{|i - \log \sqrt{w}|} \eps_i}\right)$. This leads to the natural choices of $\eps_i$ we use, which is to make it proportional to $\gamma^{|i - \log \sqrt{w}|}$ for some constant $\gamma > 0.5$. This indeed leads to the desired $O_{\eps}(\sqrt{w}) = O_{\eps, \lambda}(\sqrt{k})$ bound.

\paragraph{Phase I: $\ell_1$ Recovery.}
We will now analyze the $\ell_1$ recovery guarantee of $\hby$. Our recovery algorithm, which is an adaptation of~\citet{indyk2011k}, does \emph{not} work for general hidden vectors. However, it works well for those that follow a certain ``tree-like structure'', formalized below.

\begin{defn}[\citet{indyk2011k}]
For $i \geq 1$, a grid cell $c' \in C_{2^i}$ is said to be a \emph{child} of grid cell $c \in C_{2^{i - 1}}$ if $c \subseteq c'$. This forms a tree rooted at $[0, 1) \times [0, 1) \in C_0$ where every internal node has exactly four children. We let $\cT_w$ denote the set of all trees 
such that the number of nodes at each level is at most $w$. 

Let $\cM_{w}$ denote the set of $\by = [\by_0 \cdots \by_\ell]$ where $\by_i \in \R_{\geq 0}^{C_{2^i}}$ such that 
\begin{enumerate}
\item $\supp(\by) \subseteq T$ for some tree $T \in \cT_w$.
\item For all $i \in [\ell - 1], p \in C_{2^i}$, the following holds: $\by(p) \geq 2 \cdot \by(\children(p))$.
\end{enumerate}
\end{defn}

Under the above notion, we can adapt the $\ell_1$ recovery analysis of~\citet{indyk2011k} in the no-noise case to our regime, where the noise shows up as an error:

\begin{lemma} \label{lem:l1-recovery-tree}
Let $\by^* \in \argmin_{\by \in \cM_w} \|\bP\bs - \by\|_1$ where $\supp(\by^*) \subseteq T^*$ for some $T^* \in \cT_w$; let $T^*_i$ denote $T^* \cap C_{2^i}$ and $V_i = T^*_i \setminus S_i$ for all $i \in [\ell]$. Then, $\hby$ on Line~\ref{line:l1-recovered} of \textsc{Reconstruct} satisfies
$
\textstyle
\|\hby - \bP\bs\|_1 \leq 3\|\by^* - \bP\bs\|_1 + O\left(\sum_{i \in [\ell]} \frac{1}{2^i} \left\|\bnu_i|_{V_i \cup S_i}\right\|_1\right).
$
\end{lemma}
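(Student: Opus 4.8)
The plan is to reduce everything to a statement about the \emph{noiseless} measurement $\bz := \bP\bs$ (which is coordinate-wise nonnegative, since $\bs\ge\bzero$ and $\bP\ge\bzero$) and then run a greedy-exchange argument against $\by^*$. First I would record that the level-$i$ block of $\by'$ is $\by'|_{C_{2^i}} = \bz|_{C_{2^i}} + \tfrac{1}{2^i}\bnu_i$, and that $\hby=\by'|_S$ for $S=\bigcup_i S_i$. Splitting the coordinates of $\|\hby-\bz\|_1$ into those inside and outside $S$ gives exactly
\[
\|\hby-\bz\|_1 = \sum_{i\in[\ell]} \tfrac{1}{2^i}\|\bnu_i|_{S_i}\|_1 + \|\bz|_{\bar S}\|_1,
\]
where the first term is already of the promised form (as $S_i\subseteq V_i\cup S_i$). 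So the lemma reduces to bounding the ``missed clean mass'' $\|\bz|_{\bar S}\|_1$.

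Next I would exploit the tree structure of the greedy selection. Because a cell can enter $S_i$ only if its parent lies in $S_{i-1}$, and the root $S_0=C_1$ is always kept, the complement $\bar S$ is closed under taking descendants: every coordinate of $\bar S$ is a descendant of a unique \emph{maximal} dropped cell, i.e.\ of some cell in $A_i := T_i\setminus S_i$ (parent kept, itself rejected), and these subtrees partition $\bar S$. Using $\bz(c)=2^{-i}\bs(c)$ for $c\in C_{2^i}$ together with the fact that the four children of $c$ tile $c$, one gets $\bz(\children(c))=\tfrac12\bz(c)$, so the total mass of $\bz$ in the subtree rooted at $c$ is a geometric sum bounded by $2\bz(c)$. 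Summing over maximal dropped cells yields $\|\bz|_{\bar S}\|_1 \le 2\sum_{i\in[\ell]}\|\bz|_{A_i}\|_1$, and it remains to bound the \emph{boundary} mass $\sum_i\|\bz|_{A_i}\|_1$ --- crucially, every cell of $A_i$ now has its parent in $S_{i-1}$, so it was a genuine candidate that the greedy step compared and rejected.

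For the boundary mass I would compare level by level to the optimal support $T^*_i$. Cells in $A_i\setminus T^*_i$ carry $\bz(c)=\bz(c)-\by^*(c)$, which I charge directly to $\|\by^*-\bz\|_1$. For the remaining cells $A_i\cap T^*_i\subseteq V_i$ I would bound $\bz(c)\le\by^*(c)+|(\bz-\by^*)(c)|$ and control $\sum_{c\in A_i\cap T^*_i}\by^*(c)$ by a greedy exchange: at any level with $A_i\ne\varnothing$ we have $|S_i|=w\ge|T^*_i|$, hence $|B_i|\ge|V_i|$ for $B_i:=S_i\setminus T^*_i$, so I can inject $A_i\cap T^*_i$ into $B_i$; since every kept coordinate beats every rejected one in $\by'$-value, each matched pair satisfies $\by'(c'')\ge\by'(c)$. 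Converting between $\by'$ and $\bz$ on the two sides of each comparison (using that $\by^*$ vanishes on $B_i$) turns this into a bound by the error $|(\bz-\by^*)|$ on $B_i$ plus the noise $\tfrac{1}{2^i}|\bnu_i|$ on the coordinates of $S_i$ and of $V_i$. Because the sets $A_i\setminus T^*_i$, $A_i\cap T^*_i$, and $B_i$ are pairwise disjoint and disjoint across levels, all error contributions collapse into a single $\|\by^*-\bz\|_1$, and all noise into $O\!\left(\sum_i \tfrac{1}{2^i}\|\bnu_i|_{V_i\cup S_i}\|_1\right)$.

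Combining the three steps gives $\|\hby-\bz\|_1 \le 2\|\by^*-\bz\|_1 + O\!\left(\sum_i\tfrac{1}{2^i}\|\bnu_i|_{V_i\cup S_i}\|_1\right) \le 3\|\by^*-\bz\|_1 + O(\cdots)$, as claimed. I expect the main obstacle to be the third step: the greedy selects only among children of previously kept cells, so a cell of the optimal support can be missed because an \emph{ancestor} was dropped rather than because of its own value. The device that resolves this is the second step's reduction to maximal dropped cells (whose parents are kept), after which the exchange argument applies cleanly; the remaining care is purely in the counting inequality $|B_i|\ge|V_i|$ and the disjointness bookkeeping that prevents double-charging the error term.
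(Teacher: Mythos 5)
Your proof is correct and reaches the stated bound (indeed with constant $2$ in place of $3$ if the bookkeeping is done as noted below), but it takes a genuinely different route from the paper's at the key step. Both arguments share the same skeleton: split $\|\hby-\bP\bs\|_1$ into the noise on $S$ plus the clean mass $\|\bP\bs|_{\bar S}\|_1$, collapse each dropped subtree onto its root via a factor-$2$ geometric decay, and use the counting bound $|S_i\setminus T^*_i|\ge|V_i|$ to exchange rejected coordinates for kept ones. The difference is \emph{whose} subtrees get collapsed. The paper first passes from $\bP\bs|_{\bar S}$ to $\by^*|_{\bar S}$ via the triangle inequality and then invokes property 2 of $\cM_w$ (the decay condition on $\by^*$) to get $\|\by^*|_{\bar S}\|_1\le 2\sum_i\by^*(V_i)$, and then compares $\by'$-values on all of $V_i$ against $S_i\setminus T^*_i$. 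You instead collapse the subtrees of $\bP\bs$ itself using the exact identity $\bP\bs(\children(c))=\tfrac12\,\bP\bs(c)$ --- a property of the pyramidal transform, not of $\cM_w$ --- reducing $\|\bP\bs|_{\bar S}\|_1$ to the maximal dropped cells $A_i=T_i\setminus S_i$. This buys two things. First, property 2 of $\cM_w$ is never used in your argument; only $\supp(\by^*)\subseteq T^*$ with $T^*\in\cT_w$ matters. Second, and more substantively, every cell of $A_i$ is by construction a genuine greedy candidate (its parent lies in $S_{i-1}$), so the comparison $\by'(c'')\ge\by'(c)$ is actually licensed by the algorithm; the paper's inequality $\by'(V_i)\le\by'(S_i\setminus T^*_i)$ is asserted for all of $V_i$, including cells that were never candidates because an ancestor was already dropped, and your reduction to $A_i$ is precisely the device that makes this step airtight. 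One bookkeeping caution: in your third step you should bound $\bP\bs(c)$ for $c\in A_i\cap T^*_i$ \emph{directly} by $\by'(c)+\tfrac{1}{2^i}|\bnu_i(c)|\le\by'(c'')+\tfrac{1}{2^i}|\bnu_i(c)|$ and then convert $\by'(c'')$ into $|(\bP\bs-\by^*)(c'')|$ plus noise; if you instead detour through $\by^*(c)$, as your opening inequality $\bP\bs(c)\le\by^*(c)+|(\bP\bs-\by^*)(c)|$ suggests, you pay the error at the coordinate $c$ twice and the final constant becomes $4$ rather than $2$ --- still harmless for the downstream use in the EMD bound, but no longer matching the stated constant $3$.
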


\begin{proof}
For every $q \in T^* \setminus S$, let $R(q)$ be the highest ancestor of $q$ that does not belong to $S$. We have
\begin{align}
\|\by^*|_{\bar{S}}\|_1 
&= \sum_{q \in T^* \setminus S} \by^*(q) 
= \sum_{i \in [\ell]} \sum_{p \in V_i} \sum_{q \in R^{-1}(p)} \by^*(q) \nonumber \\
& \overset{(\diamondsuit)}{\leq} \sum_{i \in [\ell]} \sum_{p \in V_i} 2\by^*(p) 
= 2 \sum_{i \in [\ell]} \by^*(V_i), \label{eq:expand-decay}
\end{align}
where $(\diamondsuit)$ follows from the second property of $\cM_w$.

Next, consider the algorithm at the $i$th iteration and $p \in V_i$. Since $p$ was not picked, the following must hold for all $q \in S_i \setminus T^*_i$:
$\by'(p) \leq \by'(q)$.
Observe also that from $|S_i| = \max\{w, |C_{2^i}|\}$ and $|T^*_i| \leq \max\{w, |C_{2^i}|\}$, we also have $|S_i \setminus T^*_i| \geq |T^*_i \setminus S_i| = |V_i|$. Thus, we get
\begin{align} \label{eq:dropped-each-level}
\by'(V_i) \leq \by'(S_i \setminus T^*_i).
\end{align}

From this and~\eqref{eq:expand-decay}, we can further derive
\begin{align}
&\|\by^*|_{\bar{S}}\|_1 \nonumber \\
&\overset{\eqref{eq:expand-decay}}{\leq} 2 \left(\sum_{i \in [\ell]} (\by^*(V_i) - \bP\bs(V_i)) + \bP\bs(S_i \setminus T^*_i) \right) \nonumber \\ &\qquad 
+ \left(\sum_{i \in [\ell]} \bP\bs(V_i) - \bP\bs(\left(S_i \setminus T^*_i\right))\right) \nonumber \\
&\overset{(\square)}{\leq} 2\|\by^* - \bP\bs\|_1 + 2 \left(\sum_{i \in [\ell]} \bP\bs(V_i) - \bP\bs(\left(S_i \setminus T^*_i\right))\right) \nonumber \\
&\overset{(\bigtriangleup)}{\leq} 2\|\by^* - \bP\bs\|_1 + 2 \left(\sum_{i \in [\ell]} \frac{1}{2^i} \left\|\bnu_i|_{V_i \cup \left(S_i \setminus T^*_i\right)}\right\|_1\right) \nonumber \\ &\qquad
+ 2\left(\sum_{i \in [\ell]} \by'(V_i) - \by'(\left(S_i \setminus T^*_i\right))\right) \nonumber \\
& \overset{\eqref{eq:dropped-each-level}}{\leq} 2\|\by^* - \bP\bs\|_1 + 2 \left(\sum_{i \in [\ell]} \frac{1}{2^i} \left\|\bnu_i|_{V_i \cup \left(S_i \setminus T^*_i\right)}\right\|_1\right), \label{eq:optimal-norm-outside} 
\end{align}
where $(\square)$ follows from $\supp(\by^*) \subseteq T^* = \bigcup_{i \in [\ell]} T^*_i$ and $(\bigtriangleup)$ follows from how $\by'$ is calculated.
%
Finally, from $\hby = \by'|_S$ and how each entry of $\by'$ is computed, we have
\begin{align*}
& \|\hby - \bP\bs\|_1 
= \|\by'|_{S} - \bP\bs|_{S}\|_1 + \|\bP\bs_{\bar{S}}\|_1 \\
&\leq \left(\sum_{i \in [\ell]} \frac{1}{2^i} \|\bnu_i|_{S_i}\|_1\right) + \|\by^*|_{\bar{S}}\|_1 + \|\by^*|_{\bar{S}} - \bP\bs|_{\bar{S}}\|_1 \\
&\overset{\eqref{eq:optimal-norm-outside}}{\leq} \left(\sum_{i \in [\ell]} \frac{1}{2^i} \|\bnu_i|_{S_i}\|_1\right) \\
& \qquad + \left(2\|\by^* - \bP\bs\|_1 + 2 \left(\sum_{i \in [\ell]} \frac{1}{2^i} \left\|\bnu_i|_{V_i \cup \left(S_i \setminus T^*_i\right)}\right\|_1\right)\right) \\
& \qquad + \|\by^* - \bP\bs\|_1 \\
&\leq 3\|\by^* - \bP\bs\|_1 + 3\left(\sum_{i \in [\ell]} \frac{1}{2^i} \left\|\bnu_i|_{V_i \cup S_i}\right\|_1\right). \qedhere
\end{align*}
\end{proof}

\paragraph{Phase II: From $\ell_1$ to EMD.}
We now proceed to bound the EMD error. The main lemma is stated below.
%
\begin{lemma} \label{lem:final-guarantee-deterministic}
Let the notation be as in \Cref{lem:l1-recovery-tree}. For any $\eta' \in (0, 1)$, by setting $w = O(k/(\eta')^2)$, the output $\hbs$ of \textsc{Reconstruct} satisfies
$
\textstyle
\|\bs - \bs^*\|_{\emd} \leq \eta' \cdot \min_{k\text{-sparse } \bs'} \|\bs - \bs'\|_{\emd} + O\left(\sum_{i \in [\ell]} \frac{1}{2^i} \left\|\bnu_i|_{V_i \cup S_i}\right\|_1\right).
$
\end{lemma}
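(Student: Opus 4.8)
The plan is to chain together three ingredients: the optimality of the linear program defining $\hbs$, the $\ell_1$ recovery bound of \Cref{lem:l1-recovery-tree}, and two structural facts about the scaled pyramidal transform $\bP$ imported from~\cite{indyk2011k}. Throughout write $\Opt_k := \min_{k\text{-sparse }\bs'}\|\bs - \bs'\|_{\emd}$; the quantity to be bounded is the EMD error between the true aggregate $\bs$ and the reconstructed output $\hbs$ of \textsc{Reconstruct}.

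First I would move from the EMD norm back into the pyramidal $\ell_1$ domain. The relevant property of $\bP$ is that its $\ell_1$ norm dominates the EMD norm: there is an absolute constant $C_1$ with $\|\bw\|_{\emd}\le C_1\|\bP\bw\|_1$ for all $\bw$, since $\|\bP\bw\|_1 = \sum_{i} 2^{-i}\sum_{c\in C_{2^i}}|\bw(c)|$ is exactly the hierarchical (quadtree) Wasserstein distance, which deterministically upper-bounds the true EMD. Applying this with $\bw=\bs-\hbs$ reduces the goal to bounding $\|\bP\bs-\bP\hbs\|_1$.

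Next I would control $\|\bP\bs-\bP\hbs\|_1$ using the LP optimality on Line~\ref{line:lp} together with \Cref{lem:l1-recovery-tree}. Since $\bs\ge\bzero$ is feasible for the program $\argmin_{\bs'\ge\bzero}\|\hby-\bP\bs'\|_1$ and $\hbs$ is its minimizer, $\|\hby-\bP\hbs\|_1\le\|\hby-\bP\bs\|_1$, so by the triangle inequality
\begin{align*}
\|\bP\bs-\bP\hbs\|_1 \le \|\bP\bs-\hby\|_1 + \|\hby-\bP\hbs\|_1 \le 2\|\bP\bs-\hby\|_1.
\end{align*}
Then \Cref{lem:l1-recovery-tree} bounds the right-hand side by $6\|\by^*-\bP\bs\|_1 + O\!\left(\sum_{i\in[\ell]}\frac{1}{2^i}\|\bnu_i|_{V_i\cup S_i}\|_1\right)$. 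Combining with the first step, the noise term already appears in the desired form (the constant $C_1$ absorbed into the $O(\cdot)$), so it only remains to bound the deterministic, noise-free model-approximation error $\|\by^*-\bP\bs\|_1 = \min_{\by\in\cM_w}\|\bP\bs-\by\|_1$ by $\eta'\cdot\Opt_k$.

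This last step is the crux and the place where the subconstant multiplicative factor $\eta'$ is generated. The relevant fact is that widening the admissible trees improves the model approximation: for $w\ge k$ one has $\min_{\by\in\cM_w}\|\bP\bs-\by\|_1 \le O(\sqrt{k/w})\cdot\Opt_k$. (One first checks that for any nonnegative $k$-sparse $\bs'$ the vector $\bP\bs'$ already lies in $\cM_w$, since its support forms a tree of width $\le k\le w$ and the $2^{-i}$ scaling forces the decay condition $\by(p)=2\,\by(\children(p))$ to hold with equality by conservation of mass; the improvement of the $\sqrt{k/w}$ factor over the naive estimate $\|\bP(\bs-\bs')\|_1$ comes from the finer Indyk--Price construction, which spends the extra width budget to match $\bs$ at fine levels where the optimal transport plan has not yet displaced mass.) Choosing $w=\Theta(k/(\eta')^2)$ with the hidden constant large enough to dominate $C_1$ and the factor $6$ above yields $\min_{\by\in\cM_w}\|\bP\bs-\by\|_1 \le \eta'\cdot\Opt_k$, finishing the bound. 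I expect this model-approximation inequality to be the main obstacle: the other steps are short manipulations, whereas establishing the $\sqrt{k/w}$ trade-off requires the tree-surgery argument of~\cite{indyk2011k} adapted to the scaled pyramidal transform, and it is precisely this trade-off that forces the claimed $w=O(k/(\eta')^2)$ dependence.
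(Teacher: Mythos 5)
Your proposal is correct and follows essentially the same route as the paper: reduce EMD error to $\|\bP(\bs-\hbs)\|_1$ via the EMD-to-$\ell_1$ expansion property of the pyramidal transform, use LP optimality plus the triangle inequality to get the factor $2\|\hby-\bP\bs\|_1$, invoke Lemma~\ref{lem:l1-recovery-tree}, and finish with the Indyk--Price model-alignment bound $\min_{\by\in\cM_w}\|\bP\bs-\by\|_1\le O(\sqrt{k/w})\cdot\Opt_k$, which the paper imports directly as Lemma~\ref{lem:model-alignment} (with $\eta=\eta'/6$ and $w=O(k/\eta^2)$) rather than reproving. Your identification of that model-approximation trade-off as the crux, and your sanity check that $\bP\bs'$ satisfies the decay condition with equality for nonnegative $\bs'$, are both accurate.
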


Similar to the proof of~\citet{indyk2011k}, our proof of \Cref{lem:final-guarantee-deterministic} converts the recovery guarantee under $\ell_1$ metric to that under EMD; to do this, we need the following two statements from prior work. 
\begin{lemma}[Model-Alignment of EMD with $\cM_w$~\cite{indyk2011k}] \label{lem:model-alignment}
For any $\bx \in \R_{\geq 0}^{G_\Delta}, k \in \N$ and $\eta \in (0, 1)$, there exist $w = O(k / \eta^2)$ and $\by^* \in \cM_w$ such that $\|\by^* - \bP\bs\|_1 \leq \eta \cdot \min_{k\text{-sparse } \bx'} \|\bx - \bx'\|_{\emd}.$
\end{lemma}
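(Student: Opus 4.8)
The plan is to produce $\by^*$ by truncating the transform $\bP\bs$ itself to a well-chosen subtree, rather than by transforming a $k$-sparse approximant. Two observations make this work. First, $\bP\bs$ already satisfies the decay property defining $\cM_w$: for $p \in C_{2^{i-1}}$ we have $(\bP\bs)(p) = 2^{-(i-1)}\bs(p)$, and since the four children cells partition $p$, $\sum_{c\in\children(p)}(\bP\bs)(c) = 2^{-i}\bs(p) = \tfrac12(\bP\bs)(p)$, so $(\bP\bs)(p) = 2(\bP\bs)(\children(p))$. Restricting $\bP\bs$ to any ancestor-closed tree $T$ can only drop children, hence $\by^* := \bP\bs|_T$ is nonnegative and still obeys $\by^*(p)\ge 2\by^*(\children(p))$. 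Thus $\by^*\in\cM_w$ as soon as $T$ is a valid tree with at most $w$ nodes per level, and $\|\by^*-\bP\bs\|_1 = \|\bP\bs|_{\bar T}\|_1$. The entire lemma therefore reduces to choosing $T$ so that the tail mass $\|\bP\bs|_{\bar T}\|_1$ is at most $\eta\cdot\mathrm{OPT}$, where $\mathrm{OPT} := \min_{k\text{-sparse }\bx'}\|\bs-\bx'\|_{\emd}$.

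\textbf{Choice of tree.} Let $\bx^*$ be a $k$-sparse vector attaining $\mathrm{OPT}$, supported on points $q_1,\dots,q_k$, and let $\gamma$ (together with the slack vector $\br$ from the EMD-norm decomposition $\bs-\bx^* = \bp-\bq+\br$) realize this optimum. Set $m := \lceil\sqrt{w/k}\rceil$ and let $T$ consist, at every level $i$, of the $m\times m$ block of level-$i$ cells centered on the cell containing each $q_j$. This uses at most $k\cdot m^2 = O(w)$ cells per level, and the blocks nest across levels (the parent of a level-$i$ cell within distance $\tfrac m2 2^{-i}$ of $q_j$ lies in the level-$(i-1)$ block), so $T$ is genuinely ancestor-closed. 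The key geometric feature is that any point $p$ whose level-$i$ cell lies outside $T$ is at $\ell_1$-distance $\gtrsim m\,2^{-i}$ from every $q_j$.

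\textbf{Bounding the tail by charging to transport cost.} Expand $\|\bP\bs|_{\bar T}\|_1 = \sum_i 2^{-i}\sum_{c\in C_{2^i}\setminus T}\bs(c)$ and charge each unit of $\bs$-mass through $\gamma$. For mass $\delta$ sent from $p$ to $q_j$ over distance $d=\|p-q_j\|_1$, the level-$i$ cell of $p$ lies outside $T$ only for $i>i^*$ with $2^{-i^*}\asymp d/m = d\sqrt{k/w}$, and at each such level this mass contributes $2^{-i}\delta$. Since these weights form a geometric series, $\sum_{i>i^*}2^{-i}\delta = O(2^{-i^*})\delta = O(\sqrt{k/w})\cdot\delta d$, i.e.\ exactly $O(\sqrt{k/w})$ times this unit's transport cost, with \emph{no} $\log\Delta$ factor. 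Mass removed via the slack $\br$ lies within the diameter $\alpha$ of any $q_j$ and is charged identically, contributing $O(\sqrt{k/w})\,\alpha\|\br\|$. Summing over the whole matching yields $\|\bP\bs|_{\bar T}\|_1 = O(\sqrt{k/w})\,\mathrm{OPT}$, and choosing $w = O(k/\eta^2)$ with a large enough constant makes this at most $\eta\cdot\mathrm{OPT}$.

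\textbf{Main obstacle.} The delicate point is the charging step: a single unit of transport crosses cell boundaries at many levels, and a naive per-level bound would lose a factor of $\ell=\log\Delta$. The fix is to assign each unit to its critical level $i^*$ and exploit that the $2^{-i}$ weights telescope to $O(2^{-i^*})\asymp d\sqrt{k/w}$; obtaining this clean $\sqrt{k/w}$ dependence (hence $w=O(k/\eta^2)$ rather than $O(k\log^2\Delta/\eta^2)$) is the heart of the argument. Secondary care is needed to confirm the $m\times m$ blocks nest into a legitimate tree under integer rounding and to route the mass-imbalance slack $\alpha\|\br\|$ through the same scheme.
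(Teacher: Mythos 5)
The paper does not actually prove this lemma: it is imported verbatim from~\cite{indyk2011k} (and the statement even carries a typo, $\bx$ versus $\bs$, which you correctly read as $\bx=\bs$). Your reconstruction is essentially the Indyk--Price argument itself: truncate $\bP\bs$ to an ancestor-closed tree of $O(\sqrt{w/k})$-width blocks around the optimal $k$-sparse support (noting $\bP\bs$ satisfies the decay property with equality, so any restriction stays in $\cM_w$), and charge the tail $\sum_i 2^{-i}$ geometrically at each unit's critical level to its transport cost, giving $O(\sqrt{k/w})\cdot\mathrm{OPT}$ with no $\log\Delta$ loss; this is correct and is the right proof. The one step you elide is the claim that the optimal EMD-norm decomposition $\bs-\bx^*=\bp-\bq+\br$ transports $\bp$-mass \emph{to the points $q_j$}: nothing in the definition forces $\supp(\bq)\subseteq\supp(\bx^*)$, so the charging ``mass $\delta$ sent from $p$ to $q_j$'' needs a short massaging argument --- remove common mass so $\bp\wedge\bq=0$, then for any $\bq$-mass at $z\notin\supp(\bx^*)$ observe $\br(z)\geq\bq(z)$ and reroute that mass into $\br$ at the sources, which does not increase $\emd(\bp,\bq)+\alpha\|\br\|_1$ --- after which your per-unit bound $O(d/m)$ (and $O(\alpha/m)$ for slack) sums to exactly the claimed $O(\sqrt{k/w})\,\mathrm{OPT}$.
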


\begin{lemma}[EMD-to-$\ell_1$ Expansion~\cite{indyk2003fast}] \label{lem:emd-to-l1-expansion}
For all $\bz \in \R^{G_{\Delta}}$, $\|\bz\|_{\emd} \leq \|\bP\bz\|_1.$
\end{lemma}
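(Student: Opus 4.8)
Since $\|\bz\|_{\emd}$ is defined as a \emph{minimum} over decompositions $\bz = \bp - \bq + \br$ with $\bp,\bq \geq \bzero$ and $\|\bp\|_1 = \|\bq\|_1$, it suffices to exhibit one feasible triple $(\bp,\bq,\br)$ whose cost $\emd(\bp,\bq) + \alpha\|\br\|_1$ is at most $\|\bP\bz\|_1$. The natural vehicle is the \emph{quadtree metric} $d_T$ that the pyramidal transform implicitly defines: take the rooted tree on $\bigcup_{i=0}^\ell C_{2^i}$ given by the child relation, and assign the edge joining a level-$i$ cell to its parent the weight $2^{-i}$. Writing $\bz(c) := \sum_{p \in c}\bz(p)$, the guiding observation is the identity $\|\bP\bz\|_1 = \sum_{i=0}^\ell \sum_{c \in C_{2^i}} 2^{-i}\,|\bz(c)|$; that is, $\|\bP\bz\|_1$ is exactly the cost of the canonical flow on this tree that carries the net imbalance $\bz(c)$ across the edge above each cell $c$. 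The plan is to read this tree flow back as an actual transport plan on $G_\Delta$ (plus a residual when the masses fail to balance) and to charge its cost against these weighted imbalances.

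\textbf{The geometric core (balanced case).} First suppose $\sum_p \bz(p) = 0$, and set $\bp = \bz_+$, $\bq = \bz_-$, so that $\|\bp\|_1 = \|\bq\|_1$ and $\br = \bzero$; the only task is to bound $\emd(\bp,\bq)$. The crux is the purely geometric comparison $d_T(p,q) \geq \|p - q\|_1$ for all grid points $p,q \in G_\Delta$: if the smallest common cell of $p,q$ sits at level $i$, then $d_T(p,q) = 2\sum_{j=i+1}^\ell 2^{-j} = 2\,(2^{-i} - 2^{-\ell})$, whereas $p$ and $q$ lie in a common grid-aligned cell of side $2^{-i}$, forcing $\|p-q\|_1 \leq 2\,(2^{-i} - 2^{-\ell})$ (tight at opposite corners). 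Hence every transport plan has $\ell_1$-cost at most its $d_T$-cost, so $\emd(\bp,\bq) \leq \emd_{d_T}(\bp,\bq)$. On a tree the optimal transport is forced: the flow across the edge above a cell $c$ must equal $|\bp(c)-\bq(c)| = |\bz(c)|$, giving $\emd_{d_T}(\bp,\bq) = \sum_{c \neq \text{root}} 2^{-\mathrm{lvl}(c)}|\bz(c)|$. The level-$0$ term $|\bz(G_\Delta)|$ vanishes by balance, so this equals $\|\bP\bz\|_1$, settling the balanced case; this is the essential mathematical content.

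\textbf{Reducing the general case.} For arbitrary $\bz$, let $t := \sum_p \bz(p)$ and peel the net mass into the residual: choose $\br$ carrying $t$ and set $\bp - \bq := \bz - \br$, which is balanced and handled as above. The extra cost is $\alpha\|\br\|_1 = \alpha|t|$, which I intend to charge against the coarsest pyramid levels: creating or deleting a unit of mass is no more expensive than transporting it in from the farthest grid point, whose distance is the diameter $\alpha$, and this is precisely the role played by the top edge of the tree (distance from the root down to a leaf). Assembling the balanced transport with the residual then gives $\|\bz\|_{\emd} \leq \emd(\bp,\bq) + \alpha\|\br\|_1 \leq \|\bP\bz\|_1$.

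\textbf{Main obstacle.} I expect the delicate point to be the geometric domination $d_T(p,q) \geq \|p-q\|_1$ \emph{with constant exactly $1$} rather than $O(1)$: one must check that the per-level weight $2^{-i}$ makes the tree distance dominate $\ell_1$ on the nose (it is tight precisely at opposite corners of a cell). The second source of care is the residual accounting in the unbalanced case, where one must confirm that the coarsest-level weights together with the diameter $\alpha$ absorb the cost of the net mass $t$; this is where the choice $\alpha = 2$ (the diameter of $[0,1)^2$) enters, and it is the only place where the bound is not simply the forced-flow computation on the tree.
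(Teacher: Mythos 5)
Your balanced-case argument is correct and complete, and it is the standard quadtree argument behind this lemma (which the paper itself does not prove; it is imported from \cite{indyk2003fast}): the identity $\|\bP\bz\|_1 = \sum_{i=0}^{\ell} 2^{-i} \sum_{c \in C_{2^i}} |\bz(c)|$, the forced flow on the tree, and the pointwise domination $d_T(p,q) = 2(2^{-i} - 2^{-\ell}) \geq \|p - q\|_1$ with constant exactly $1$ all check out, including your tightness verification at opposite corners of a cell.

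The genuine gap is the unbalanced case, and it cannot be closed as you propose, because with the paper's rounded-up constant $\alpha = 2$ the statement is literally false at the $2^{-\ell}$ scale. Take $\bz = \bone_{x_0}$, a unit mass at a single grid point. Any feasible decomposition $\bz = \bp - \bq + \br$ with $\|\bp\|_1 = \|\bq\|_1$ forces $\|\br\|_1 \geq |\sum_p \bz(p)| = 1$, hence $\|\bz\|_{\emd} \geq \alpha = 2$; yet $\|\bP\bz\|_1 = \sum_{i=0}^{\ell} 2^{-i} = 2 - 2^{-\ell} < 2$. Your own charging ledger exposes exactly this shortfall: per unit of net mass $t$, the level-$0$ row supplies $1$ and the root-to-leaf tree path supplies at most $\sum_{j=1}^{\ell} 2^{-j} = 1 - 2^{-\ell}$, for a total budget of $2 - 2^{-\ell} < \alpha$ --- so the "delicate point" you flagged is in fact a genuine deficit, not just a point needing care. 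The inequality does hold verbatim if $\alpha$ is taken to be the exact $\ell_1$ diameter of $G_\Delta$, namely $2(1 - 2^{-\ell})$ (equivalently, if the level-$0$ row of $\bP$ were scaled by $\alpha$ rather than by $1$, the charging would close trivially); the $2^{-\ell}|\sum_p \bz(p)|$ slack is immaterial for every use of the lemma in the paper. Even under the corrected constant, your step ``choose $\br$ carrying $t$'' is underspecified in a way that matters: an arbitrary choice (e.g., a point mass at an arbitrary location $x^*$) can make the cell imbalances $|(\bz - \br)(c)|$ exceed $|\bz(c)|$ and blow the edge-flow budget. One must peel the excess monotonically down the tree, maintaining $0 \leq \br(c) \leq \max\{\bz(c), 0\}$ at every cell (feasible since the positive parts of the children's imbalances always sum to at least the parent's allocation); then every level's imbalance drops by exactly $t$, realizing the slack $t \sum_{j \geq 1} 2^{-j} = t(1 - 2^{-\ell})$ that, together with the level-$0$ contribution $t$, covers $\alpha t = (2 - 2^{1-\ell})t$ with room to spare.
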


\begin{proof}[Proof of \Cref{lem:final-guarantee-deterministic}]
Recall that we use $\bs$ to denote the true sum $\sum_{i=1}^n \bp_i$. We set $\eta = \eta' / 6$ and let $w = O(k/\eta^2) = O(k/(\eta')^2)$ be as in \Cref{lem:model-alignment}, which ensures that there exists $\by^* \in \cM_w$ with
\begin{align} \label{eq:tmp1}
\|\by^* - \bP\bs\|_1 \leq \eta \cdot \min_{k\text{-sparse } \bs'} \|\bs - \bs'\|_{\emd}.
\end{align}
Thus, using \Cref{lem:emd-to-l1-expansion}, we can derive
\begin{align*}
\|\bs - \bs^*\|_{\emd} &\leq \|\bP(\bs - \bs^*)\|_1 \\
(\text{triangle inequality}) &\leq \|\hby - \bP\bs\|_1 + \|\hby - \bP\bs^*\|_1 \\
(\text{how } \bs^* \text{ is computed}) &\leq 2\|\hby - \bP\bs\|_1 \\
(\text{\Cref{lem:l1-recovery-tree}}) &\leq 6\|\by^* - \bP\bs\|_1 \\
& \quad + O\left(\sum_{i \in [\ell]} \frac{1}{2^i} \left\|\bnu_i|_{V_i \cup S_i}\right\|_1\right) \\
&\overset{\eqref{eq:tmp1}}{\leq} \eta' \cdot \min_{k\text{-sparse } \bs'} \|\bs - \bs'\|_{\emd} \\
& \quad + O\left(\sum_{i \in [\ell]} \frac{1}{2^i} \left\|\bnu_i|_{V_i \cup S_i}\right\|_1\right). \qedhere
\end{align*}
\end{proof}

\begin{figure*}[h!]
\centering
\includegraphics[trim={5cm 0 0 0},clip,scale=0.26]{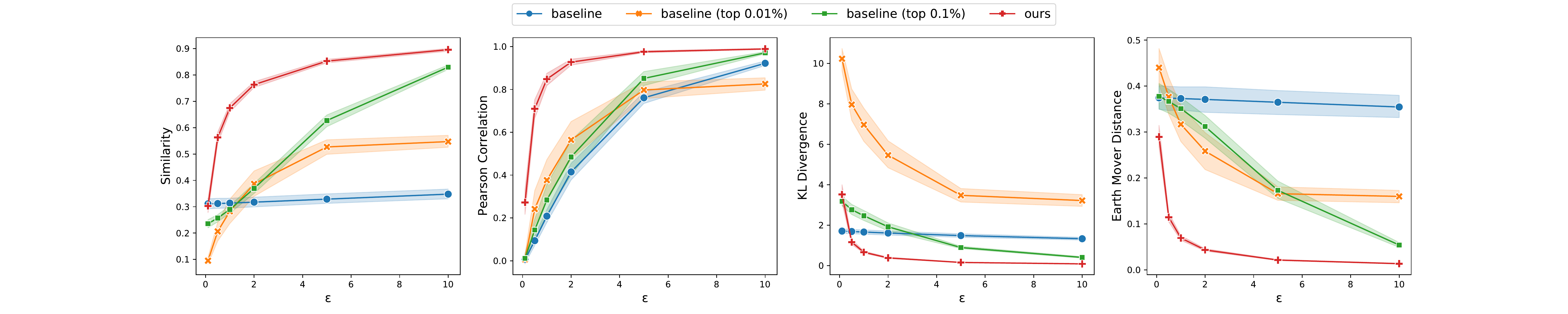}
\caption{Metrics averaged over 60 runs when varying $\eps$. Shaded areas indicate 95\% confidence interval.
\label{fig:comparison-result}}
\end{figure*}

\begin{figure}[h!]
\centering
\includegraphics[width=0.4\textwidth]{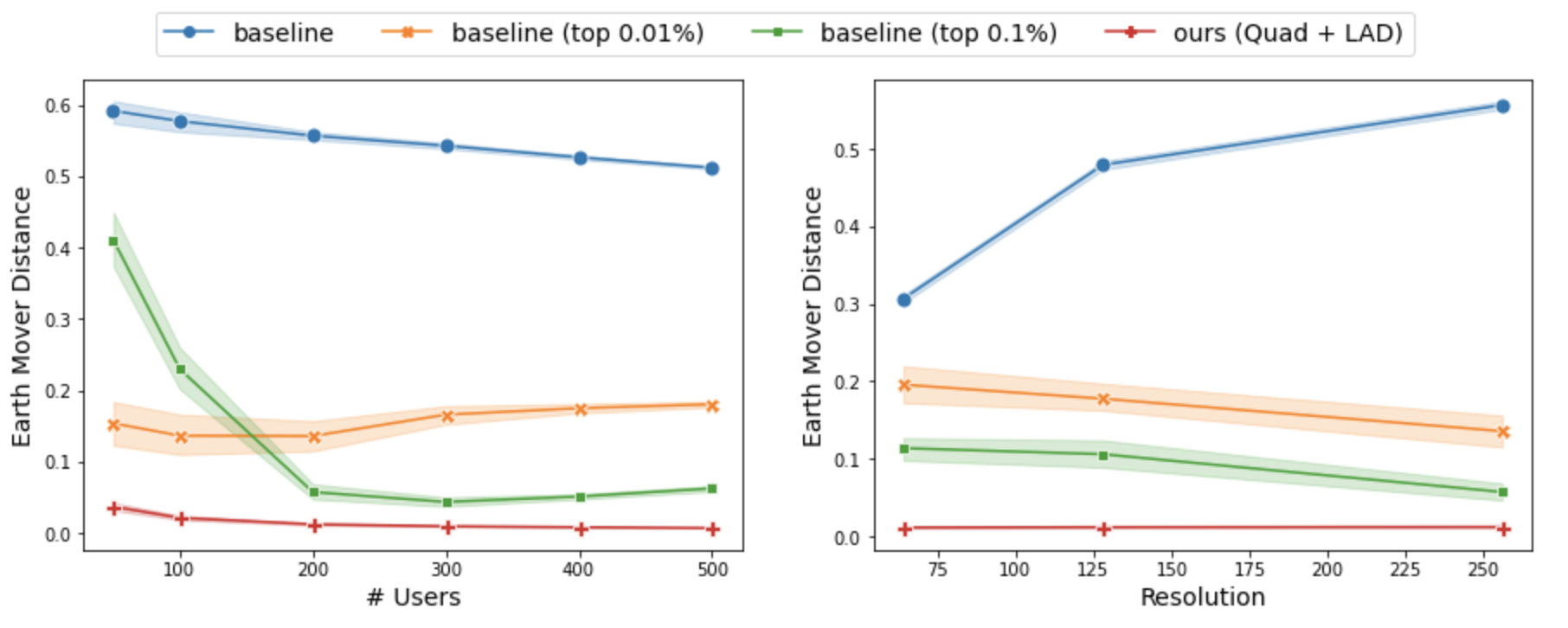}
\caption{Effect of the \#users and grid resolution on  EMD. 
\label{fig:user-resolution}}
\end{figure}

\paragraph{Finishing the Proof.}
We now select the privacy parameters and complete the proof of \Cref{thm:main-ub}.
\begin{proof}[Proof of \Cref{thm:main-ub}]
Let $w = O(k/(\eta')^2)$ be as in \Cref{lem:final-guarantee-deterministic} with $\eta' = \lambda / 4$, and let $q = \lfloor \log_2 \sqrt{w} \rfloor$. Let $\gamma = 0.8$ be the ``decay rate'' for $\eps_i$'s, and let $Z = \sum_{i=0}^{\ell} \gamma^{|i - q|} \leq O(1)$ be the normalization factor. We run \Cref{alg:dp-aggregation-full} with $\eps_i = \gamma^{|i - q|} \cdot \eps / Z$. 

\paragraph{Privacy Analysis.} We can view the $i$th iteration of the algorithm as releasing $2^i \by'_i = \bP_i\bs + \bnu_i$. Since each $\bp_i$ has $\ell_1$ norm at most one, its sensitivity with respect to $\bP_i\bs$ is at most one; thus, \Cref{lem:laplace-dp} implies that the $i$th iteration is $\eps_i$-DP. As a result, by basic composition theorem of DP, we can conclude that releasing all of $\by'_0, \dots, \by'_\ell$ is $(\eps_0 + \cdots + \eps_\ell)$-DP. Since the reconstruction is simply a post-processing step, the post-processing property of DP ensures that \Cref{alg:dp-aggregation-full} is $(\eps_0 + \cdots + \eps_\ell)$-DP. Finally, observe that by definition of $\eps_i$'s, we have $\eps_0 + \cdots + \eps_\ell = \eps$ as desired.

\paragraph{Utility Analysis.} Applying \Cref{lem:final-guarantee-deterministic}, we can conclude that $\|\bs - \bs^*\|_{\emd} \leq \eta' \cdot \min_{k\text{-sparse } \bs'} \|\bs - \bs'\|_{\emd} + \xi$,
where $\xi = O\left(\sum_{i \in [\ell]} \frac{1}{2^i}\left\|\bnu_i|_{V_i \cup S_i}\right\|_1\right)$.
Recall that each of $V_i, S_i$'s is of size at most $\max\{w, 2^{2i}\}$ (because of definition of $\cM_w$ and the fact that $m_i = |C_{2^i}| = 2^{2i}$), and that each entry of $\bnu_i$ is sampled from $\Lap(1/\eps_i)$.  As a result, we have
\begin{align*}
\textstyle
\E[\xi] & \leq O\left(\sum_{i \in [\ell]} \frac{1}{2^i} \cdot \max\{w, 2^{2i}\} \cdot \frac{1}{\eps_i} \right) \\
&= O\left(\sum_{i \in [q]} \frac{2^i}{\gamma^{q - i} \eps}\right) + O\left(\sum_{i \in \{q + 1, \dots, \ell\}} \frac{k}{\lambda^2} \cdot \frac{1}{2^i \gamma^{i - q} \eps}\right)  \\
& = O\left(\frac{2^q}{\eps}\right) + O\left(\frac{k}{\lambda^2} \cdot \frac{1}{2^q} \cdot \frac{1}{\eps}\right)
= O\left(\frac{\sqrt{k}}{\lambda \eps}\right),
\end{align*}
where the last bound follows from our choice of $\gamma > 0.5$ and $q = \lfloor \log_2 \sqrt{w} \rfloor$.
Hence, by Markov's inequality, w.p. 0.99, we have
$
\textstyle
\|\bs - \bs^*\|_{\emd} \leq \eta' \cdot \min_{k\text{-sparse } \bs'} \|\bs - \bs'\|_{\emd} + 100\E[\xi] = \eta' \cdot \min_{k\text{-sparse } \bs'} \|\bs - \bs'\|_{\emd} + O\left(\frac{\sqrt{k}}{\lambda \eps}\right).
$
Finally, applying \Cref{lem:err-from-normalization} concludes the proof.
\end{proof}

\begin{figure*}[h!]
\centering
\includegraphics[trim={7.5cm 0cm 7.5cm 0.9cm},clip,scale=0.21]{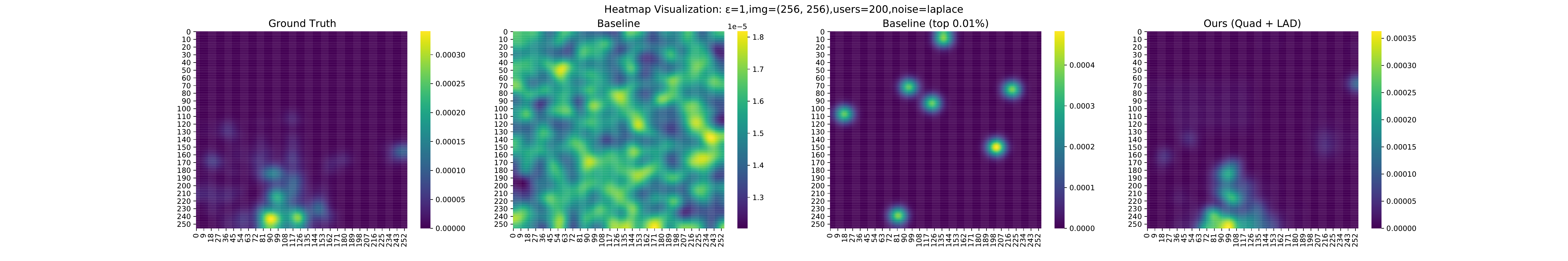}
\includegraphics[trim={7.5cm 0cm 7.5cm 0.9cm},clip,scale=0.21]{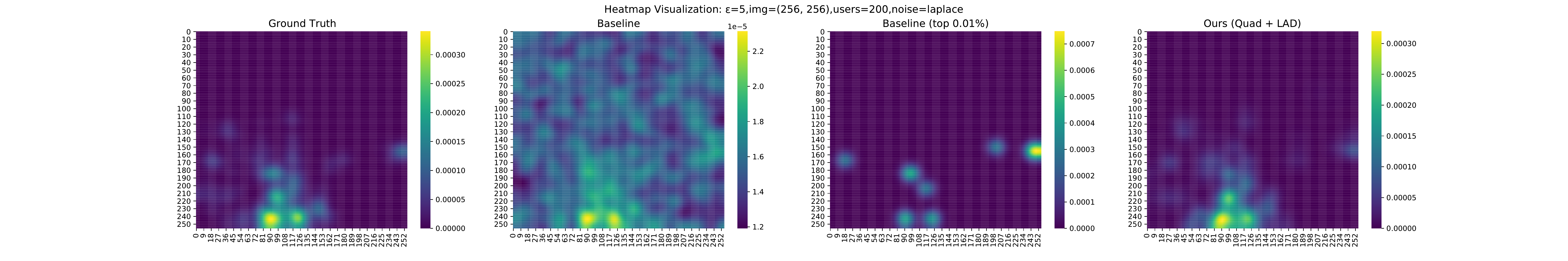}
\caption{Example visualization of different algorithms for $\eps=1$ (top) and $\eps=5$ (bottom). The algorithms from left to right are: original heatmap (no privacy), baseline, baseline with top 0.01\% and our algorithm. \label{fig:visualization}}
\end{figure*}

\section{Experiments}
\label{sec:exp}

In this section, we study the performance of our algorithms on real-world datasets. 
\ifarxiv
Additional experiments on other datasets, including the Salicon image saliency dataset~\cite{jiang2015salicon} and a synthetic dataset, can be found in Appendix~\ref{app:exp}.
\fi

\newcommand{\gowalla}{\textsc{Gowalla}\xspace}
\newcommand{\brightkite}{\textsc{Brightkite}\xspace}

\paragraph{Implementation Details.}
We implement \Cref{alg:dp-aggregation-full} with a minor modification: we do not measure at the level $i < q = \lfloor \log\sqrt{w} \rfloor$. In other words, we start directly at the lowest level for which the number of grid cells is at most $\sqrt{w}$. It is possible to adjust the proof to show that, even with this modification, the error remains $O_\eps(\sqrt{k})$. Apart from this, the algorithm is exactly the same as presented earlier. We note that the linear program at the end of \Cref{alg:dp-aggregation-reconstruction} can be formulated so that the number of variables is only $O(w \ell)$; the reason is that we only need one variable per cell that is left out at each stage. This allows us to solve it efficiently even when the resolution $\Delta = 2^\ell$ is large.

As for our parameters, we use the decay rate $\gamma = 1/\sqrt{2}$, which is obtained from minimizing the second error term in the proof of \Cref{thm:main-ub} as $\ell \to \infty$\footnote{When $w$ (and thus $q$) is fixed, the second error term is proportional to $Z \cdot \sum_{i=0}^{\ell - q - 1} \frac{1}{(2\gamma)^i}$ which converges to $\frac{1}{(1 - \gamma)(1 - 0.5/\gamma)}$ as $\ell \to \infty$. The latter term is minimized when $\gamma = 1/\sqrt{2}$}. We use $w = 20$ in our experiments, which turns out to work well already for datasets we consider. We refrain from tuning parameters further since a privacy analysis of the tuning step has to be taken into account if we want to be completely rigorous. (See, e.g.,~\cite{LT19} for a formal treatment.)

\paragraph{Datasets.}
We use two datasets available at \url{snap.stanford.edu} to generate the input distribution for users.  The first dataset%
\footnote{Available at \url{http://snap.stanford.edu/data/loc-Gowalla.html}}, called \gowalla, consists of location check-ins by users of the location-based social network Gowalla.  Each record consists of, among other things, an anonymized user id together with the latitude (lat) and longitude (lon) of the check-in and a timestamp.  We filtered this dataset to consider only check-ins roughly in the continental US (i.e., lon $\in (-135, -60)$ and lat $\in (0, 50)$) for the month of January 2010; this resulted in 196,071 check-ins corresponding to 10,196 users.  
The second dataset%
\footnote{Available at \url{http://snap.stanford.edu/data/loc-Brightkite.html}}, called \brightkite, also contains check-ins from a different and now defunct location-based social network Brightkite; each record is similar to \gowalla.  Once again, we filtered this dataset to consider only check-ins in the continental US for the months of November and December 2008; this resulted in 304,608 check-ins corresponding to 10,177 users.  

For each of these datasets, we partition the whole area into a $300 \times 300$ grid. We then took the top 30 cells (in both datasets combined) that have the most check-ins.  (Each of the 30 cells is mostly around some city like New York, Austin, etc, and has check-ins from at least 200 unique users).  We then consider each cell, partition into $\Delta \times \Delta$ subgrids and snap each check-in to one of these subgrids.  

\paragraph{Metrics.}
To evaluate the quality of an output heatmap $\hbh$ compared to the true heatmap $\bh$, we use the following commonly used metrics: Similarity, Pearson coefficient, KL-divergence, and EMD. (See, e.g.,~\cite{BylinskiiJOTD19} for detailed discussions of these metrics.) We note that the first two metrics should \emph{increase} as $\hbh, \bh$ are more similar, whereas the latter two should \emph{decrease}.

\paragraph{Baselines.}
We consider as a baseline an algorithm recently proposed in~\cite{liu2019differential},\footnote{We remark that~\cite{liu2019differential} also propose using the Gaussian mechanism. However, this algorithm does \emph{not} satisfy $\eps$-DP. Moreover, even when considering $(\eps, \delta)$-DP for moderate value of $\delta$ (e.g., $\delta = 10^{-3}$), the Gaussian mechanism will still add more noise in expectation than the Laplace mechanism.} where we simply add Laplace noise to each subgrid cell of the sum $\bs$, zero out any negative cells, and produce the heatmap from this noisy aggregate. We also consider a ``thresholding'' variant of this baseline that is more suited to sparse data: only keep top $t\%$ of the cell values after noising (and zero out the rest). 

\subsection{Results}
\label{subsec:exp-results-main}
In the first set of experiments, we fix $\Delta = 256$.
For each $\eps \in \{0.1, 0.5, 1, 2, 5, 10\}$, we run our algorithms together with the baseline and its variants on all 30 cells, with 2 trials for each cell. In each trial, we sample a set of 200 users and run all the algorithms; we then compute the distance metrics between the true heatmap and the estimated heatmap. The average of these metrics over the 60 runs is presented in~\Cref{fig:comparison-result}, together with the 95\% confidence interval. As can be seen in the figure, the baseline has rather poor performance across all metrics, even for large $\eps = 10$. We experiment with several values of $t$ for the thresholding variant, which yields a significant improvement. Despite this, we still  observe an advantage of our algorithm consistently across all metrics. These improvements are especially significant when $\eps$ is not too large or too small (i.e., $0.2 \leq \eps \leq 5$).

In the second set of experiments, we study the effect of varying the number $n$ of users. By fixing a single cell (with $>$ 500 users) and $\eps$, we sweep $n \in \{50, 100, 200, 300, 400, 500\}$ users. For each value of $n$, we run 10 trials and average their results. As predicted by theory, our algorithms and the original baseline perform better as $n$ increases. However, the behavior of the thresholding variants of the baseline are less predictable, and sometimes the performance degrades with a larger number of users. It seems plausible that a larger number of users cause an increase in the sparsity, which after some point makes the simple thresholding approach unsuited for the data.

We also run another set of experiments 
where we fix a single cell and $\eps$, and vary the resolution $\Delta \in \{64, 128, 256\}$. In agreement with theory, our algorithm's utility remains nearly constant for the entire range of $\Delta$. On the other hand, the original baseline suffers across all metrics as $\Delta$ increases. The thresholding variants are more subtle; they occasionally improve as $\Delta$ increases, which might be attributed to the fact that when $\Delta$ is small, thresholding can zero out too many subgrid cells.

\Cref{fig:user-resolution} presents results from the latter two set of experiments under the EMD metric when $\eps = 10$. 
We also include examples of the heatmaps from each approach in \Cref{fig:visualization}.

\section{Discussions and Future Directions}
\label{sec:conc}

We present an algorithm for sparse distribution aggregation under the EMD metric, which in turn yields an algorithm for producing heatmaps. As discussed earlier, our algorithm extends naturally to distributed models that can implement the Laplace mechanism, including the secure aggregation model and the shuffle model~\cite{balle_merged,ghazi2019private}. Unfortunately, this does not apply to the more stringent \emph{local} DP model~\cite{kasiviswanathan2008what} and it remains an interesting open question to devise practical local DP heatmap/EMD aggregation algorithms for ``moderate'' number of users and privacy parameters.

\ifarxiv
\bibliographystyle{alpha}
\fi
\bibliography{refs}

\ifarxiv
\appendix

\section{Implications for Planar $k$-Median}
\label{app:kmedian}

Sparse EMD aggregation is intimately related to clustering, specifically, the planar $k$-median problem. Indeed, as already noted (somewhat implicitly) by~\citet{indyk2011k,BackursIRW16}, an algorithm for sparse EMD aggregation immediately implies a coreset~\cite{DBLP:conf/stoc/Har-PeledM04} for $k$-median on the plane, which in turn can be used to derive an approximation algorithm for the problem.

Before we can state our results, recall that in the \emph{planar $k$-median} problem, each user has a point $x_i \in [0, 1)^2$. The goal is to output a set $C \subseteq [0, 1)^2$ of $k$ centers that minimizes the following objective:
\begin{align*}
\cost_C(X) := \sum_{x \in X} \min_{c \in C} \|c - x\|_1,
\end{align*}
where $X = (x_1, \dots, x_n)$.

The above cost notation extends naturally to that of a non-negative vector $\bx \in \R^{G_\Delta}_{\geq 0}$. More specifically, we can define
\begin{align*}
\cost_C(\bx) := \sum_{u \in G_\Delta} \bx(u) \cdot \min_{c \in C} \|c - u\|_1.
\end{align*}

\begin{defn}[Coreset]
Let $\lambda \in (0, 1)$ and $\kappa > 0$.
$\bx$ is said to be a \emph{$(\lambda, \kappa)$-coreset} for $X$ (with respect to $k$-median objective) iff, for every $C \subseteq [0, 1)^2$ of size $k$, we have
$(1 - \lambda) \cdot \cost_C(X) - \kappa \leq \cost_C(\bx) \leq (1 + \lambda) \cdot \cost_C(X) + \kappa$.
\end{defn}

In this context, our results immediately imply the following: 
\begin{corollary} \label{cor:coreset}
There exists an $\eps$-DP algorithm that, for any constant $\lambda \in (0, 1)$, can output a $(\lambda, O(\sqrt{k} / \eps))$-coreset for $k$-median, where each input point belongs to $[0, 1)^2$.
\end{corollary}

\begin{corollary} \label{cor:k-median}
There exists an $\eps$-DP algorithm that, for any constant $\lambda \in (0, 1)$, can output a $(1 + \lambda, O(\sqrt{k} / \eps))$-approximation for $k$-median, where each input point belongs to $[0, 1)^2$.
\end{corollary}

There have been numerous works studying DP clustering algorithms~\citep{blum2005practical, nissim2007smooth, feldman2009private, GuptaLMRT10, mohan2012gupt, wang2015differentially, NissimSV16, nock2016k, su2016differentially, feldman2017coresets, BalcanDLMZ17, NissimS18,huang2018optimal,nock2016k,NissimS18, StemmerK18,Stemmer20, GKM20, jones2020differentially, chaturvedi2020differentially,CGKM21}, each with varying guarantees. Many of these works either considered the low-dimensional case directly (e.g.,~\cite{feldman2009private}), or used it as a subroutine (e.g.,~\cite{BalcanDLMZ17,GKM20}). Despite this, each of these works incurs an additive error of $\Omega(k)$, while our results above achieve an improved $O(\sqrt{k})$ additive error; this bound is also tight due to a simple adaptation of our lower bound for $k$-sparse distribution aggregation. Another surprising aspect of our guarantee is that the additive error is \emph{independent} of the number $n$ of users. To the best of our knowledge, all previous results incur an additive error that depends on $n$.

We will now prove \Cref{cor:coreset} and \Cref{cor:k-median}. In both cases, we will assume that each point (together with the centers) belongs to $G_\Delta$, where $\Delta = \lceil \log(\eps / (n\sqrt{k})) \rceil$.  Indeed, if a point was not from $G_\Delta$, then we can simply replace it with the closest point in $G_\Delta$, which would introduce an additive error of at most $O(\sqrt{k}/\eps)$.

Below we use the notation, $\bone_S$ to denote the indicator vector of $S$, i.e.,
\begin{align*}
\bone_S(a) =
\begin{cases}
1 &\text{ if } a \in S, \\
0 &\text{ otherwise.}
\end{cases}
\end{align*}

\begin{proof}[Proof of \Cref{cor:coreset}]
We view the input point $x_i$ as the indicator vector $\bp_i = \bone_{x_i}$. We then run \Cref{alg:dp-aggregation-full} but output $\bx = \hbs$ (instead of the normalized $\hbs / \|\hbs\|_{\emd}$). Recall from the proof of \Cref{thm:main-ub} that, with appropriate setting of parameters, the output satisfies
\begin{align} \label{eq:emd-close-for-coreset}
\|\bs - \hbs\|_{\emd} \leq \lambda \cdot \min_{k\text{-sparse} \bs'} \|\bs - \bs'\|_{\emd} + O(\sqrt{k} / \eps),
\end{align}
where $\bs$ represents $\sum_{i \in [n]} \bp_i$.

Consider any set $C$ of at most $k$ centers. We have
\begin{align*}
&|\cost_C(X) - \cost_C(\bx)| \\
&\leq \|\bs - \hbs\|_{\emd} \\
&\overset{\eqref{eq:emd-close-for-coreset}}{\leq}  \lambda \cdot \min_{k\text{-sparse} \bs'} \|\bs - \bs'\|_{\emd} + O(\sqrt{k} / \eps) \\
&\leq \lambda \cdot \cost_C(X) + O(\sqrt{k} / \eps),
\end{align*}
which concludes our proof.
\end{proof}

\Cref{cor:k-median} then follows from \Cref{cor:coreset} by simply running any non-private algorithm for $k$-median on top of the coreset produced by \Cref{cor:coreset}.
\section{Relationship between Heatmaps Aggregation and EMD Aggregation}
\label{app:rel}

Aggregating distributions under EMD is closely related to aggregating heatmaps. In this section we formalize this relationship by showing that EMD bound on the error of the distribution aggregation problem yields many types of error bounds of aggregating a slight variant of heatmaps, which we call ``infinite'' heatmaps. The latter is the same as the usual heatmaps except that we do not truncate the distribution around the border. More specifically, we let $\tG_\Delta := \{(i/\Delta, j/\Delta) \mid i, j \in \BZ\}$ and 
\begin{align*}
\tH_\bp^\sigma(a) = \sum_{a' \in G_\Delta} \frac{1}{Z} e^{-\frac{\|a - a'\|_2^2}{2\sigma^2}} \cdot p(a'),
\end{align*}
for all $a \in \tG_\Delta$, where $Z := \sum_{d \in \tG_\Delta} e^{-\frac{\|d\|_2^2}{2\sigma^2}}$ is the normalization factor.

We conjecture that (slight variants of) the bounds we discuss below also hold for the standard (truncated) version of heatmaps.

\paragraph{Bound on EMD.} We start with the simplest bound: that of EMD on the (infinite) heatmaps.

\begin{lemma} \label{lem:heatmapEMD}
For any $\bp, \bq \in \R^{G_{\Delta}}_{\geq 0}$ such that $\|\bp\|_1 = \|\bq\|_1$ and $\sigma > 0$, we have $$\emd(\tH^\sigma_\bp, \tH^\sigma_\bq) \leq \emd(\bp, \bq).$$
\end{lemma}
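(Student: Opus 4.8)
The plan is to exhibit an explicit transport plan (coupling) between the two heatmaps whose cost equals $\emd(\bp,\bq)$ exactly; since $\emd$ is defined as the minimum over all couplings, this immediately yields the claimed inequality. The whole argument rests on one structural observation specific to the \emph{infinite} heatmap: because the normalization factor $Z$ does not depend on the source location, the operator $\bp \mapsto \tH^\sigma_\bp$ is a convolution. Concretely, writing $g(d) := \frac{1}{Z} e^{-\|d\|_2^2/(2\sigma^2)}$ for $d \in \tG_\Delta$, linearity of the definition gives $\tH^\sigma_\bp(u) = \sum_{x \in G_\Delta} \bp(x)\, g(u - x)$, and $\sum_{d \in \tG_\Delta} g(d) = 1$ by the definition of $Z$, so $g$ is a probability distribution on the lattice $\tG_\Delta$. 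This is exactly where the infinite and truncated versions diverge: for the truncated heatmap the normalization depends on the source, the kernel is not translation invariant, and this step fails---matching the fact that only a conjecture is claimed there.

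Next I would take an optimal coupling $\gamma \in \R_{\geq 0}^{G_\Delta \times G_\Delta}$ for $\emd(\bp,\bq)$, with marginals $\bp,\bq$ and $\sum_{x,y}\gamma(x,y)\|x-y\|_1 = \emd(\bp,\bq)$, and lift it to a coupling of the heatmaps by a ``translation'' plan: transport the Gaussian blob around $x$ onto the blob around $y$ by matching the point $x+d$ with $y+d$ for each lattice offset $d\in\tG_\Delta$, carrying mass $g(d)$. Formally, define
\[
\gamma'(u,v) := \sum_{x,y \in G_\Delta} \gamma(x,y) \sum_{d \in \tG_\Delta} g(d)\,\bone[u = x+d]\,\bone[v = y+d].
\]
Then I would verify the two required properties. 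For the marginals, summing over $v$ collapses the constraint $v = y+d$ and uses $\sum_y \gamma(x,y) = \bp(x)$, giving $\sum_v \gamma'(u,v) = \sum_x \bp(x)\, g(u-x) = \tH^\sigma_\bp(u)$ (substituting $d = u-x$); the other marginal is symmetric. For the cost, the key cancellation is $\|(x+d)-(y+d)\|_1 = \|x-y\|_1$, independent of $d$, so
\[
\sum_{u,v}\gamma'(u,v)\|u-v\|_1 = \sum_{x,y}\gamma(x,y)\|x-y\|_1 \sum_{d \in \tG_\Delta}g(d) = \emd(\bp,\bq),
\]
using $\sum_d g(d)=1$. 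Hence $\gamma'$ is a valid coupling of $\tH^\sigma_\bp$ and $\tH^\sigma_\bq$ of cost $\emd(\bp,\bq)$, and the lemma follows.

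I do not anticipate a genuine obstacle: every step is an algebraic verification, and the cost is finite since $\|x-y\|_1 \le 2$ on $G_\Delta$ and $\gamma$ has total mass $\|\bp\|_1 < \infty$. The only points warranting care are (i) confirming that $\tG_\Delta$ is closed under the additions $x+d$ (it is a lattice containing $G_\Delta$), so $\gamma'$ is genuinely supported on $\tG_\Delta \times \tG_\Delta$; and (ii) noting that $\emd$ is being evaluated on the infinite grid---but since the constructed plan already has finite cost, well-definedness of the optimal-transport value on $\tG_\Delta$ is not in question, and the marginal computation incidentally shows $\|\tH^\sigma_\bp\|_1 = \|\bp\|_1 = \|\bq\|_1 = \|\tH^\sigma_\bq\|_1$, so $\emd$ between them is defined. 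As an alternative I could argue via Kantorovich duality: for any $1$-Lipschitz $f$, the smoothed function $x \mapsto \E_{d \sim g}[f(x+d)]$ is again $1$-Lipschitz, which bounds the dual objective for the heatmaps by $\emd(\bp,\bq)$. But the primal coupling argument is cleaner and matches the coupling-based definition of $\emd$ used in the paper.
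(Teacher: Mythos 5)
Your proof is correct and is essentially the paper's own argument: the paper defines the lifted coupling $\tgamma(a,b) = \sum_{d \in \tG_\Delta} \frac{1}{Z} e^{-\|d\|_2^2/(2\sigma^2)} \gamma(a+d, b+d)$, which coincides with your $\gamma'$ after the change of variables $d \mapsto -d$ (harmless since the Gaussian kernel is symmetric), and then verifies the marginals and the translation-invariant cost exactly as you do.
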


\begin{proof}
Let $\gamma$ be such that
\begin{align*}
\emd(\bp, \bq) = \sum_{a \in G_\Delta} \sum_{b \in G_\Delta} \gamma(a, b) \cdot \|a - b\|_1,
\end{align*}
and $\gamma$'s marginals are equal to $\bp, \bq$. For convenience, we let $\gamma(a, b)$ be zero when $a \notin G_\Delta$ or $b \notin G_\Delta$.
We then define $\tgamma$ on $\tG_\Delta \times \tG_\Delta$ by
\begin{align*}
\tgamma(a, b) = \sum_{d \in \tG_{\Delta}} \frac{1}{Z} \cdot e^{-\frac{\|d\|_2^2}{2\sigma^2}} \cdot \gamma(a + d, b + d).
\end{align*}
It is simple to check that the marginals of $\tgamma$ are exactly $\tH^\sigma_\bp, \tH^\sigma_\bq$. As such, we have
\begin{align*}
& \emd(\tH^\sigma_\bp, \tH^\sigma_\bq) \\
&\leq \sum_{a \in \tG_\Delta} \sum_{b \in \tG_\Delta} \tgamma(a, b) \cdot \|a - b\|_1 \\
&= \sum_{a \in \tG_\Delta} \sum_{b \in \tG_\Delta} \|a - b\|_1 \cdot \left(\sum_{d \in \tG_{\Delta}} \frac{1}{Z} \cdot e^{-\frac{\|d\|_2^2}{2\sigma^2}} \cdot \gamma(a + d, b + d)\right) \\
&= \sum_{a', b' \in \tG_\Delta} \gamma(a', b') \cdot \|a' - b'\|_1 \cdot \left(\sum_{d \in \tG_\Delta} \cdot \frac{1}{Z} \cdot e^{-\frac{\|d\|_2^2}{2\sigma^2}}\right) \\
&= \emd(\bp, \bq).
\end{align*}
where the penultimate equality is from substituting $a' = a + d$ and $b' = b + d$.
\end{proof}

\paragraph{Bound on KL-divergence.}
We will now move on to the KL-divergence. For this purpose, we will use the following bound, shown in~\citet{Canonne0S20}\footnote{Strictly speaking,~\citet{Canonne0S20} only states the lemma for the 1-dimensional case but it is not hard to see that it holds for the 2-dimensional case of our interest as well.}. Here we use $\bone_p$ to denote the indicator variable for the point $p$, i.e., a vector that is equal to one at coordinate $p$ and zero elsewhere.

\begin{lemma}[{\cite[Proposition 5]{Canonne0S20}}] \label{lem:kld-discrete-gaussian}
For any $a, b \in G_\Delta$ and $\sigma > 0$, we have
\begin{align*}
\KL(\tH_{\bone_a}^\sigma \| \tH_{\bone_b}^\sigma) \leq \frac{\|a - b\|_2^2}{2\sigma^2}.
\end{align*}
\end{lemma}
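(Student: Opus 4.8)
The plan is to recognize that the two infinite heatmaps here are nothing but discrete Gaussians on the common lattice $\tG_\Delta = \frac1\Delta\Z^2$, centered at the lattice points $a$ and $b$: indeed $\tH_{\bone_a}^\sigma(x) = \frac1Z e^{-\|x-a\|_2^2/(2\sigma^2)}$ and likewise for $b$. The first thing I would record is that the normalizer is the same for both. Substituting $x \mapsto x - a$, which is a bijection of $\tG_\Delta$ precisely because $a \in G_\Delta \subseteq \tG_\Delta$ is a lattice point, shows $\sum_{x \in \tG_\Delta} e^{-\|x-a\|_2^2/(2\sigma^2)} = Z$ independent of the center. This is the structural fact that makes the whole computation clean, and it is exactly where the hypothesis $a,b \in G_\Delta$ enters.

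With this in hand I would compute the KL divergence directly. Since the two densities share the normalizer, the log-likelihood ratio simplifies to $\log\frac{\tH_{\bone_a}^\sigma(x)}{\tH_{\bone_b}^\sigma(x)} = \frac{\|x-b\|_2^2 - \|x-a\|_2^2}{2\sigma^2} = \frac{2\langle x, a-b\rangle + \|b\|_2^2 - \|a\|_2^2}{2\sigma^2}$, which is affine in $x$. Hence $\KL(\tH_{\bone_a}^\sigma \| \tH_{\bone_b}^\sigma) = \frac{1}{2\sigma^2}\left(2\langle \E[x], a-b\rangle + \|b\|_2^2 - \|a\|_2^2\right)$, where the expectation is over $x \sim \tH_{\bone_a}^\sigma$. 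The remaining ingredient is that this mean equals the center, $\E[x] = a$; substituting and expanding $2\langle a, a-b\rangle + \|b\|_2^2 - \|a\|_2^2$ collapses the bracket to exactly $\|a-b\|_2^2$, giving the claimed bound (in fact with equality).

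The main, and essentially only, obstacle is justifying $\E_{x \sim \tH_{\bone_a}^\sigma}[x] = a$. I would obtain it from the symmetry $x \mapsto 2a - x$, which is a measure-preserving involution of the distribution: it fixes $\tG_\Delta$ because $a$ is a lattice point, and preserves the weight because $\|(2a-x)-a\|_2 = \|x-a\|_2$. Averaging $x$ against $2a-x$ then forces $\E[x] = a$. One should also note in passing that the relevant first moment is finite (the discrete Gaussian has finite moments), so the expectation is well-defined. This direct route yields equality, which is stronger than the stated inequality.

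If one instead prefers to match the cited one-dimensional statement of \cite{Canonne0S20}, the alternative is to note that $\tH_{\bone_a}^\sigma$ factorizes over the two coordinates into a product of one-dimensional discrete Gaussians on $\frac1\Delta\Z$, invoke additivity of KL over product distributions to split into two coordinate terms, relabel each lattice $\frac1\Delta\Z$ onto $\Z$ via $t \mapsto \Delta t$ (which sends $\sigma \mapsto \Delta\sigma$ and leaves KL invariant), apply the one-dimensional bound with the integer shift $\Delta(a_i - b_i)$, and sum; the factors of $\Delta$ cancel to give $\frac{(a_1-b_1)^2 + (a_2-b_2)^2}{2\sigma^2} = \frac{\|a-b\|_2^2}{2\sigma^2}$.
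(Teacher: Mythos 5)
Your proof is correct, and it is more self-contained than what the paper does: the paper offers no proof at all for this lemma, simply citing Proposition 5 of Canonne et al.\ and remarking in a footnote that the one-dimensional statement there "is not hard to see" to extend to two dimensions --- which is essentially your second, alternative route (factorize over coordinates, rescale the lattice $\tfrac1\Delta\Z$ to $\Z$, apply the cited 1-d bound, and use additivity of KL over products). Your primary argument is a genuinely different and arguably cleaner path: you observe that both heatmaps are discrete Gaussians on the common lattice $\tG_\Delta$ with the \emph{same} normalizer (because $a,b\in G_\Delta\subseteq\tG_\Delta$ and $\tG_\Delta$ is a group, so translation by $a$ or $b$ permutes the lattice), which makes the log-likelihood ratio affine in $x$; the computation then reduces to the single fact $\E_{x\sim \tH^\sigma_{\bone_a}}[x]=a$, which you correctly get from the weight-preserving involution $x\mapsto 2a-x$ (and finiteness of the first moment). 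This buys you an exact identity $\KL(\tH^\sigma_{\bone_a}\|\tH^\sigma_{\bone_b})=\|a-b\|_2^2/(2\sigma^2)$ rather than an inequality, and it makes explicit exactly where the hypothesis $a,b\in G_\Delta$ is used --- a point the paper's citation-plus-footnote leaves implicit. Both routes are valid; yours removes the dependence on the external reference entirely.
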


We can now prove that EMD on the distribution aggregation error implies a KL-divergence bound of the (infinite) heatmaps:

\begin{lemma}
\label{lem:heatmapfromEMD}
For any $\bp, \bq \in \R^{G_{\Delta}}_{\geq 0}$ such that $\|\bp\|_1 = \|\bq\|_1 = 1$ and $\sigma > 0$, we have $$\KL(\tH^\sigma_\bp \| \tH^\sigma_\bq) \leq \frac{\emd(\bp, \bq)}{2\sigma^2}.$$
\end{lemma}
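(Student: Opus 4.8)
The plan is to exploit the linearity of the (infinite) heatmap transform together with the joint convexity of KL-divergence, reducing the desired bound to the single-point estimate already supplied by \Cref{lem:kld-discrete-gaussian}. First I would fix an optimal transport plan $\gamma \in \R_{\geq 0}^{G_\Delta \times G_\Delta}$ attaining $\emd(\bp, \bq) = \sum_{a, b} \gamma(a, b) \cdot \|a - b\|_1$, whose marginals are exactly $\bp$ and $\bq$. Since $\|\bp\|_1 = \|\bq\|_1 = 1$, the weights $\gamma(a,b)$ sum to $1$ and hence form a genuine probability distribution over pairs $(a,b)$.

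The key observation is that the map $\bp \mapsto \tH^\sigma_\bp$ is linear, so $\tH^\sigma_\bp = \sum_{a'} p(a') \cdot \tH^\sigma_{\bone_{a'}}$. Using the two marginal conditions on $\gamma$, this lets me rewrite \emph{both} heatmaps as mixtures with the \emph{same} mixing weights: $\tH^\sigma_\bp = \sum_{a,b} \gamma(a,b) \cdot \tH^\sigma_{\bone_a}$ and $\tH^\sigma_\bq = \sum_{a,b} \gamma(a,b) \cdot \tH^\sigma_{\bone_b}$. I would then invoke the joint convexity of KL-divergence in its pair of arguments, which gives $\KL(\tH^\sigma_\bp \,\|\, \tH^\sigma_\bq) \leq \sum_{a,b} \gamma(a,b) \cdot \KL(\tH^\sigma_{\bone_a} \,\|\, \tH^\sigma_{\bone_b})$.

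Applying \Cref{lem:kld-discrete-gaussian} bounds each summand by $\|a - b\|_2^2 / (2\sigma^2)$. The final step is the elementary observation that, since every coordinate of $a, b \in G_\Delta \subseteq [0,1)^2$ differs by at most $1$, we have $(a_i - b_i)^2 \leq |a_i - b_i|$ and therefore $\|a - b\|_2^2 \leq \|a - b\|_1$. Substituting this in and recognizing $\sum_{a,b} \gamma(a,b) \cdot \|a - b\|_1 = \emd(\bp, \bq)$ yields the claimed inequality $\KL(\tH^\sigma_\bp \,\|\, \tH^\sigma_\bq) \leq \emd(\bp, \bq) / (2\sigma^2)$.

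The main obstacle — really the only step that is not mechanical — is setting up the common-weight mixture representation so that joint convexity is applicable: ordinary convexity in a single argument of the divergence does not suffice, and one must couple the two mixtures through the transport plan $\gamma$ (rather than through a product distribution $\bp \otimes \bq$) so that both sides share identical weights $\gamma(a,b)$. The passage from $\ell_2^2$ to $\ell_1$ and the convexity inequality itself are both standard and require no further care.
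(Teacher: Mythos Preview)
Your proposal is correct and follows essentially the same argument as the paper: take an optimal coupling $\gamma$, write $\tH^\sigma_\bp$ and $\tH^\sigma_\bq$ as $\gamma$-weighted mixtures of single-point heatmaps, apply joint convexity of KL, invoke \Cref{lem:kld-discrete-gaussian}, and use $\|a-b\|_2^2 \le \|a-b\|_1$ on $[0,1)^2$. If anything, you are slightly more explicit than the paper in justifying why the common mixing weights are needed for the convexity step.
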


\begin{proof}
Let $\gamma$ be such that
\begin{align*}
\emd(\bp, \bq) = \sum_{a \in G_\Delta} \sum_{b \in G_\Delta} \gamma(a, b) \cdot \|a - b\|_1,
\end{align*}
and $\gamma$'s marginals are equal to $\bp, \bq$.

From the convexity of KL-divergence, we have
\begin{align*}
\KL(\tH^\sigma_\bp, \tH^\sigma_\bq) 
&\leq \sum_{a \in G_\Delta} \sum_{b \in G_\Delta} \gamma(a, b) \cdot \KL(\tH^\sigma_{\bone_a}, \tH^\sigma_{\bone_b}) \\
(\text{\Cref{lem:kld-discrete-gaussian}}) &\leq \sum_{a \in G_\Delta} \sum_{b \in G_\Delta} \gamma(a, b) \cdot \frac{\|a - b\|_2^2}{2\sigma^2} \\
(\text{From } a, b \in [0, 1)) &\leq \sum_{a \in G_\Delta} \sum_{b \in G_\Delta} \gamma(a, b) \cdot \frac{\|a - b\|_1}{2\sigma^2} \\
&\leq \frac{\emd(\bp, \bq)}{2\sigma^2}. \qedhere
\end{align*}
\end{proof}

\paragraph{Bound on TV distance and Similarity.}
Via Pinsker's inequality and our previous bound on the KL-divergence, we arrive at the following:
\begin{lemma}
\label{lem:heatmapTVfromEMD}
For any $\bp, \bq \in \R^{G_{\Delta}}_{\geq 0}$ such that $\|\bp\|_1 = \|\bq\|_1 = 1$ and $\sigma > 0$, we have $$\TV(\tH^\sigma_\bp \| \tH^\sigma_\bq) \leq \frac{\sqrt{\emd(\bp, \bq)}}{2\sigma}.$$
\end{lemma}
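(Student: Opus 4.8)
The plan is to combine the KL-divergence bound of \Cref{lem:heatmapfromEMD} with Pinsker's inequality, exactly as the statement anticipates. First I would record Pinsker's inequality in the form $\TV(P, Q) \leq \sqrt{\tfrac{1}{2}\KL(P \| Q)}$, valid for any pair of probability distributions $P, Q$ on a common countable space. Before applying it, I must confirm that $\tH^\sigma_\bp$ and $\tH^\sigma_\bq$ are genuine probability distributions on $\tG_\Delta$, since otherwise neither the KL-divergence nor the total-variation distance is well defined. This is precisely where the ``infinite'' (untruncated) heatmap definition is essential: summing $\tH^\sigma_\bp(a)$ over all $a \in \tG_\Delta$ and interchanging the order of summation, the inner sum $\sum_{a \in \tG_\Delta} e^{-\|a - a'\|_2^2/(2\sigma^2)}$ equals, by translation invariance of the full grid $\tG_\Delta$, exactly the normalization constant $Z$ for every $a' \in G_\Delta$. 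Hence the total mass is $\sum_{a'} p(a') = \|\bp\|_1 = 1$, and likewise for $\bq$.

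With both heatmaps confirmed to be probability distributions, the remainder is a direct chaining of inequalities. Applying Pinsker's inequality and then \Cref{lem:heatmapfromEMD} gives
\[
\TV(\tH^\sigma_\bp \| \tH^\sigma_\bq) \leq \sqrt{\tfrac{1}{2}\KL(\tH^\sigma_\bp \| \tH^\sigma_\bq)} \leq \sqrt{\tfrac{1}{2} \cdot \frac{\emd(\bp, \bq)}{2\sigma^2}} = \frac{\sqrt{\emd(\bp, \bq)}}{2\sigma},
\]
which is exactly the claimed bound.

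Since the genuine analytic work was already carried out in \Cref{lem:heatmapfromEMD} (the convexity-of-KL argument reducing to the pointwise discrete-Gaussian estimate of \Cref{lem:kld-discrete-gaussian}), there is essentially no substantive obstacle left here. The only points requiring care are bookkeeping ones: matching the constant convention in Pinsker's inequality against the definition of $\TV$ used throughout, and observing that the factor $\tfrac{1}{2}$ from Pinsker's cancels against the $2\sigma^2$ in the denominator of the KL bound to yield the clean $\sqrt{\emd}/(2\sigma)$. If one preferred to avoid invoking Pinsker's as a black box, the same estimate could be obtained by recalling that for distributions on a countable space $\TV(P,Q) = \tfrac{1}{2}\|P - Q\|_1$ and deriving the inequality from the standard $\tfrac{1}{2}\|P-Q\|_1^2 \le \KL(P\|Q)$, but invoking Pinsker's directly is the cleanest route.
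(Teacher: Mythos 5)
Your proof is correct and follows exactly the route the paper intends: it derives the bound by chaining Pinsker's inequality with the KL-divergence estimate of \Cref{lem:heatmapfromEMD}, and the constants work out as you compute ($\sqrt{\tfrac{1}{2}\cdot\tfrac{\emd(\bp,\bq)}{2\sigma^2}} = \tfrac{\sqrt{\emd(\bp,\bq)}}{2\sigma}$). Your additional check that the untruncated heatmaps $\tH^\sigma_\bp$ and $\tH^\sigma_\bq$ are genuine probability distributions is a sensible piece of due diligence that the paper leaves implicit.
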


Finally, recall that the similarity (SIM) metric is simply one minus the total variation distance; this, together with the above lemma, leads to the following bound:
\begin{corollary}
For any $\bp, \bq \in \R^{G_{\Delta}}_{\geq 0}$ such that $\|\bp\|_1 = \|\bq\|_1 = 1$ and $\sigma > 0$, we have $$\SIM(\tH^\sigma_\bp \| \tH^\sigma_\bq) \geq 1 - \frac{\sqrt{\emd(\bp, \bq)}}{2\sigma}.$$
\end{corollary}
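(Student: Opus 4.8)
The plan is to obtain this bound as an immediate consequence of the preceding total-variation estimate, \Cref{lem:heatmapTVfromEMD}, together with the definition of the similarity metric stated just above it. First I would record the identity $\SIM(\tH^\sigma_\bp \| \tH^\sigma_\bq) = 1 - \TV(\tH^\sigma_\bp \| \tH^\sigma_\bq)$, which is exactly the convention for $\SIM$ adopted in the text. Once this identity is in place, the claimed lower bound on $\SIM$ is simply the negation of the upper bound on $\TV$, so there is nothing further to compute.

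Concretely, the steps are: (i) verify that the hypotheses of \Cref{lem:heatmapTVfromEMD} coincide with those of the corollary, namely that $\bp,\bq$ are probability vectors ($\|\bp\|_1 = \|\bq\|_1 = 1$) and $\sigma > 0$; (ii) invoke \Cref{lem:heatmapTVfromEMD} to get $\TV(\tH^\sigma_\bp \| \tH^\sigma_\bq) \le \sqrt{\emd(\bp,\bq)}/(2\sigma)$; and (iii) substitute into $\SIM = 1 - \TV$, flipping the inequality, to conclude $\SIM(\tH^\sigma_\bp \| \tH^\sigma_\bq) \ge 1 - \sqrt{\emd(\bp,\bq)}/(2\sigma)$. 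This is a one-line rearrangement.

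The only point I would pause on is the $\SIM$ convention. In the saliency literature $\SIM$ is frequently defined as the histogram intersection $\sum_a \min\{\tH^\sigma_\bp(a), \tH^\sigma_\bq(a)\}$; I would remark that for two genuine probability distributions this equals $1 - \TV$, so the two formulations agree here. Making this rigorous uses that the infinite-heatmap map preserves total mass: summing $\tH^\sigma_\bp$ over the full lattice $\tG_\Delta$ and swapping the order of summation gives $\sum_{a'} p(a') \cdot \frac{1}{Z}\sum_{a \in \tG_\Delta} e^{-\|a-a'\|_2^2/(2\sigma^2)} = \sum_{a'} p(a')$, since the inner Gaussian sum is translation-invariant over the lattice and equals $Z$. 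Hence $\tH^\sigma_\bp$ and $\tH^\sigma_\bq$ are indeed probability distributions, and the $\SIM = 1 - \TV$ identity applies on the nose.

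There is essentially no hard step here: the statement is a direct corollary of \Cref{lem:heatmapTVfromEMD}, and whatever care is needed is purely bookkeeping, namely carrying the normalization hypothesis through so that both $\TV$ and $\SIM$ are being evaluated on objects that are bona fide probability distributions. I would therefore expect the ``obstacle,'' if one insists on naming it, to be only the sanity check that the two definitions of $\SIM$ coincide under normalization, after which the bound follows by a single substitution.
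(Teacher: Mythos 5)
Your proposal is correct and matches the paper's own derivation, which likewise just combines the convention $\SIM = 1 - \TV$ with \Cref{lem:heatmapTVfromEMD}. Your extra check that the infinite heatmap preserves total mass (so that the histogram-intersection and $1-\TV$ formulations of $\SIM$ agree) is a sound bit of bookkeeping that the paper leaves implicit.
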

\section{A Tight Lower Bound for Sparse EMD Aggregation}
\label{app:sparselb}

Our lower bounds in this and the subsequent section apply for any sufficiently large $\Delta$ (depending on the other parameters); this will be assumed without explicitly stated in the subsequent statement of the theorems/lemmas. We remark that this is necessary because if $\Delta$ is too small, then one can get smaller error. (E.g., if $\Delta = 1$, it is obvious how to get $O(1/\eps)$ error.)

The main result of this section is a lower bound of $\Omega_\eps(\sqrt{k} / n)$ on the additive error of any $\eps$-DP algorithm for sparse EMD aggregation, which matches the error bounds we achieved in our algorithm (\Cref{thm:main-ub}):

\begin{theorem} \label{thm:packing-lb}
For any $\lambda, \eps > 0$, any integers $k \geq \eps$ and $n \geq k / \eps$, no $\eps$-DP algorithm can, on input consisting of $n$ distributions, output a $\left(\lambda, o\left(\frac{\sqrt{k}}{\eps n}\right)\right)$-approximation for sparse EMD aggregation, with probability at least 0.1.
\end{theorem}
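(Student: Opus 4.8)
The plan is to use the packing (group-privacy) framework of Hardt--Talwar. Suppose toward a contradiction that an $\eps$-DP algorithm $\cM$ outputs a $(\lambda, \kappa)$-approximation with probability at least $0.1$ for some $\kappa = o(\sqrt k/(\eps n))$. I would construct a family of $M = 2^{\Omega(k)}$ input datasets $\bX_1, \dots, \bX_M$, each consisting of $n$ distributions, whose averages $\ba_1, \dots, \ba_M$ are all $k$-sparse and pairwise separated in $\emd$, while any two datasets differ in only $t = O(k/\eps)$ users. Because each $\ba_i$ is $k$-sparse, the best $k$-sparse approximation error is $0$, so a successful output for $\bX_i$ must land in the ball $B_i := \{\hba : \emd(\hba, \ba_i) \le \kappa\}$; pairwise separation $\emd(\ba_i, \ba_j) > 2\kappa$ makes these balls disjoint. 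Group privacy then gives $\Pr[\cM(\bX_1) \in B_i] \ge e^{-\eps t}\Pr[\cM(\bX_i) \in B_i] \ge 0.1\, e^{-\eps t}$, and summing the disjoint events over $i$ yields $1 \ge 0.1\, M e^{-\eps t}$, i.e.\ $M \le 10\, e^{\eps t}$. Choosing the constants so that $2^{\Omega(k)} = M > 10\, e^{\eps t}$ with $t = O(k/\eps)$ produces the contradiction.

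Building the packing is the first concrete step. I would take the standard $\ell_1$/Hamming packing of the $\sqrt k \times \sqrt k$ grid: by a Gilbert--Varshamov (or probabilistic) argument there exist $2^{\Omega(k)}$ subsets $A_1, \dots, A_M$ of the $k$ grid points, each of size $\Theta(k)$, with pairwise symmetric difference $\Omega(k)$. Embedding this grid into $G_\Delta$ at spacing $\Theta(1/\sqrt k)$ (legitimate for $\Delta$ sufficiently large) and letting $\bq_i$ be the uniform distribution on $A_i$ gives $k$-sparse distributions. Since moving from $\bq_i$ to $\bq_j$ forces an $\Omega(1)$ fraction of the mass to travel at least one grid step of length $\Omega(1/\sqrt k)$, we obtain $\emd(\bq_i, \bq_j) = \Omega(1/\sqrt k)$, matching the claim in the overview.

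The remaining step is the scaling that converts full separation into the target $\Theta(\sqrt k/(\eps n))$ scale while keeping the datasets close. I would designate $m := \Theta(k/\eps)$ ``signal'' users and $n - m$ ``filler'' users; in $\bX_i$ the signal users all hold $\bq_i$ and the filler users all hold a fixed $\bone_{x_0}$. Then $\ba_i = \tfrac{m}{n}\bq_i + \tfrac{n-m}{n}\bone_{x_0}$ has support of size $|\supp(\bq_i)| + 1 \le k$, hence is $k$-sparse, and $\bX_i, \bX_j$ differ in exactly the $m$ signal users, so their distance is $t = 2m = O(k/\eps) \le 2n$ (using $n \ge k/\eps$). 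Crucially the filler mass cancels in the difference, $\ba_i - \ba_j = \tfrac{m}{n}(\bq_i - \bq_j)$, so by homogeneity of the EMD norm $\emd(\ba_i, \ba_j) = \tfrac{m}{n}\emd(\bq_i, \bq_j) = \Omega\!\left(\tfrac{m}{n\sqrt k}\right) = \Omega\!\left(\tfrac{\sqrt k}{\eps n}\right)$. For $\kappa = o(\sqrt k/(\eps n))$ this eventually exceeds $2\kappa$, so the balls $B_i$ are disjoint and the packing inequality above closes the argument.

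The main obstacle is calibrating the constants so that the two requirements coexist: the packing inequality needs $m = c' k/\eps$ with $c'$ small enough that $M = 2^{ck} > 10\, e^{2 c' k}$ (i.e.\ $c' < \tfrac{c \ln 2}{2}$), while the separation $\tfrac{m}{n}\Omega(1/\sqrt k)$ must still be $\Omega(\sqrt k/(\eps n))$ for that same $m$; both hold once $c'$ is a sufficiently small absolute constant. A secondary point requiring care is the EMD lower bound $\emd(\bq_i, \bq_j) = \Omega(1/\sqrt k)$, which rests on the grid embedding being at true spacing $\Theta(1/\sqrt k)$ and on translating the Hamming distance of the supports into transported-mass-times-distance. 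Everything else---group privacy, disjointness of the balls, and $k$-sparsity of $\ba_i$---is routine.
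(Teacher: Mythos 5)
Your proposal is correct and follows essentially the same route as the paper: a Hardt--Talwar packing argument built from an $\Omega(1/\sqrt{k})$-separated family of $2^{\Omega(k)}$ $k$-sparse distributions on a $\sqrt{k}\times\sqrt{k}$ grid, diluted by mixing $\Theta(k/\eps)$ ``signal'' users with a fixed filler distribution so that the averages remain $k$-sparse, are $\Omega(\sqrt{k}/(\eps n))$-separated, and the datasets are within $O(k/\eps)$ add/remove operations of one another. The only (immaterial) differences are that the paper builds the packing by a volume argument on the simplex over the grid (so all elements share a common support, which makes $k$-sparsity of the mixtures automatic) rather than by a Gilbert--Varshamov code, and it takes the reference dataset to be the $n-n_0$ filler users alone, saving a factor of $2$ in the group-privacy exponent.
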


We prove the above result via the packing framework of Hardt and Talwar~\cite{HardtT10}.
Recall that we say that a set of $k$-sparse probability distributions $\bp_1, \dots, \bp_T$ is a \emph{$\gamma$-packing} (for EMD distance) if $\emd(\bp_i, \bp_j) \geq \gamma$ for all $i, j \in \{1, \dots, T\}$; the \emph{size} of the packing is $T$. The following lemma provides such a packing with large size, based on constructing an $\ell_1$ packing on a $\sqrt{k} \times \sqrt{k}$ grid.

\begin{lemma} \label{lem:emd-packing}
For any $\gamma > 0$ such that $\gamma < 0.01 / \sqrt{k}$, there exists a set of $k$-sparse probability distributions that forms a $\gamma$-packing of size $2^{\Omega(k \cdot \log(1 / (\gamma \sqrt{k}))}$. Furthermore, we can pick the packing so that each of its elements (i.e., distributions) has the same support.
\end{lemma}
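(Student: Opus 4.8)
The plan is to exhibit an explicit packing built from $\Theta(k)$ spatially separated two-point \emph{gadgets}, index it by a code over a large alphabet, and evaluate pairwise $\emd$ by a duality argument. First I would place $k/2$ disjoint gadgets over $[0,1)^2$ in a $\Theta(\sqrt k)\times\Theta(\sqrt k)$ arrangement, where gadget $i$ consists of two grid points $u_i,v_i\in G_\Delta$ at $\ell_1$ distance $\Lambda=\Theta(1/\sqrt k)$, chosen so that the minimum distance $g$ between points of distinct gadgets satisfies $g>\Lambda$ (e.g.\ centers spaced $\Theta(1/\sqrt k)$ apart with $\Lambda$ a small constant fraction of that spacing). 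The common support of every distribution will be the fixed set $S=\{u_i,v_i\}_i$ of exactly $k$ points, which immediately yields the $k$-sparsity and same-support properties. I index distributions by vectors $\bz\in\{0,\dots,L-1\}^{k/2}$ with $L=\Theta(1/(\gamma\sqrt k))$ (which is $\gg1$ since $\gamma<0.01/\sqrt k$): gadget $i$ carries total mass $2/k$, split as $1/k+t_i$ on $u_i$ and $1/k-t_i$ on $v_i$, where $t_i$ ranges over $L$ evenly spaced values in $(-1/(2k),1/(2k))$ with spacing $1/(kL)$. Keeping $|t_i|<1/k$ makes all masses strictly positive, so the support is exactly $S$.

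The explicit transport that, within each gadget, moves the excess mass between $u_i$ and $v_i$ shows $\emd(\bp_{\bz},\bp_{\bz'})\le \frac{\Lambda}{kL}\sum_i|z_i-z_i'|$. The crux is the matching lower bound, which I would obtain from the dual (Kantorovich--Rubinstein) characterization of $\emd$: it suffices to build a single $1$-Lipschitz (w.r.t.\ $\ell_1$) potential $f$ with $\langle f,\bp_{\bz}-\bp_{\bz'}\rangle=\frac{\Lambda}{kL}\sum_i|z_i-z_i'|$. On gadget $i$ I take $f$ to be $\pm$(the coordinate along the segment $u_iv_i$), oriented so each gadget contributes its full within-gadget distance; each such piece has range $\le\Lambda$ and is $1$-Lipschitz. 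Centering the pieces and invoking a McShane extension gives a global $1$-Lipschitz function, where the cross-gadget Lipschitz condition holds precisely because any two points in distinct gadgets are at distance $\ge g>\Lambda$. Hence $\emd(\bp_{\bz},\bp_{\bz'})=\frac{\Lambda}{kL}\sum_i|z_i-z_i'|$ exactly.

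It then remains to select many codewords whose pairwise $\ell_1$ distances $\sum_i|z_i-z_i'|$ are all at least $D$, chosen so that $\frac{\Lambda}{kL}D\ge\gamma$; since $\Lambda=\Theta(1/\sqrt k)$ and $L=\Theta(1/(\gamma\sqrt k))$, this forces only $D=\Theta(k)$. A Gilbert--Varshamov / volume argument produces a code of size at least $L^{k/2}/V(D)$, where $V(D)$ is the size of an $\ell_1$-ball of radius $D$ in $\{0,\dots,L-1\}^{k/2}$. Because $D=\Theta(k)$ is a constant multiple of the dimension, $V(D)=2^{O(k)}$ \emph{independently} of $L$, so the code has size $L^{\Omega(k)}=2^{\Omega(k\log(1/(\gamma\sqrt k)))}$, as required; one finally checks the arrangement fits in $[0,1)^2$ and that all points lie in $G_\Delta$, which holds for $\Delta$ sufficiently large. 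I note that at the coarse scale $\gamma=\Theta(1/\sqrt k)$ one has $\log(1/(\gamma\sqrt k))=O(1)$ and recovers the $2^{\Omega(k)}$ binary packing mentioned in the overview.

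The delicate point is the $\emd$ lower bound: a priori a global transport could reroute mass across gadgets more cheaply than the within-gadget transport, so the equality $\emd=\frac{\Lambda}{kL}\sum_i|z_i-z_i'|$ is not automatic. This is exactly what the dual potential rules out, and it is the reason the construction insists on the separation $g>\Lambda$ — without this slack the per-gadget $1$-Lipschitz pieces could not be glued into a single global $1$-Lipschitz function, and the lower bound would break. A secondary bookkeeping subtlety is that one must use a large \emph{alphabet} with a \emph{small} minimum distance $D=\Theta(k)$ (rather than a binary code with large distance) to extract the logarithmic gain $\log(1/(\gamma\sqrt k))$ in the exponent.
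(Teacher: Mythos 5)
Your proposal is correct, but it takes a genuinely different route from the paper. The paper's proof works directly on the $t \times t$ grid $G_t$ with $t = \Theta(\sqrt{k})$: it takes a maximal $\ell_1$-packing $\cH$ of the simplex $\{\br \geq 0 : \|\br\|_1 = 0.5\}$ with pairwise distance $2\gamma t$, lower-bounds $|\cH|$ by a continuous volume (covering) argument comparing the simplex to $\ell_1$-balls of radius $2\gamma t$, shifts each element by the uniform vector $0.5/t^2$ to get genuine full-support distributions, and converts $\ell_1$ distance to EMD via the one-line inequality $\emd(\bp,\bq) \geq \|\bp - \bq\|_1/(2t)$, which holds simply because distinct points of $G_t$ are at distance $\geq 1/t$. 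You instead build $k/2$ separated two-point gadgets, index distributions by a code over an alphabet of size $L = \Theta(1/(\gamma\sqrt{k}))$, and prove the EMD lower bound by constructing an explicit $1$-Lipschitz Kantorovich dual potential glued across gadgets. Your duality step is sound (the separation $g > \Lambda$ is exactly what makes the McShane gluing $1$-Lipschitz, and weak LP duality suffices), and it is more refined than the paper's: it computes the gadget EMD exactly and would survive in settings where the ``minimum inter-point distance'' trick is lossy. What the paper's route buys is brevity: the crude $\ell_1$-to-EMD conversion makes the whole dual construction unnecessary, and the continuous volume bound replaces your Gilbert--Varshamov count. Both arguments extract the $\log(1/(\gamma\sqrt{k}))$ factor from the same source --- a large alphabet (equivalently, a small covering radius relative to the simplex diameter) --- so your closing remark about needing a large-alphabet code with distance only $\Theta(k)$ mirrors the paper's ratio $0.5/(4\gamma t)$. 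One small gap to flag: your final step ``$V(D) = 2^{O(k)}$, so the code has size $L^{\Omega(k)}$'' needs the constant in that $2^{O(k)}$ to be beaten by $\tfrac{k}{2}\log_2 L$; this is where $\gamma < 0.01/\sqrt{k}$ (hence $L \geq 100$) enters, and it also constrains $\Lambda$ to be a sufficiently large fraction of the gadget spacing (so that $D/m$ stays a modest constant) while still keeping $g > \Lambda$. The constants do work out, but you should verify them explicitly; the paper's proof has the analogous (and equally implicit) requirement that $0.5/(4\gamma t)$ exceed a fixed constant.
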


\begin{proof}
Consider the grid points $G_t$ where $t$ is the smallest power of two that is no larger than $\sqrt{k}$. Let $\cH$ denote any maximal subset of $S := \{\br \in \R^{G^t}_{\geq 0} \mid \|\br\|_1 = 0.5\}$ such that the pairwise $\ell_1$ distance of elements of $\cH$ is at least $2\gamma t$. (In other words, $\cH$ is a $(2\gamma t)$-packing of $S$ under $\ell_1$ distance.) Due to the maximality of $\cH$, the $\ell_1$ balls of radius $2\gamma t$ around elements of $\cH$ must cover $S$. Recall that the volume of $d$-dimensional $\ell_1$ ball of radius $r$ is $(2r)^d / d!$. Thus, we have
\begin{align*}
|\cH| \geq \frac{\vol(S)}{(4\gamma t)^d / d!} = \frac{0.5^d / d!}{(4\gamma t)^d / d!} \geq 2^{\Omega(k \cdot \log(1/(\gamma \sqrt{k})))},
\end{align*}
where in the equality we use the fact that the $d$-simplex has volume $1/d!$ and in the second inequality we use the fact that $t = \Theta(\sqrt{k})$ and $d = t^2 = \Theta(k)$.

Finally, we create our packing $\cP$ as follows: for every $\br \in \cH$, add to $\cP$ the distribution $\bp$ defined by $\bp(a) = \br(a) + 0.5 / t^2$ for all $a \in G_t$.
Now, notice that, for any pair of distributions $\bp, \bq$ over $G_t$, we have $\emd(\bp, \bq) \geq \|\bp - \bq\| / (2t)$ because any two points in $G_t$ are of distance at least $1/t$ apart; thus $\cP$ is indeed a $\gamma$-packing. Furthermore, since $|G_t| \leq t^2 \leq k$, any probability distribution over $G_t$ is $k$-sparse as desired.
\end{proof}

We can now apply the packing framework~\cite{HardtT10} to prove \Cref{thm:packing-lb}:

\begin{proof}[Proof of \Cref{thm:packing-lb}]
Let $\gamma = \Omega(1 / \sqrt{k})$ denote the largest $\gamma$ for which \Cref{lem:emd-packing}  guarantees the existence of $\gamma$-packing $\cP = \{\bp_1, \dots, \bp_T\}$ of size (at least) $10 \cdot e^k$. Let $n_0 := \lfloor k/\eps \rfloor$, and consider any $n \geq n_0$.

Suppose for the sake of contradiction that there is an $\eps$-DP algorithm $\A$ that with probability at least 0.1 outputs an $(\lambda, 0.4\gamma n_0 / n)$-approximation. (Note that, since $k \geq \eps$, $0.4\gamma n_0 / n = \Omega\left(\frac{\sqrt{k}}{\eps n}\right)$.) Let $\bp^*$ be any element of $\cP$. For every distinct $\bp \in \cP$, let $B_\bp$ denote the set of all $\bq$ such that $\emd\left(\bq, \frac{n_0}{n} \cdot \bp + \frac{n - n_0}{n} \cdot \bp^*\right) \leq 0.4\gamma$. Furthermore, let $\bX_{\bp}$ denote the dataset consisting of $n_0$ copies of $\bp$ and $n - n_0$ copies of $\bp^*$.
Note that the average distribution of $\bX_\bp$ is $k$-sparse, since every distribution in $\cP$ is $k$-sparse and has the same support. Hence, from the assumed accuracy guarantee of $\A$, we have
\begin{align*}
\Pr[\A(\bX_\bp) \in B_\bp] \geq 0.1.
\end{align*}
Next, let $\bX^*$ denote the dataset consisting of $n - n_0$ copies of $\bp^*$. From the $\eps$-DP guarantee of $\A$, we have
\begin{align} \label{eq:packing-each-input}
\Pr[\A(\bX^*) \in B_\bp] \geq e^{-\eps n_0} \cdot \Pr[\A(\bX_\bp)] \geq 0.1 \cdot e^{-k}.
\end{align}
Now, notice that $B_\bp$ are disjoint for all $\bp \in \cP$ because $\cP$ forms a $\gamma$-packing. Thus, we have
\begin{align*}
\Pr\left[\A(\bX^*) \in \bigcup_{\bp \in \cP} B_\bp\right] & = \sum_{\bp \in \cP} \Pr\left[\A(\bX^*) \in B_\bp\right] \\
& \overset{\eqref{eq:packing-each-input}}{\geq} |\cP| \cdot 0.1 \cdot e^{-k} > 1,
\end{align*}
a contradiction.
\end{proof}

\section{Dense EMD Aggregation: Tight Bounds}

In this section, we consider the distribution aggregation under EMD but without any sparsity constraint. 
%
We will show that in this case the tight EMD error is $\tilde{\Theta}(1/\sqrt{\eps n})$. Notice that this is a factor $\tilde{\Theta}_\eps(\sqrt{n/k})$ larger than the additive error in our sparse EMD aggregation (\Cref{thm:main-ub}).

\subsection{Algorithm}

In this section, we present a simple algorithm for EMD aggregation that yields the claimed bound.

\begin{theorem}
For any $\eps > 0$, there exists an $\eps$-DP algorithm for EMD aggregation that with probability 0.99 incurs an expected error of $\tilde{O}(1/\sqrt{\eps n})$.
\end{theorem}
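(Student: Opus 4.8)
The plan is to use a \emph{truncated} version of the pyramidal-transform mechanism from \Cref{sec:algo-main}: since no sparsity is assumed we cannot afford to recover fine structure, so instead we only measure the grid down to an intermediate level and pay for the lost resolution through an EMD truncation term. Concretely, set $\ell^* := \min\{\ell, \lceil \tfrac{1}{2}\log_2(\eps n)\rceil\}$ and run \Cref{alg:dp-aggregation-full} on the coarse grid $G_{2^{\ell^*}}$, using all of its levels $0,\dots,\ell^*$, splitting the budget evenly as $\eps_i = \eps/(\ell^*+1)$, and reconstructing via the nonnegative $\ell_1$ minimization of Line~\ref{line:lp} over the coarse grid while keeping \emph{every} cell (i.e., no pruning). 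Privacy is immediate and identical to the argument in the proof of \Cref{thm:main-ub}: adding or removing a user changes $\bs$ by some $\bp_j$ with $\|\bp_j\|_1 \le 1$, so each histogram $\bP_i\bs$ has sensitivity one, the level-$i$ Laplace step is $\eps_i$-DP by \Cref{lem:laplace-dp}, and basic composition together with post-processing yields $(\sum_i \eps_i)$-DP $=\eps$-DP.

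For utility I would bound the error of the unnormalized estimate $\hbs$ against $\bs = \sum_i \bp_i$ by inserting the ``snapped'' vector $\bs^\downarrow$ obtained by moving every unit of mass of $\bs$ to the center of its enclosing level-$\ell^*$ cell. Since each point moves by at most the cell diameter $O(2^{-\ell^*})$ and $\|\bs\|_1 = n$, the triangle inequality gives
\begin{align*}
\|\bs - \hbs\|_{\emd} \;\le\; \emd(\bs,\bs^\downarrow) + \|\bs^\downarrow - \hbs\|_{\emd} \;\le\; O\!\left(n\cdot 2^{-\ell^*}\right) + \|\bs^\downarrow - \hbs\|_{\emd},
\end{align*}
where the coarse-grid EMD between two coarse-supported vectors equals their fine-grid EMD. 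The key observation is that the measurements coincide, $\bP_i\bs = \bP_i\bs^\downarrow$ for all $i\le\ell^*$, so $\hbs$ is exactly the reconstruction from $\by' = \bP\bs^\downarrow + \bnu'$, where $\bnu'$ collects the scaled noise $\tfrac{1}{2^i}\bnu_i$ and $\bP$ is the full pyramidal transform of $G_{2^{\ell^*}}$. Because $\hbs$ minimizes $\|\by'-\bP\bs'\|_1$ over $\bs'\ge\bzero$ and $\bs^\downarrow$ is feasible, \Cref{lem:emd-to-l1-expansion} applied on the coarse grid yields
\begin{align*}
\|\bs^\downarrow - \hbs\|_{\emd} \le \|\bP(\bs^\downarrow-\hbs)\|_1 \le \|\bP\bs^\downarrow - \by'\|_1 + \|\by'-\bP\hbs\|_1 \le 2\|\bnu'\|_1.
\end{align*}

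It then remains to bound $\|\bnu'\|_1$ and optimize $\ell^*$. Since level $i$ has $m_i = 2^{2i}$ cells each carrying independent $\Lap(1/\eps_i)$ noise, $\E\|\bnu'\|_1 = \sum_{i=0}^{\ell^*}\tfrac{1}{2^i}\cdot 2^{2i}\cdot\tfrac{1}{\eps_i} = \tfrac{\ell^*+1}{\eps}\sum_{i=0}^{\ell^*}2^i = \tilde{O}(2^{\ell^*}/\eps)$. With $2^{\ell^*} = \Theta(\sqrt{\eps n})$ the two terms balance: the truncation term is $O(n/\sqrt{\eps n}) = O(\sqrt{n/\eps})$ and the noise term is $\tilde O(\sqrt{\eps n}/\eps) = \tilde O(\sqrt{n/\eps})$, so $\E\|\bs-\hbs\|_{\emd} = \tilde O(\sqrt{n/\eps})$, and Markov's inequality gives this bound with probability $0.99$. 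Finally, \Cref{lem:err-from-normalization} (applicable since $\tilde O(\sqrt{n/\eps}) \le n/2$ whenever $n = \tilde\Omega(1/\eps)$) converts this into the normalized guarantee $\|\ba - \hba\|_{\emd} \le \tfrac{4}{n}\|\bs-\hbs\|_{\emd} = \tilde O(1/\sqrt{\eps n})$.

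I expect the main obstacle to be the truncation/snapping step: one must argue carefully that measuring $\bs$ at the coarse levels is equivalent to measuring $\bs^\downarrow$, that the reconstructed $\hbs$ can be viewed as a distribution supported on coarse cell centers so that every EMD comparison is between equal-mass vectors, and that the $\tilde\Theta$ is genuinely only logarithmic (arising from the even budget split across the $\ell^*+1 = O(\log(\eps n))$ levels and the expectation-to-high-probability conversion). A secondary point is the regime $\Delta < \sqrt{\eps n}$, where $\ell^* = \ell$, the truncation term vanishes, and the bound is dominated by noise; this is consistent with the companion lower bound being stated only for sufficiently large $\Delta$.
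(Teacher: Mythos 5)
Your proof is correct, but it takes a genuinely different route from the paper's. Both arguments begin identically --- snap the input to a coarse grid with $2^{\ell^*} = \Theta(\sqrt{\eps n})$ cells per side, paying $O(n\cdot 2^{-\ell^*}) = O(\sqrt{n/\eps})$ in truncation, and finish with \Cref{lem:err-from-normalization} --- but they diverge in the mechanism and in where the square-root saving comes from. The paper releases a \emph{single} noisy histogram at the bottom coarse level, $\tbs = \bs + \bnu$ with $\bnu \sim \Lap(1/\eps)^{\otimes m_{\ell^*}}$ (no composition, full budget in one shot), projects it onto the nonnegative orthant under $\|\cdot\|_{\emd}$, and then bounds $\E\|\bnu\|_{\emd} \le \E\|\bP\bnu\|_1$ by exploiting \emph{cancellation}: within each level-$i$ pyramid cell the $2^{2(\ell^*-i)}$ independent Laplace noises sum to something of typical size $2^{\ell^*-i}/\eps$ by Cauchy--Schwarz, and each level then contributes $O(2^{\ell^*}/\eps)$. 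You instead add fresh noise at \emph{every} pyramid level with budget $\eps/(\ell^*+1)$ per level and reconstruct via the $\ell_1$ program of Line~\ref{line:lp}; your noise bound is a purely first-moment $\ell_1$ computation with no cancellation argument, at the cost of a $\log(\eps n)$ factor from the budget split. Both land at $\tilde O(\sqrt{n/\eps})$, and your $\tilde O$ genuinely hides only logarithmic factors, so the two are equivalent for the stated theorem; the paper's version is a slightly simpler mechanism (one Laplace invocation, no composition) with a slightly subtler analysis, while yours reuses the sparse-case machinery wholesale with a more elementary analysis. Two small points: the concern you raise about equal-mass EMD comparisons is already absorbed by the residual term in the definition of $\|\cdot\|_{\emd}$, so no extra care is needed there; and your feasibility/optimality argument giving $\|\by' - \bP\hbs\|_1 \le \|\bnu'\|_1$ is exactly right and is the step that replaces the paper's ``$\|\hbs - \tbs\|_{\emd} \le \|\tbs - \bs\|_{\emd}$'' projection step.
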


\begin{proof}
Let $\ell^* = \log \sqrt{\eps n}$ and let $\Delta^* = 2^{\ell^*}$. Our algorithm starts by snapping every point to the closest point in $t$; more formally, we consider user $i$'s input to be $P_t \bp_i$ instead of the original $\bp_i$. This can contribute to at most $\sqrt{\eps n}$ in the additive error, and henceforth we may assume that each input $\bp_i$ belongs to $\R_{\geq 0}^{G_{\Delta^*}}$ (i.e., $\Delta = \Delta^*$) instead.

From here, we just compute the sum $\bs := \sum_{i=1}^n \bp_i$ and add Laplace noise to get $\tbs = \bs + \bnu$ where $\bnu(a) \sim \Lap(1/\eps)$ for each $a \in G_{\Delta}$. Then, we find $\hbs \in \R_{\geq 0}^{G_t}$ that minimizes $\|\hbs - \tbs\|_{\emd}$. (This can be formulated as a linear program.) Finally, output $\hba = \hbs / \|\hbs\|_1$.

It is straightforward that the algorithm is $\eps$-DP, since it consists of only applying the Laplace mechanism once and post-processing its result. Next, we will analyze its utility. From \Cref{lem:err-from-normalization} and Markov's inequality, it suffices to show that $\E[\|\hbs - \bs\|_{\emd}] \leq \tilde{O}(\sqrt{n/\eps})$. Due to the triangle inequality, we have
\begin{align*}
\|\hbs - \bs\|_{\emd} & \leq \|\hbs - \tbs\|_{\emd} + \|\tbs - \bs\|_{\emd} \\
& \leq 2 \cdot \|\tbs - \bs\|_{\emd} = 2\|\bnu\|_{\emd},
\end{align*}
where the latter inequality follows from how $\hbs$ is computed. As a result, it suffices to show that $\E[\|\bnu\|_{\emd}] \leq \tilde{O}(\sqrt{n / \eps})$, which we do next.

From \Cref{lem:emd-to-l1-expansion}, we have
\begin{align*}
& \E[\|\bnu\|_{\emd}] \\
& \leq \E[\|\bP\bnu\|_1] \\
&= \sum_{i \in [\ell^*]} 2^{-i} \E[\|\bP_i\bnu\|_1] \\
&= \sum_{i \in [\ell^*]} 2^{-i} \sum_{c \in C^{2^i}} \E\left[\left|\sum_{p \in c \cap G_{\Delta}} \bnu(p) \right|\right] \\
\intertext{by the Cauchy--Schwarz inequality}
&\leq \sum_{i \in [\ell^*]} 2^{-i} \sum_{c \in C^{2^i}} \sqrt{\E\left[\left(\sum_{p \in c \cap G_{\Delta}} \bnu(p) \right)^2\right]} \\
\intertext{by the independence of $\bnu(i)$}
&= \sum_{i \in [\ell^*]} 2^{-i} \sum_{c \in C^{2^i}} \sqrt{\E\left[\sum_{p \in c \cap G_{\Delta}} \bnu(p)^2\right]} \\
&= \sum_{i \in [\ell^*]} 2^{-i} \sum_{c \in C^{2^i}} \sqrt{|p \in c \cap G_{\Delta}| \cdot \frac{2}{\eps^2}} \\
&\leq O\left(\sum_{i \in [\ell^*]} 2^{-i} \sum_{c \in C^{2^i}} 2^{\ell^* - i} \cdot \frac{1}{\eps} \right) \\
&= O\left(\sum_{i \in [\ell^*]} 2^{-i} \cdot 2^{2i} \cdot 2^{\ell^* - i} \cdot \frac{1}{\eps} \right) \\
&= O\left(\ell^* \cdot 2^{\ell^*} / \eps\right) \\
&= O\left(\log(\eps n) \cdot \sqrt{\eps n} / \eps\right) \\
&= \tilde{O}(\sqrt{n / \eps}). \qedhere
\end{align*}
\end{proof}

\subsection{Lower Bound}

A matching lower bound can be immediately obtain from plugging in $k = \lfloor \eps n \rfloor$ into our lower bound for the sparse case (\Cref{thm:packing-lb}). This immediately yields the following.

\begin{corollary}
For any $\eps > 0$ and any integer $n \geq \eps$, no $\eps$-DP algorithm can incur an error at most $o(1/\sqrt{\eps n})$ for EMD aggregation with probability at least 0.1.
\end{corollary}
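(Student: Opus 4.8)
The plan is to obtain this dense lower bound as an immediate corollary of the sparse lower bound in \Cref{thm:packing-lb}, instantiated at $k = \lfloor \eps n \rfloor$. The key observation is that the ``pure additive'' error measured here is only a \emph{stronger} requirement than the $(\lambda, \kappa)$-approximation guarantee appearing in \Cref{thm:packing-lb}: any output $\hba$ satisfying $\emd(\hba, \ba) \leq \kappa$ automatically satisfies $\emd(\hba, \ba) \leq \lambda \cdot \min_{k\text{-sparse } \ba'} \emd(\ba', \ba) + \kappa$ for every $\lambda > 0$, since the multiplicative term is nonnegative. Consequently, an $\eps$-DP dense-aggregation algorithm achieving error $\kappa$ with probability $0.1$ is, in particular, an $\eps$-DP algorithm producing a $(\lambda, \kappa)$-approximation for $k$-sparse EMD aggregation with probability $0.1$, for any fixed $\lambda$ (say $\lambda = 1$).

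With this reduction in hand, I would set $k = \lfloor \eps n \rfloor$ and verify the two hypotheses of \Cref{thm:packing-lb}. The condition $n \geq k/\eps$ is immediate because $k \leq \eps n$, and $k \geq \eps$ follows from the assumption $n \geq \eps$ in the regime where the bound is meaningful (equivalently, once $\eps n$ is above a fixed constant). For this choice we have $\sqrt{k}/(\eps n) = \Theta(\sqrt{\eps n}/(\eps n)) = \Theta(1/\sqrt{\eps n})$, using that $\lfloor \eps n \rfloor = \Theta(\eps n)$ whenever $\eps n \geq 1$. Therefore, if a putative $\eps$-DP algorithm incurred error $o(1/\sqrt{\eps n}) = o(\sqrt{k}/(\eps n))$ with probability $0.1$, then by the reduction above it would yield a $\bigl(1, o(\sqrt{k}/(\eps n))\bigr)$-approximation for $k$-sparse aggregation with probability $0.1$, directly contradicting \Cref{thm:packing-lb}.

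There is essentially no hard step here: the whole argument is a substitution combined with the trivial monotonicity remark that discarding the (nonnegative) multiplicative slack only strengthens the guarantee, so no new construction or packing argument is needed. The only points deserving a moment of care are the arithmetic $\sqrt{\lfloor \eps n \rfloor}/(\eps n) = \Theta(1/\sqrt{\eps n})$ and confirming that the floor does not break the hypothesis $k \geq \eps$; both are routine for $\eps n$ above a fixed constant, and the separate ``$\Delta$ sufficiently large'' convention inherited from \Cref{thm:packing-lb} supplies the resolution requirement.
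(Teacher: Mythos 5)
Your proposal is correct and matches the paper's approach exactly: the paper obtains this corollary by plugging $k = \lfloor \eps n \rfloor$ into the sparse lower bound (\Cref{thm:packing-lb}), which is precisely your reduction. Your additional remarks — that a pure additive guarantee implies a $(\lambda,\kappa)$-approximation since the multiplicative term is nonnegative, and the arithmetic $\sqrt{\lfloor \eps n\rfloor}/(\eps n) = \Theta(1/\sqrt{\eps n})$ — simply spell out details the paper leaves implicit.
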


\section{Additional Experiment Results}
\label{app:exp}

\subsection{Additional Results on Geo Datasets}

Recall the second (varying the number of users) and third (varying the resolution) experimental setups described in Section~\ref{subsec:exp-results-main}. We had presented the results on the EMD metric in \Cref{fig:user-resolution}. Results on other metrics can be found in \Cref{fig:varying-user} and \Cref{fig:varying-resolution}, respectively. Other parameters are fixed as before, e.g., $\eps = 10$. The trends are similar across all metrics. 

As predicted by theory, the utility of both the baseline and our algorithm increases with $n$, the number of users.  On the other hand, the threshold variants of the baseline initially performs better as $n$ increases but after some point it starts to become worse. A possible reason is that, as $n$ increases, the sparsity level (i.e., the number of non-zero cells after aggregation) also increases, and thus thresholding eventually becomes too aggressive and loses too much information.

\begin{figure*}
\centering
\includegraphics[trim={6cm 0 6cm 0},clip,scale=0.275]{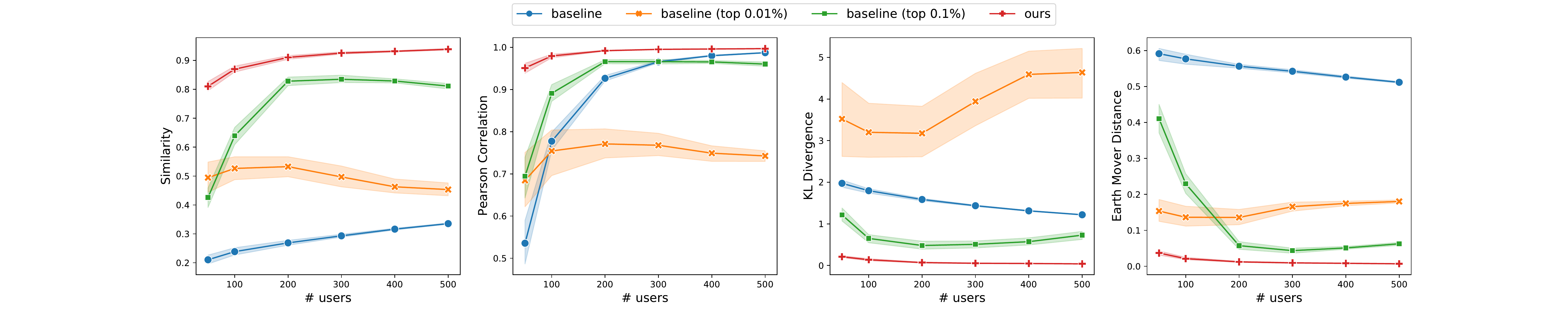}
\caption{Effect of the \# users on different metrics ($\eps = 10$). Shaded areas indicate 95\% confidence interval. 
\label{fig:varying-user}}
\end{figure*}

In terms of the resolution $\Delta$, our algorithm's utility remains nearly constant across all metrics as $\Delta$ increases, whereas the baseline's utility suffers. The performance of the thresholding variants of the baseline is less predictable. Both the top-0.01\% and top-0.001\% variants start off with lower utility compared to baseline, which might be due to the fact, when $\Delta$ is small, thresholding leaves too few grid cells with non-negative values. On the other hand, the top-0.01\% variant's utility also starts to suffer as $\Delta$ increases from 256 to 512. This may represent the opposite problem: top-0.01\% leaves too many grid cells with noisy positive values, resulting in the decrease in utility.

\begin{figure*}
\centering
\includegraphics[trim={6cm 0 6cm 0},clip,scale=0.275]{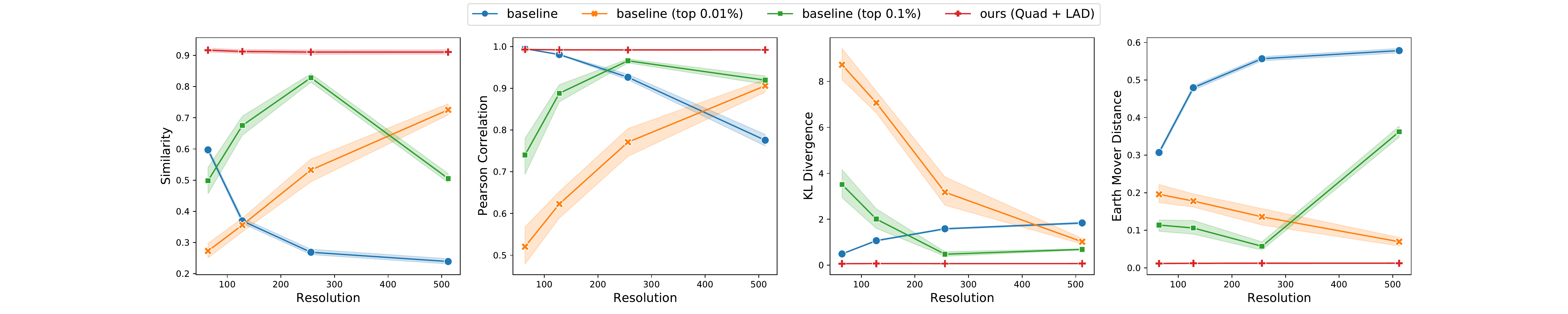}
\caption{Effect of the resolution on different metrics ($\eps = 10$). Shaded areas indicate 95\% confidence interval. 
\label{fig:varying-resolution}}
\end{figure*}

\begin{figure*}
\centering
\includegraphics[trim={6cm 0 6cm 0},clip,scale=0.275]{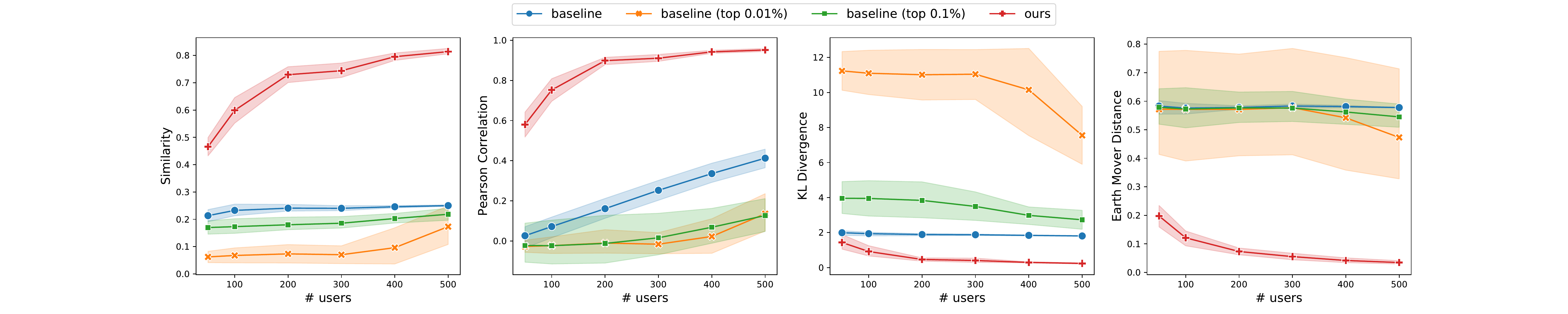}
\caption{Effect of the \# users on different metrics ($\eps = 1$). Shaded areas indicate 95\% confidence interval. 
\label{fig:varying-user-eps1}}
\end{figure*}

\begin{figure*}
\centering
\includegraphics[trim={6cm 0 6cm 0},clip,scale=0.275]{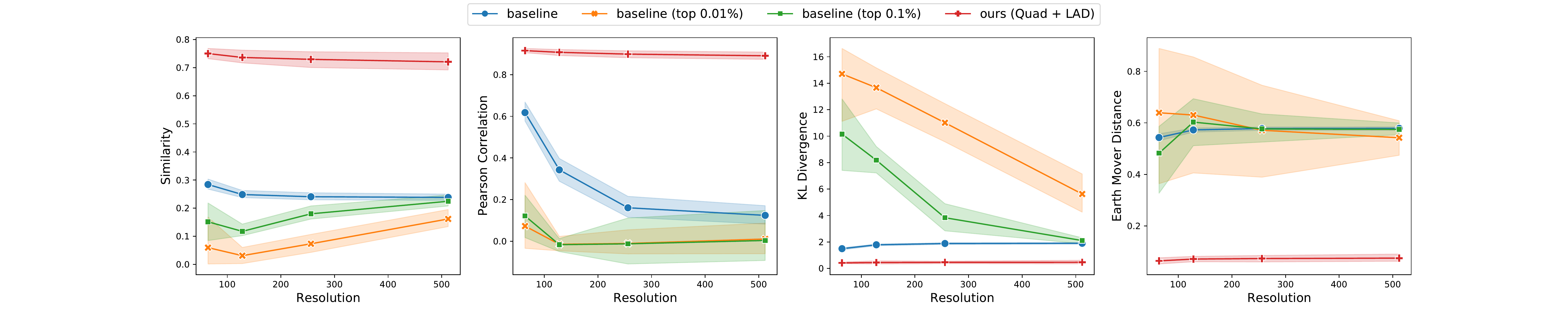}
\caption{Effect of the resolution on different metrics ($\eps = 1$). Shaded areas indicate 95\% confidence interval. 
\label{fig:varying-resolution-eps1}}
\end{figure*}

We also include the results for similar setups but with privacy parameter $\eps = 1$ in \Cref{fig:varying-user-eps1} and \Cref{fig:varying-resolution-eps1}. The trends of the vanilla baseline and our algorithm are identical when $\eps = 10$. The thresholding variants however have low utility throughout. In fact, for most metrics and parameters, they are worse than the original baseline; it seems plausible that, since the noise is much larger in this case of $\eps = 1$, top grid cells kept are mostly noisy and thus the utility is fairly poor.

\subsection{Results on the Salicon dataset}
\label{subsec:salicon-exp}

We also experiment on the Salicon image saliency dataset~\cite{jiang2015salicon}, \salicon, available from \url{http://salicon.net/}.  This dataset is a collection of saliency annotations on the popular Microsoft Common Objects in Context (MS COCO) image database, where each image has $640 \times 480$ pixels.  For the purposes of our experiment, we assume the data is of the following form: for each (user, image) pair, it consists of a sequence of coordinates in the image where the user looked at.  

We repeat the first set of experiments on 38 randomly sampled images (with $\geq 50$ users each) from \salicon. The natural images were downsized to a fixed resolution of $\Delta_1 \times \Delta_2 = 320 \times 240$. For each $\eps \in \{1, 2, 5, 8, 10, 20\}$, we run our algorithms together with the baseline and its variants on all 38 natural images, with 2 trials for each image. In each trial, we sample a set of 50 users and run all the algorithms; we then compute the metrics between the true heatmap and the estimated heatmap. 

The average of these metrics over the 76 runs is presented in~\Cref{fig:supp-comparison-result}, together with the 95\% confidence interval. As can be seen in the figure, the baseline has rather poor performance across all metrics, even for large $\eps = 20$. We experiment with several values of $t$ for the thresholding variant and still observe an advantage of our algorithm consistently across all metrics. We also include examples of the resulting heatmaps with original image overlaid from each approach in ~\Cref{fig:supp-visualization}.

\begin{figure*}[h]
\centering
\includegraphics[trim={6cm 0 6cm 0},clip,scale=0.275]{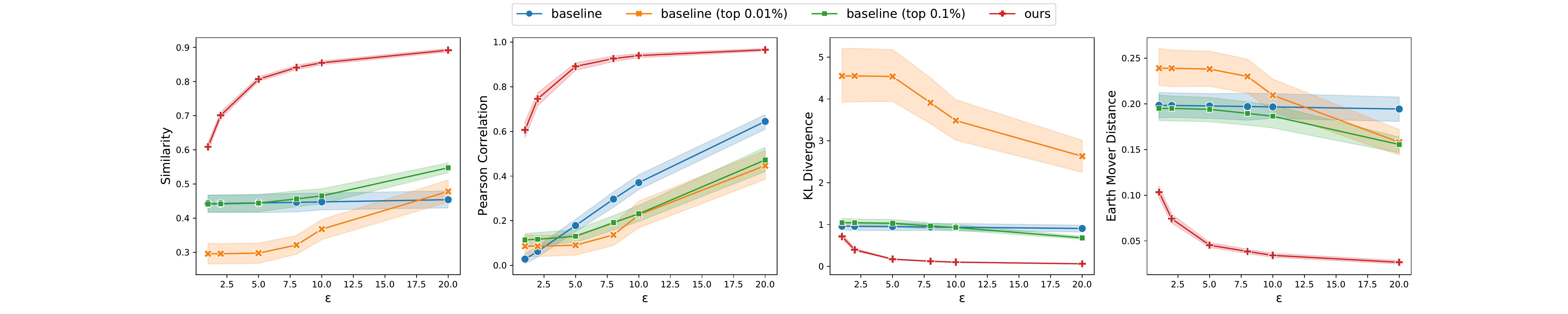}
\caption{Metrics averaged over 76 runs when varying $\eps$. Shaded areas indicate 95\% confidence interval.
\label{fig:supp-comparison-result}}
\end{figure*}

\begin{figure*}[h]
\centering
\includegraphics[trim={7cm 1.5cm 6cm 1.5cm},clip,scale=0.275]{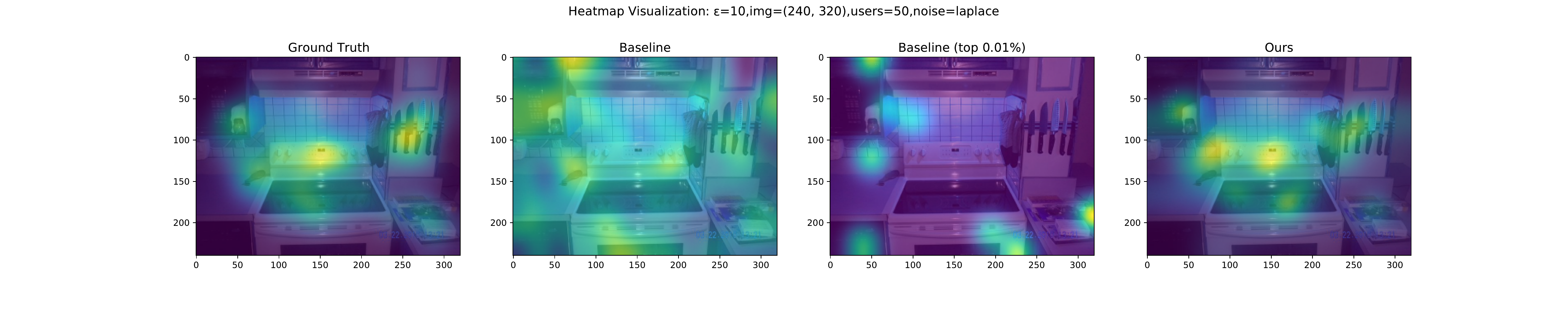} 
\includegraphics[trim={7cm 1.5cm 6cm 1.5cm},clip,scale=0.275]{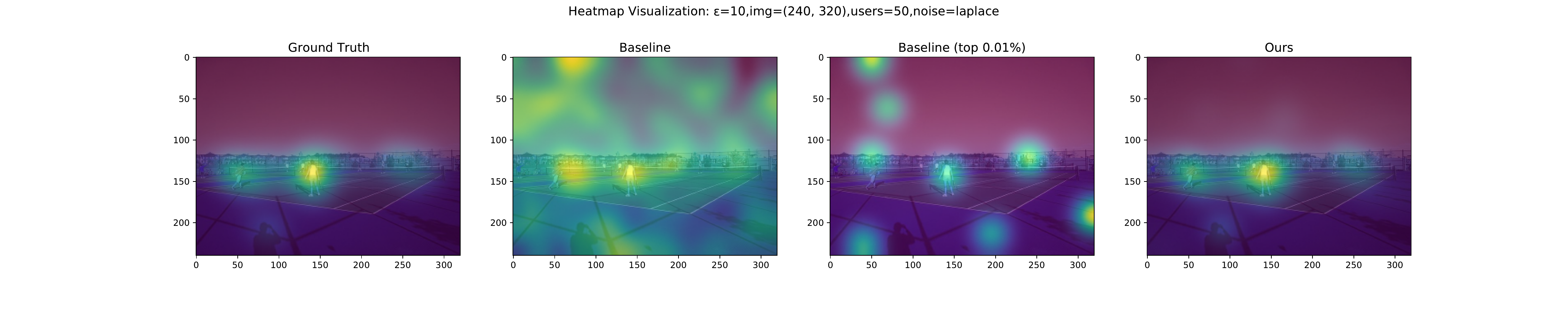} 
\caption{Example visualization of different algorithms for two different natural images from the Salicon dataset (top and bottom rows) for $\eps=10$ and $n=50$ users. The algorithms from left to right are: original heatmap (no privacy), baseline, baseline with top 0.01\% and our algorithm. \label{fig:supp-visualization}}
\end{figure*}

\subsection{Results on Synthetic Data}

Finally, we experiment on synthetically generated datasets. We generate the  data by sampling from a uniform mixture of two-dimensional Gaussian distributions, while the mean and covariance matrix of each Gaussian is randomly generated. The number of Gaussians, their covariances, and the  number of samples from the mixture, are changed to control the sparsity of data.  
Throughout the experiments, we fix the number of users to 200, $\eps = 1$, and use $w = 20$.

We run two sets of experiments to demonstrate the effect of sparsity levels. In the first set of experiments, we fix the number of clusters to $20$. We then sample points from this Gaussian mixture until we reach different sparsity values. For each sparsity level, we generate 10 datasets. The experiment results are shown in \Cref{fig:sparsity-v-utility-20-clusters}. (Note that we bucketize the sparsity level.)

In the second set of experiments, we vary both the sparsity levels and the number of Gaussians in the mixture. Specifically, we use $5$ or $10$ Gaussians to produce low sparsity level $< 5\%$, $20$ clusters to produce mid-level sparsity like $\unsim 15\%$ and $\unsim 30\%$, and $80$ clusters to produce sparsity level like $\unsim 75\%$. 
The results for this set of experiments are shown in \Cref{fig:sparsity-v-utility-varying-clusters}.

In both experiments, the rough trends are fairly similar: our algorithm's utility decreases as the sparsity level increases, wheres the baseline's utility increases. Upon closer inspection, however, the rate at which our algorithm's utility decreases is slower when we fixed the number of Gaussians. Indeed, this should be unsurprising: our guarantee is with respect to the best $k$-sparse approximation (in EMD), which remains roughly constant for the same mixture of Gaussians, regardless of the sparsity. As such, the utility does not decrease too dramatically if we keep the mixture distribution the same. On the other hand, when we also increase the number of Gaussians in the mixture, even the $k$-sparse approximation fails to capture it well\footnote{Note that while we do not choose $k$ explicitly, it is governed by the parameter $w$, which we fix to $20$ throughout.}, resulting in faster degradation in utility.

\begin{figure*}
\centering
\includegraphics[trim={0cm 0 0cm 0},clip,scale=0.275]{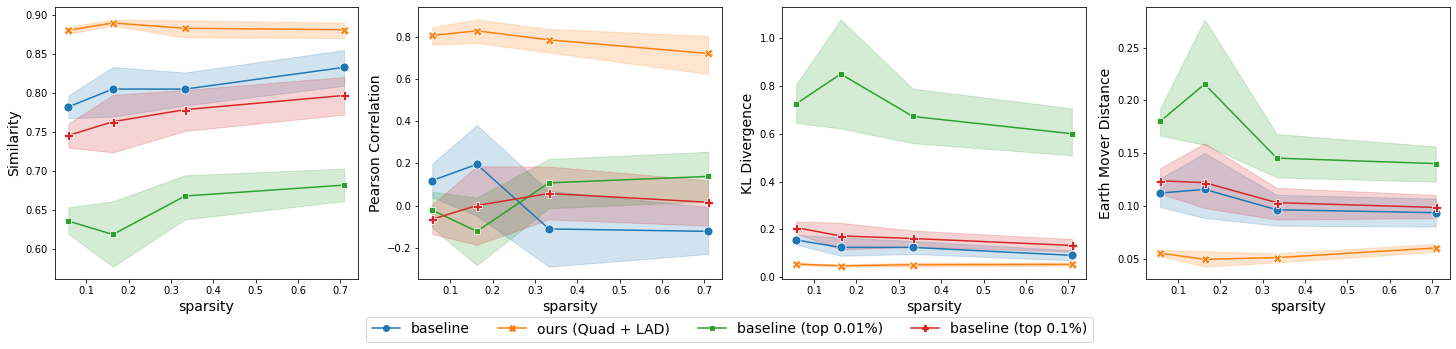}
\caption{Metrics averaged over 10 runs when varying the sparsity of the dataset; the number of Gaussians in the mixtures is kept constant at $20$ respectively. Shaded areas indicate 95\% confidence interval.
\label{fig:sparsity-v-utility-20-clusters}}
\end{figure*}

\begin{figure*}
\centering
\includegraphics[trim={0cm 0 0cm 0},clip,scale=0.275]{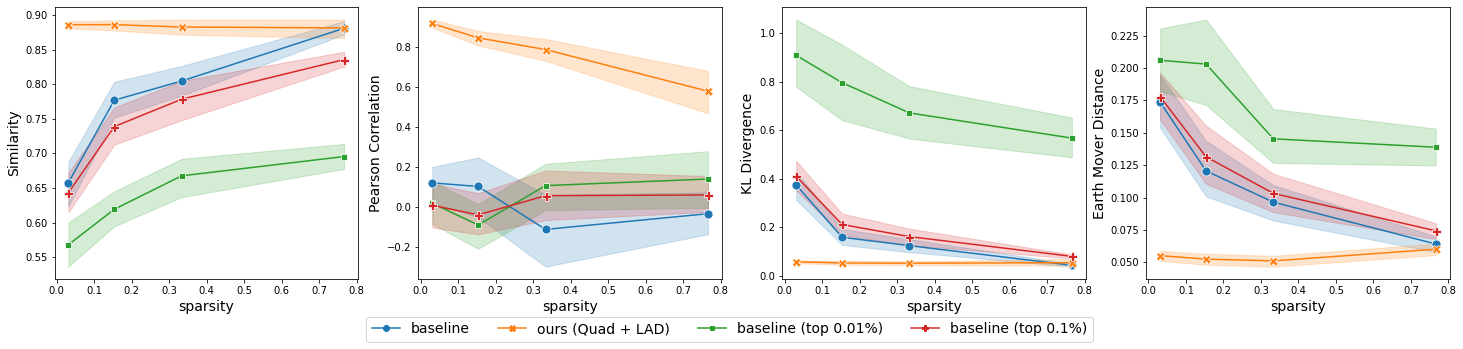}
\caption{Metrics averaged over 10 runs when varying both the sparsity of the dataset and the number of Gaussians in the mixture; the number of Gaussians is increased from 5 to 10, 20, and 80. Shaded areas indicate 95\% confidence interval.
\label{fig:sparsity-v-utility-varying-clusters}}
\end{figure*}

\section{Implementation in Shuffle Model}

As alluded to earlier, our algorithm can be easily adapted to any model of DP that supports a ``Laplace-like'' mechanism. In this section, we illustrate this by giving more details on how to do this in the shuffle model of DP~\cite{bittau17,erlingsson2019amplification,CheuSUZZ19}. Recall that, in the shuffle model, each user produces messages based on their own input. These messages are then randomly permuted by the shuffler and forwarded to the analyzer, who then computes the desired estimate. The DP guarantee is enforced on the shuffled messages.

To implement our algorithms in the shuffle model, we may use the algorithms of~\citet{balle_merged,ghazi2019private} to compute $\by'$ in \Cref{alg:dp-aggregation-full} (instead of running the central DP algorithm); the rest of the steps remain the same. To compute the estimate $\by'$, the algorithm from~\citet{balle_merged} works as follows where $B \in \N$ is a parameter (recall that $\Z_q := \{0, \dots, q - 1\}$ and the summation is modulo $q$):
\begin{itemize}
\item Let $q = Bn, m = m_0 + \cdots + m_\ell$ and\footnote{Note that the parameter is set as in Corollary 6 and Lemma 5.2 of \citet{balle_merged} to achieve $(\eps, \delta)$-shuffle DP.} $$r = \left\lceil \frac{2\ln(e^\eps + 1) + 2\ln(m/\delta) + \ln q}{\ln n} + 1\right\rceil.$$
\item Each user $j$: 
\begin{itemize}
\item Compute $\by^j := [\bP_0 \bp_j \bP_2 \bp_j \cdots \bP_\ell \bp_j]$.
\item Scaled \& round: $\bz^j := \lfloor B\by^j \rfloor$ where the floor is applied coordinate-wise.
\item Add noise: $$\bz'^j := \bz^j + \bigotimes_{i \in [\ell]} \Polya(1/n, e^{-\eps_i})^{\otimes m_i}.$$
\item Split \& Mix: For  sample $\bw^1, \dots, \bw^{r - 1}$ i.i.d. at random from $\Z_q^m$ and then let $\bw^r = \bz^j - \bw^1 - \cdots - \bw^{r - 1}$.
\item For each $a \in [m]$ and $b \in [r]$, send a tuple $(a, w^b_a)$ to the shuffler\footnote{Note that this ``coordinate-by-coordinate'' technique is from~\cite{CheuSUZZ19}.}.
\end{itemize}
\item Once the analyzer receives all the tuples, for each $a \in [m]$, they sum second components of all messages whose first component is $a$ up (over $\Z_q$), and let the $a$th coordinate of $\bz'$ be equal to the sum. Finally, let $\by' = \bz' / B$.
\end{itemize}
We note that the Polya distribution and its parameters are set in such a way that the sum of all the noise is exactly the discrete Laplace distribution. Please refer to~\cite{balle_merged} for more detail.

Choosing the parameter $B$ will result in a tradeoff between the utility and the communication complexity. For smaller $B$, the communication will be smaller but there will be more error from rounding. On the other hand, larger $B$ results in larger communication but more accurate simulation of the central DP mechanism. (In fact, as $B \to \infty$, the resulting estimate $\by'$ converges in distribution to that of the central DP mechanism.)

\subsection*{Empirical Results}
We demonstrate the effect of choosing $B$ on communication and utility by running an experiment on the \salicon dataset (where the setup is as in~\Cref{subsec:salicon-exp}) with fixed $n = 50$ and $\eps=5, \delta = 10^{-5}$.

\begin{figure}[h!]
\centering
\includegraphics[width=0.3\textwidth]{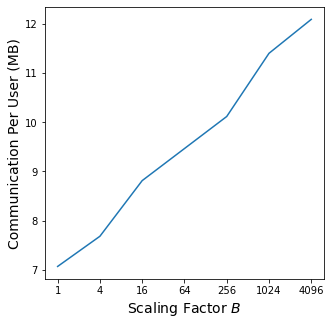}
\caption{Communication per user in shuffle DP protocol when varying the scaling factor $B$. 
\label{fig:comm-shuffle}}
\end{figure}

\begin{figure*}[h!]
\centering
\includegraphics[trim={3cm 0 0 0},clip,scale=0.3]{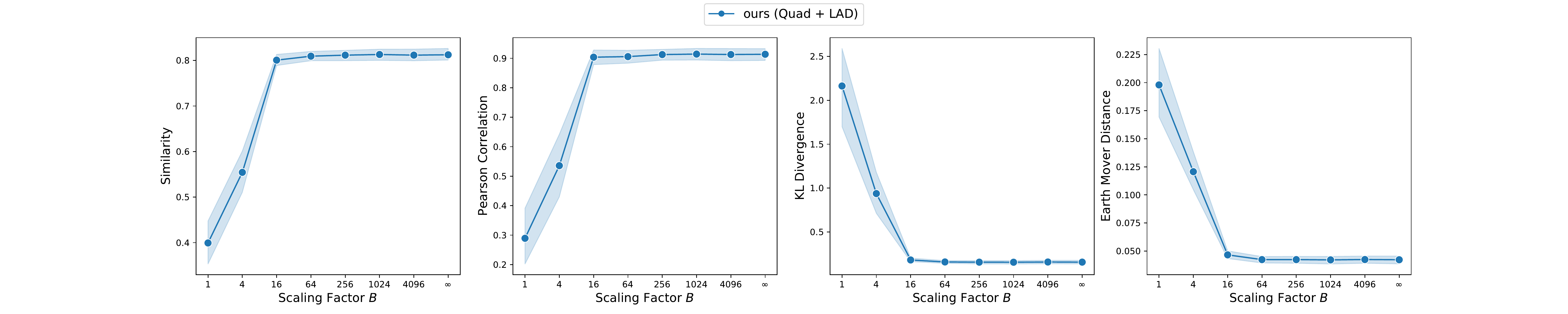}
\caption{Metrics averaged over 20 runs when varying the scaling factor $B$. Shaded areas indicate 95\% confidence interval. We use the convention $B = \infty$ for the central DP algorithm (which does not apply any rounding). 
\label{fig:util-shuffle}}
\end{figure*}

\subsubsection*{Communication} We start with the communication. Recall from the aforementioned algorithm that each user sends $r m$ messages, each of length $\lceil \log_2(mq) \rceil$ bits. This gives the total communication per user of $rm \lceil \log_2(mq) \rceil$ bits. The concrete numbers are shown in \Cref{fig:comm-shuffle}. Note that this is independent of the input datasets, except for the resolution (which effects $m$). Recall from earlier that we use resolution of $320 \times 240$ for the \salicon dataset.

\subsubsection*{Utility} As for the utility, we run the experiments on the \salicon dataset where the setup is as in~\Cref{subsec:salicon-exp}. The utility metrics are reported in \Cref{fig:util-shuffle}. In all metrics, the utility is nearly maximized already when $B$ is no more than 256. As per \Cref{fig:comm-shuffle}, this would correspond to approximately 10MB communication required per user.
\fi

\end{document}